\newcommand{\norm}[1]{\left\lVert#1\right\rVert}
\newcommand{\asto}{\overset{a.s.}{\to}}
\newcommand{\iid}{\overset{i.i.d.}{\sim}}
\theoremstyle{plain}
\newtheorem{theorem}{Theorem}
\newcommand{\real}{\mathbb{R}}
\DeclareMathOperator*{\argmax}{arg\,\!max}
\DeclareMathOperator*{\var}{Var}
\newcommand{\E}{\mathbb{E}}
\newcommand{\N}{\mathbb{N}}
\definecolor{darkgreenClj}{rgb}{0.25,.5,0.25}
\definecolor{blueClj}{rgb}{0,0.33,0.66}
\definecolor{redClj}{rgb}{0.66,0.0,0.0}
\definecolor{purpleClj}{rgb}{0.33,0,0.66}
\definecolor{cyanClj}{rgb}{0.0,0.5,0.5}
\definecolor{orangeClj}{rgb}{0.75,0.35,0.0}
\definecolor{grayClj}{rgb}{0.4,0.4,0.4}
\em \color{gray}, 
\ttfamily\color{blueClj},
\ttfamily\color{redClj},
\ttfamily\color{cyanClj},
\ttfamily\color{purpleClj},
\ttfamily\color{orangeClj},
\ttfamily\color{darkgreenClj},
\titlespacing\section{0pt}{4pt plus 1pt minus 2pt}{-2pt plus 1pt minus 0pt}
\titlespacing\subsection{0pt}{4pt plus 1pt minus 2pt}{-2pt plus 1pt minus 0pt}
\titlespacing\subsubsection{0pt}{4pt plus 1pt minus 2pt}{-2pt plus 1pt minus 0pt}
\icmltitlerunning{On Nesting Monte Carlo Estimators}
\begin{document}
	
	\twocolumn[
	\icmltitle{On Nesting Monte Carlo Estimators}
	
	
	
	\icmlsetsymbol{equal}{*}
	
	\begin{icmlauthorlist}
		\icmlauthor{Tom Rainforth}{ox}
		\icmlauthor{Robert Cornish}{ox,ox2}
		\icmlauthor{Hongseok Yang}{kaist}
		\icmlauthor{Andrew Warrington}{ox2}
		\icmlauthor{Frank Wood}{ubc}
	\end{icmlauthorlist}
	
	\icmlaffiliation{ox}{Department of Statistics, University of Oxford}
	\icmlaffiliation{ox2}{Department of Engineering, University of Oxford}
	\icmlaffiliation{ubc}{Department of Computer Science, University of British Columbia}
	\icmlaffiliation{kaist}{School of Computing, KAIST}
	
	\icmlcorrespondingauthor{Tom Rainforth}{rainforth@stats.ox.ac.uk}
	
	\icmlkeywords{Monte Carlo, nested Monte Carlo, nested inference}
	
	\vskip 0.3in
	]
	\printAffiliationsAndNotice{}	
\setlength{\abovedisplayskip}{3pt}
\setlength{\belowdisplayskip}{3pt}
\setlength{\abovedisplayshortskip}{3pt}
\setlength{\belowdisplayshortskip}{3pt}
%


\begin{abstract}
Many problems in machine learning and statistics involve nested expectations and thus do not permit
conventional Monte Carlo (MC) estimation.  For such problems, one must nest estimators, such that terms in
an outer estimator themselves involve calculation of a separate, nested, estimation.  We investigate
the statistical implications of nesting MC estimators, including 
cases of multiple levels of nesting, and establish the conditions under which they
converge. We derive corresponding rates of convergence
and provide empirical evidence that these rates are observed in practice.
We further establish a number of pitfalls that can arise from na\"{i}ve nesting of MC estimators,
provide guidelines about how these can be avoided, 
and lay out novel methods for reformulating certain classes of nested expectation problems
into single expectations, leading to improved convergence rates. 
We demonstrate the applicability of our work by using our results to develop
a new estimator for discrete Bayesian experimental design problems and derive error bounds
for a class of variational objectives.
\end{abstract}

\vspace{-16pt}


\section{Introduction}
\label{sec:intro}


Monte Carlo (MC) methods are used throughout the quantitative sciences.
For example, they have become a ubiquitous means of carrying out approximate Bayesian inference \citep{doucet2001introduction,gilks1995markov}.
The convergence of MC estimation
 has been considered extensively in the
literature \citep{durrett2010probability}.  However, the implications
arising from the \emph{nesting} of MC schemes, where terms in the integrand depend on the
result of separate, nested, MC estimators, is generally less well known.
This paper examines the convergence of such nested Monte Carlo (NMC) methods.  

Nested expectations occur in wide variety of problems from portfolio risk management
\citep{gordy2010nested} to stochastic control \citep{belomestny2010regression}. 
In particular, 
simulations of agents that reason about other agents often include nested expectations.
Tackling such problems requires some form of nested estimation scheme like NMC.

A common class of nested expectations is
doubly-intractable inference problems~\citep{murray2006mcmc,liang2010double}, where
the likelihood is only known up to a parameter-dependent normalizing constant. This can occur, for
example, when nesting
probabilistic programs~\citep{mantadelis2011nesting,le2016nested}.  Some problems
are even multiply-intractable, such that they require multiple levels of nesting to
encode~\citep{stuhlmuller2014reasoning}.  
Our results can be used to show that changes are required to the approaches currently
employed by probabilistic programming systems to ensure consistent estimation for such
problems~\citep{rainforth2017thesis,rainforth2017nestpp}.
%


The expected information gain used in
Bayesian experimental design \citep{chaloner1995bayesian} 
requires
the calculation of an entropy of a marginal distribution and therefore the
expectation of the logarithm of an expectation.  By extension, any Kullback-Leibler
divergence where one of the terms is a marginal distribution also involves a nested expectation.  Hence, our results have important implications for 
relaxing mean-field assumptions, or using different bounds, in variational
inference \citep{hoffman2015stochastic,naesseth2017variational,maddison2017filtering} and deep generative models
\citep{burda2015importance,le2017auto}.

Certain nested estimation problems can be tackled by pseudo-marginal methods
\citep{beaumont2003estimation,andrieu2009pseudo,andrieu2010particle}.
These consider inference problems where the likelihood is intractable, 
but can be estimated unbiasedly.
From a theoretical perspective, they
reformulate the problem in an extended space with auxiliary variables that
are used to represent the stochasticity in the likelihood computation, enabling the
problem to be expressed as a single expectation.

Our work goes beyond this by considering cases in which a non-linear mapping is
applied to the output of the inner expectation, 
(e.g.~the logarithm in the experimental design example), 
prohibiting such reformulation.
We demonstrate that the construction of consistent NMC algorithms is possible,
establish convergence rates, and provide empirical evidence that these rates are observed in practice.
Our results show that whenever an outer estimator depends non-linearly on an inner
estimator, then the number of samples used in \emph{both} the inner and outer estimators
must, in general, be driven to infinity for convergence. We extend our results
to cases of repeated nesting and show that the optimal NMC convergence rate 
is $O(1/T^{\frac{2}{D+2}})$ where $T$ is the total
number of samples used in the estimator and $D$ is the nesting depth (with $D=0$ being conventional MC), whereas
na\"{i}ve approaches only achieve a rate of $O(1/T^{\frac{1}{D+1}})$.  
We further lay out methods for reformulating certain classes of nested expectation problems
into a single expectation, allowing usage of conventional MC estimation 
schemes with superior convergence rates than na\"{i}ve NMC.  Finally, we use our results
to make application-specific advancements in Bayesian experimental design and variational auto-encoders.


\subsection{Related Work}

Though the convergence of NMC has previously received little attention within the machine learning literature,
a number of special cases having been investigated in other fields, sometimes under the
name of \textit{nested simulation}~\citep{longstaff2001valuing,belomestny2010regression,gordy2010nested,broadie2011efficient}.
While most of this literature focuses on particular application-specific non-linear 
mappings, a convergence bound for a wider range of problems was shown by \citet{hong2009estimating} 
and recently revisited in the context of rare-event problems by~\citet{fort2016mcmc}.  The latter paper further
considers the case where samples in the outer estimator originate from a Markov chain.
Compared to this previous work, ours is the first to consider multiple levels of nesting, applies
to a wider range of non-linear mappings, and provides more precise convergence rates.
By introducing new results, outlining special cases, providing empirical assessment, and examining
specific applications, we provide a unified investigation and practical guide nesting MC
estimators in a machine learning context.  We begin to realize the potential significance of this
by using our theoretical results to make advancements in a number of specific application
areas.

Another body of literature related to our work is in the study of the convergence of Markov chains
with approximate transition kernels~\cite{rudolf2015perturbation,alquier2016noisy,medina2016stability}.
The analysis in this work is distinct, but complementary, to our own, focusing on the impact of a
known bias on an MCMC chain, whereas our focus is more on the quantifying this bias.  Also
related is the study of techniques for variance reduction, such as multilevel 
MC~\cite{heinrich2001multilevel,giles2008multilevel}, and bias reduction, such as 
the multi-step Richardson-Romberg method~\cite{pages2007multi,lemaire2017multilevel} and 
Russian roulette sampling~\citep{lyne2015russian}, many of which are applicable in a NMC
context and can improve performance.
%

%




%

\section{Problem Formulation}
\label{sec:prob-form}

The key idea of MC is that the expectation of an arbitrary function 
$\lambda \colon \mathcal{Y} \rightarrow \mathcal{F} \subseteq \real$ under a probability distribution $p(y)$ for its input $y \in \mathcal{Y}$ can be approximated using:
\begin{align}
I &= \mathbb{E}_{y\sim p(y)} \left[\lambda(y)\right] \displaybreak[0] \\
\label{eq:MC}
&\approx \frac{1}{N} \sum_{n=1}^{N} \lambda(y_n) \quad \text{where} \quad y_n \iid p(y).
\end{align}
In this paper, we consider the case that $\lambda$ is itself intractable, defined only in terms of a functional mapping of an expectation. Specifically, $\lambda(y) = f(y,\gamma(y))$
where we can evaluate $f \colon \mathcal{Y} \times \Phi \rightarrow \mathcal{F}$ exactly for a given $y$ and $\gamma (y)$, but $\gamma(y)$ is the output of the following 
intractable expectation of another variable $z \in \mathcal{Z}$:
\begin{subequations}
	\label{eq:gamma}
	\begin{align}
	\label{eq:gamma_1}
	\text{either}\quad
	\gamma(y) &=  \mathbb{E}_{z\sim p(z | y)} \left[\phi(y,z)\right] \\
	\label{eq:gamma_2}
	\text{or} \quad \gamma(y) &= \mathbb{E}_{z\sim p(z)} \left[\phi(y,z)\right]
	\end{align}
\end{subequations}
depending on the problem, with $\phi \colon \mathcal{Y} \times \mathcal{Z} \rightarrow \Phi$.
All our results apply to both cases, but we will focus on~\eqref{eq:gamma_1} for clarity.
Estimating $I$ involves computing an integral over $z$ for each value of $y$ in the outer integral.
We refer to the approach of tackling both integrations using MC
as \emph{nested Monte Carlo} (NMC):
\begin{subequations}
\label{eq:nested-mc}
\begin{align}
\hspace{-4pt} I =  \mathbb{E}\left[f(y,\gamma(y))\right] 
&\approx I_{N,M} = \frac{1}{N}  \sum_{n=1}^{N} f(y_n,(\hat{\gamma}_M)_n) \label{eq:nested-outer}
\intertext{where $y_n \iid p(y)$ \text{and}}
(\hat{\gamma}_M)_n &= \frac{1}{M}  \sum_{m=1}^{M}  \phi(y_n,z_{n,m}) \label{eq:nested-inner}
\end{align}
\end{subequations}
where each $z_{n,m} \sim p(z | y_n)$ are independently sampled.
In Section~\ref{sec:convergence} we will build on this further by considering cases with multiple
levels of nesting, where calculating $\phi(y,z)$ involves computation of an intractable (nested) expectation.
%


\section{Convergence of Nested Monte Carlo}
\label{sec:convergence}
%
We now show that approximating $I \approx I_{N,M}$ is in principle possible, at least when $f$ is
well-behaved. In particular, we establish a convergence rate of 
the mean squared error of $I_{N,M}$ and prove a form of almost sure convergence to
$I$.  We 
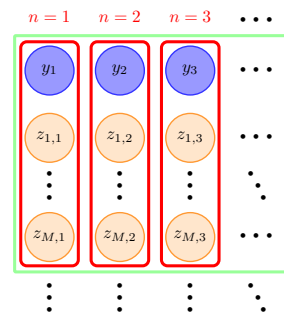
\begin{wrapfigure}{r}{0.25\textwidth}
	\vspace{-12pt}
	\centering 
	\resizebox{.25\textwidth}{!}{
		\tikzstyle{smc}=[circle,
                                    thick,
                                    minimum size=1cm,
                                    draw=blue!80,
                                    fill=blue!40]
\tikzstyle{csmc}=[circle,
                                    thick,
                                    minimum size=1cm,
                                    draw=orange!80,
                                    fill=orange!20]
\tikzstyle{iter} = [text=red]

\tikzstyle{background}=[rectangle,
                                                draw=green!40,
                                                line width=0.0625cm,
                                                inner sep=0.2cm,
                                                rounded corners=0.25mm]

\tikzstyle{iterback}=[rectangle,
draw=red,
line width=0.0625cm,
inner sep=0.075cm,
rounded corners=1.25mm]

\begin{tikzpicture}[>=latex,text height=0.375ex,text depth=0.0625ex]
  \matrix[row sep=0.35cm,column sep=0.35cm] {
    \node (n1) [iter]{$n=1$}; &
    \node (n2) [iter]{$n=2$}; &
    \node (n3) [iter]{$n=3$}; &
    \node (n9) []{\tiny $\bullet \; \bullet \; \bullet$}; &
    \\
        \node (y1) [smc]{$y_1$}; &
        \node (y2) [smc]{$y_2$}; &
        \node (y3) [smc]{$y_3$}; &
        \node (y9) []{\tiny $\bullet \; \bullet \; \bullet$}; &
        \\
        \node (z_11) [csmc]{$z_{1,1}$};      &
        \node (z_12) [csmc]{$z_{1,2}$};      &
        \node (z_13) [csmc]{$z_{1,3}$};     &
        \node (z_19) []{\tiny $\bullet \; \bullet \; \bullet$};      &
        \\
        \node (d11) []{\tiny $\begin{matrix}
        	\bullet \\
        	\bullet \\
        	 \bullet \\
        	\end{matrix}
        $}; &
        \node (d12) []{\tiny $\begin{matrix}
        	\bullet \\
        	\bullet \\
        	\bullet \\
        	\end{matrix}
        	$}; &
        \node (d13) []{\tiny $\begin{matrix}
        	\bullet \\
        	\bullet \\
        	\bullet \\
        	\end{matrix}
        	$}; &
        \node (d19) []{\tiny $\begin{matrix}
        	\bullet \; \; \; \; \\
        	\; \; \bullet \; \; \\
        	\; \; \; \; \bullet \\
        	\end{matrix}
        	$}; &
        \\
        \node (z_M1) [csmc]{$z_{M,1}$};      &
        \node (z_M2) [csmc]{$z_{M,2}$};      &
        \node (z_M3) [csmc]{$z_{M,3}$};     &
        \node (z_M9) []{\tiny $\bullet \; \bullet \; \bullet$};      &
        \\
        \node (d21) []{\tiny $\begin{matrix}
        	\\\\
        	\bullet \\
        	\bullet \\
        	\bullet \\
        	\end{matrix}
        	$}; &
        \node (d22) []{\tiny $\begin{matrix}
        	\\\\
        	\bullet \\
        	\bullet \\
        	\bullet \\
        	\end{matrix}
        	$}; &
        \node (d23) []{\tiny $\begin{matrix}
        	\\\\
        	\bullet \\
        	\bullet \\
        	\bullet \\
        	\end{matrix}
        	$}; &
        \node (d29) []{\tiny $\begin{matrix}
        	\\\\
        	\bullet \; \; \; \; \\
        	\; \; \bullet \; \; \\
        	\; \; \; \; \bullet \\
        	\end{matrix}
        	$}; &
        \\
    };
    
%

    \begin{pgfonlayer}{background}
    \node [iterback,
    fit=(y1) (z_M1)] {};
        \node [iterback,
        fit=(y2) (z_M2)] {};
            \node [iterback,
            fit=(y3) (z_M3)] {};
    \end{pgfonlayer}
        ;
    \begin{pgfonlayer}{background}
    \node [background,
    fit=(y1) (z_M1) (z_M9)] {};
    \end{pgfonlayer}
\end{tikzpicture}
	}
	\caption{Informal convergence representation \label{fig:conv-rep}}
	\vspace{-15pt}
\end{wrapfigure}
further generalize our convergence rate to apply to the case of multiple
levels of estimator nesting.


Before providing a formal examination of the convergence of NMC, we first provide 
intuition about how we might expect to construct a convergent NMC estimator.  Consider the
diagram shown in Figure~\ref{fig:conv-rep}, and suppose that we want our error to be
less than some arbitrary $\varepsilon$.  Assume that $f$ is sufficiently smooth 
that we can choose $M$ large enough to make
$\left|I-\E\left[f(y_n,(\hat{\gamma}_M)_n)\right]\right| < \varepsilon$
(we will characterize the exact requirements for this later).  For this fixed
$M$, we have a standard MC estimator on an extended space $y,z_1,\dots,z_M$ such that each
sample constitutes one of the red boxes.  As we take $N\rightarrow \infty$, i.e. taking
all the samples in the green box, this estimator converges such that $I_{N,M}
\to \E\left[f(y_n,(\hat{\gamma}_M)_n)\right]$ as $N \to \infty$ for fixed $M$.  As we can
make $\varepsilon$ arbitrarily small, we can also achieve an arbitrarily small error.

More formally, convergence bounds for NMC have previously been shown by~\citet{hong2009estimating}.  
Under the assumptions that each $\left(\hat{\gamma}_M\right)_n$ is Gaussian distributed
(which is often reasonable due to the central limit theorem) and that $f$ is thrice differentiable
other than at some finite number of points, they show that it is possible to achieve a
converge rate of $O(1/N+1/M^2)$.  We now show that these assumptions can be relaxed to only requiring
$f$ to be Lipschitz continuous, at the expense of weakening the bound. 
\begin{restatable}{theorem}{theRate} \label{the:Rate}
	If $f$ is Lipschitz continuous and $f(y_n, \gamma(y_n)), \phi(y_n, z_{n,m}) \in
	L^2$, the mean squared error of $I_{N,M}$ converges to $0$ at rate $O\left(1/N +
	1/M\right)$.
\end{restatable}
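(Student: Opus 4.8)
The plan is to use the standard bias--variance decomposition
\[
\MSE(I_{N,M}) = \var(I_{N,M}) + \left(\E[I_{N,M}] - I\right)^2,
\]
and to bound the two terms separately, showing $\var(I_{N,M}) = O(1/N)$ and $\left(\E[I_{N,M}]-I\right)^2 = O(1/M)$, which together give the claimed rate. Throughout, let $K$ be a Lipschitz constant of $f$, so that in particular $|f(y,a)-f(y,b)| \le K|a-b|$ for all $y$, and write $\hg_M = \tfrac1M\sum_{m=1}^M \phi(y,z_m)$ with $z_m \sim p(z\mid y)$ for the (conditionally unbiased) inner estimator associated with a generic outer draw $y$.

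For the bias, I would first note that $\E[I_{N,M}] = \E[f(y,\hg_M)]$, since the outer average in \eqref{eq:nested-outer} is an unbiased estimator of the common mean of its i.i.d.\ summands; hence the bias equals $\E[f(y,\hg_M) - f(y,\gamma(y))]$. Lipschitz continuity gives $\left|\E[f(y,\hg_M) - f(y,\gamma(y))]\right| \le K\,\E\big[|\hg_M - \gamma(y)|\big]$. Conditioning on $y$ and applying Jensen's inequality to the concave map $t\mapsto\sqrt t$, the inner expectation is at most $\E\big[\sqrt{\var(\phi(y,z)\mid y)/M}\,\big] = M^{-1/2}\,\E\big[\sqrt{\var(\phi(y,z)\mid y)}\,\big]$; a second application of Jensen bounds this by $M^{-1/2}\sqrt{\E[\var(\phi(y,z)\mid y)]} \le M^{-1/2}\sqrt{\E[\phi(y,z)^2]}$, which is finite by the $L^2$ hypothesis on $\phi$. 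Squaring yields a squared bias of order $1/M$.

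For the variance, the key observation is that the tuples $(y_n, z_{n,1},\dots,z_{n,M})$ are i.i.d.\ across $n$, so the summands $f(y_n,(\hg_M)_n)$ are i.i.d.\ and $\var(I_{N,M}) = \tfrac1N \var\!\big(f(y,\hg_M)\big)$. It then suffices to bound $\var(f(y,\hg_M))$ uniformly in $M$. Using $(a+b)^2 \le 2a^2 + 2b^2$ and Lipschitz continuity, $\E[f(y,\hg_M)^2] \le 2\,\E[f(y,\gamma(y))^2] + 2K^2\,\E[(\hg_M-\gamma(y))^2]$; the first term is finite by the $L^2$ hypothesis on $f(y,\gamma(y))$, and the second equals $2K^2\,\E[\var(\phi(y,z)\mid y)]/M \le 2K^2\,\E[\phi(y,z)^2]/M$, which is bounded (indeed vanishing) in $M$. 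Hence $\var(f(y,\hg_M)) \le \E[f(y,\hg_M)^2]$ is uniformly bounded and $\var(I_{N,M}) = O(1/N)$. Combining the two estimates gives $\MSE(I_{N,M}) = O(1/N + 1/M)$.

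The only genuinely delicate step is the bias bound: it is where the Lipschitz-only hypothesis is felt, degrading the $O(1/M^2)$ attainable under the smoothness assumptions of \citet{hong2009estimating} to $O(1/M)$, and it requires some care in passing through the conditional expectation over $z$ given $y$ and invoking Jensen twice so that only $L^2$ control on $\phi$ (rather than any uniform-in-$y$ control) is needed. Everything else reduces to the elementary facts that a Monte Carlo average of i.i.d.\ terms has variance $O(1/N)$ and that the inner estimator has conditional variance $O(1/M)$.
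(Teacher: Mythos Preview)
Your argument is correct. The paper's dedicated proof of this result (Appendix~\ref{sec:app:rate_single}) takes a different route: it inserts the ``idealized'' outer estimator $\frac{1}{N}\sum_n f(y_n,\gamma(y_n))$ and uses Minkowski's inequality to split $\|I-I_{N,M}\|_2$ into an outer-MC term $U=O(1/\sqrt{N})$ and an inner-approximation term $V\le K\|\hat\gamma_M-\gamma(y)\|_2=O(1/\sqrt{M})$, then squares. You instead use the bias--variance decomposition, splitting around $\E[I_{N,M}]$ rather than around the idealized estimator, and control the variance term by showing $\var\!\big(f(y,\hat\gamma_M)\big)$ is bounded uniformly in $M$. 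Both decompositions are equally short here; your version has the minor advantage of giving slightly cleaner constants (no factor of $2$ from $(U+V)^2\le 2(U^2+V^2)$) and of separating cleanly the roles of $N$ (variance) and $M$ (bias). It is worth noting that your approach is essentially the one the paper uses to prove the more general Theorem~\ref{the:Repeat} (see the bias--variance split in~\eqref{eq:bias-var-decomp} and the bound~\eqref{eq:E_decomp}), of which Theorem~\ref{the:Rate} is the $D=1$ special case; the paper's Appendix~\ref{sec:app:rate_single} is offered only as an alternative, more elementary derivation for that special case.
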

\vspace{-10pt}
\begin{proof}
The theorem follows as a special case of Theorem~\ref{the:Repeat}.  For exposition, a more accessible proof for this
particular result is also provided in Appendix~\ref{sec:app:rate_single} in the supplement.
\end{proof}
\vspace{-6pt}
Inspection of the convergence rate above shows that, given a total number of samples
$T=MN$, our bound is tightest when $N\propto M$, with a
corresponding rate $O(1/\sqrt{T})$ (see Appendix~\ref{sec:app:opt-conv}). 
When the additional assumptions of~\citet{hong2009estimating}
apply, this rate can be lowered to $O(1/T^{2/3})$ by setting $N \propto M^2$.  We
will later show that this faster convergence rate can be achieved whenever $f$ is
continuously differentiable, see also~\cite{fort2016mcmc}.


These convergence rates suggest
that, for most $f$, it is necessary to increase not only the total number of samples, $T$,
but also the number of samples used for each evaluation of the inner estimator, $M$, to achieve convergence.
Further, as we show in Appendix~\ref{sec:bias}, the estimates produced by NMC are, in general, biased.  
This is perhaps easiest to see by noting that as $N\to\infty$, the variance of the
estimator must tend to zero by the law of large numbers, but our bounds remain non-zero for
any finite $M$, implying a bias.


\subsection{Minimum Continuity Requirements}

We next consider the question of what is the minimal requirement on $f$ to ensures some form of
convergence? For a given $y_1$, we
have that $(\hat{\gamma}_M)_1=\frac{1}{M}\sum_{m=1}^{M} \phi(y_1,z_{1,m})\rightarrow\gamma(y_1)$ 
almost surely as $M \rightarrow \infty$, because the left-hand side is a MC estimator. If $f$ is continuous
around $y_1$, this also implies $f(y_1,(\hat{\gamma}_M)_1) \rightarrow
f(y_1,\gamma(y_1))$.  Our candidate requirement is that this holds in
expectation, i.e. that it holds when we incorporate the effect of the outer estimator.
More precisely, we define $(\epsilon_M)_n = \left|f(y_n, (\hat{\gamma}_M)_n) -
f(y_n,\gamma(y_n))\right|$ and require that $\E\left[(\epsilon_M)_1\right] \to 0$ as $M
\to \infty$ (noting that $(\epsilon_M)_n$ are i.i.d. and so
$\E\left[(\epsilon_M)_1\right] = \E\left[(\epsilon_M)_n\right], \forall n\in\N$). Informally, this ``expected continuity''
requirement is weaker than uniform continuity (and much weaker than Lipschitz continuity)
as it allows (potentially infinitely
many) discontinuities in $f$.  More formally we have the following result.
\vspace{-10pt}
\begin{restatable}{theorem}{theConsistent} \label{the:Consistent}
	For $n \in \N$, let 
	\[
	(\epsilon_M)_n = \left|f(y_n, (\hat{\gamma}_M)_n) - f(y_n, \gamma(y_n))\right|.
	\]
  Assume that $\E\left[(\epsilon_M)_1\right] \to 0$  as $M \to \infty$. Let $\Omega$ be
  the sample space of our underlying probability space, so that $I_{\tau_\delta(M),M}$
  forms a mapping from $\Omega$ to $\mathbb{R}$. Then, for every $\delta > 0$,
  there exists a measurable $A_\delta \subseteq \Omega$ with $\mathbb{P}(A_\delta) <
  \delta$, and a function $\tau_\delta : \N \to \N$ such that, for all $\omega\not\in
  A_\delta$,
	\[ 
		I_{\tau_\delta(M),M}(\omega) \asto I\quad\mbox{as}\quad M \to \infty.
	\]
\end{restatable}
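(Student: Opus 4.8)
The plan is to establish two ingredients and then combine them. First, I would show that for every fixed $M$, the outer estimator $I_{N,M}$ converges almost surely to $\E[f(y_1,(\hat\gamma_M)_1)]$ as $N \to \infty$. This is just the strong law of large numbers applied to the i.i.d.\ random variables $f(y_n,(\hat\gamma_M)_n)$ on the extended sample space of $(y_n, z_{n,1},\dots,z_{n,M})$; we need $f(y_n,(\hat\gamma_M)_n) \in L^1$, which follows from the hypothesis $f(y_n,\gamma(y_n)) \in L^1$ (implied by $L^2$, or simply assumed here) together with $\E[(\epsilon_M)_1] < \infty$ since $|f(y_1,(\hat\gamma_M)_1)| \le |f(y_1,\gamma(y_1))| + (\epsilon_M)_1$. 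Second, the ``expected continuity'' assumption $\E[(\epsilon_M)_1] \to 0$ gives $|\E[f(y_1,(\hat\gamma_M)_1)] - I| \le \E[(\epsilon_M)_1] \to 0$, so the limits-of-limits match: if we could interchange $M \to \infty$ and $N \to \infty$ we would be done. The content of the theorem is that we cannot interchange them for free, but we can do so along a suitable subsequence after deleting a small bad set.

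The bridge between ``$I_{N,M} \to \E[f(y_1,(\hat\gamma_M)_1)]$ a.s.\ for each fixed $M$'' and ``$I_{\tau_\delta(M),M} \to I$ a.s.\ off $A_\delta$'' is an Egorov-style / diagonal argument. Fix $\delta > 0$. For each $M$, almost sure convergence of $I_{N,M}$ to $\E[f(y_1,(\hat\gamma_M)_1)]$ as $N\to\infty$ implies, by Egorov's theorem (or directly by choosing $N$ large enough that the probability of the event $\{|I_{N,M} - \E[f(y_1,(\hat\gamma_M)_1)]| > \eta_M\}$ is small), that there is a threshold $\tau_\delta(M)$ and a set $B_M$ with $\mathbb{P}(B_M) < \delta / 2^{M+1}$ such that off $B_M$ we have $|I_{\tau_\delta(M),M} - \E[f(y_1,(\hat\gamma_M)_1)]| \le \eta_M$, where $\eta_M$ is any sequence with $\eta_M \to 0$ (e.g.\ $\eta_M = 1/M$). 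Concretely: since $I_{N,M} \to \E[\dots]$ a.s., the sup over $N \ge K$ of $|I_{N,M} - \E[\dots]|$ decreases to $0$ a.s.\ as $K \to \infty$, hence converges to $0$ in probability, so we may pick $\tau_\delta(M) = K$ large enough that $\mathbb{P}(\sup_{N \ge K} |I_{N,M} - \E[\dots]| > \eta_M) < \delta/2^{M+1}$; then set $B_M$ to be that event. Take $A_\delta = \bigcup_M B_M$, so $\mathbb{P}(A_\delta) < \delta$. For $\omega \notin A_\delta$ we have, for all $M$ and all $N \ge \tau_\delta(M)$ (in particular at $N = \tau_\delta(M)$, and one can even arrange monotonicity of $\tau_\delta$ so the bound holds along the whole tail),
\[
|I_{\tau_\delta(M),M}(\omega) - I| \le |I_{\tau_\delta(M),M}(\omega) - \E[f(y_1,(\hat\gamma_M)_1)]| + |\E[f(y_1,(\hat\gamma_M)_1)] - I| \le \eta_M + \E[(\epsilon_M)_1] \to 0,
\]
which is exactly the claimed $I_{\tau_\delta(M),M}(\omega) \asto I$ as $M \to \infty$.

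The main obstacle is purely bookkeeping: making sure the single function $\tau_\delta$ works simultaneously for \emph{all} $M$ with a single exceptional set of probability $< \delta$. This is why the geometric weights $\delta/2^{M+1}$ appear and why one passes through the $\sup_{N\ge K}$ quantity rather than just the pointwise value $I_{N,M}$ --- we need the bound to persist for the tail in $N$, not merely at one index, so that fixing $\tau_\delta(M)$ up front is legitimate. A secondary point to get right is the measurability claim that $I_{\tau_\delta(M),M}$ is a genuine map $\Omega \to \mathbb{R}$ and that $A_\delta$ is measurable; this is immediate since each $I_{N,M}$ is a finite sum of measurable functions and $A_\delta$ is a countable union of measurable events. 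I would also remark that no rate or uniform-integrability is needed beyond the stated $\E[(\epsilon_M)_1]\to 0$, underscoring that ``expected continuity'' is the minimal hypothesis, and note the caveat that $\tau_\delta$ is non-constructive (it depends on the unknown law of $I_{N,M}$), matching the paper's earlier informal discussion.
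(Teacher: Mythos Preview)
Your argument is correct and follows the same overall Egorov-plus-geometric-weights strategy as the paper, but the decomposition is different. The paper splits
\[
|I_{N,M}-I| \le \underbrace{\Bigl|I_{N,M}-\tfrac{1}{N}\sum_n f(y_n,\gamma(y_n))\Bigr|}_{V_{N,M}} + \underbrace{\Bigl|\tfrac{1}{N}\sum_n f(y_n,\gamma(y_n))-I\Bigr|}_{U_N},
\]
bounds $V_{N,M}\le \tfrac{1}{N}\sum_n(\epsilon_M)_n$, applies the SLLN to the $(\epsilon_M)_n$ for each $M$ and separately to $f(y_n,\gamma(y_n))$, and then invokes Egorov twice (once for the family indexed by $M$, once for the single sequence $U_N$), combining via $\tau_\delta=\max(\tau^1_\delta,\tau^2_\delta)$ and $A_\delta=B_\delta\cup C_\delta$. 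You instead apply the SLLN directly to $f(y_n,(\hat\gamma_M)_n)$ for each $M$, obtaining the random limit $\E[f(y_1,(\hat\gamma_M)_1)]$, and then use that this deterministic sequence of limits tends to $I$; this needs only one family of Egorov-type applications and one union. Your route is a little more economical; the paper's split has the expository advantage of isolating the ``standard MC error'' $U_N$ from the ``nesting error'' $V_{N,M}$, which mirrors the bias/variance discussion elsewhere in the paper. Both proofs silently use $f(y_1,\gamma(y_1))\in L^1$ (needed either for $U_N\asto 0$ or for your SLLN on $f(y_n,(\hat\gamma_M)_n)$), so your parenthetical remark about that assumption is appropriate.
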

\vspace{-10pt}
\begin{proof}
	See Appendix~\ref{sec:app:consistent}.
\end{proof}
\vspace{-6pt}
As well as providing proof of a different form of convergence to any existing results, this
result is particularly important because many, if not most, functions are not Lipschitz
continuous due to the their behavior in the limits.  For example, even the function $f(y,\gamma(y)) = \left(\gamma(y)\right)^2$
is not Lipschitz continuous because the derivative is unbounded as $\left|\gamma(y)\right|\rightarrow\infty$,
whereas the vast majority of problems will satisfy our weaker requirement of $\E\left[(\epsilon_M)_1\right] \to 0$.

%
%
%

\subsection{Repeated Nesting and Exact Bounds}

We next consider the case of multiple levels of nesting. As previously explained, 
this case is particularly important for analyzing probabilistic programming
languages.
To formalize what we mean by arbitrary nesting, we first assume some fixed integral depth
$D > 0$, and real-valued functions $f_0, \cdots, f_D$.
We then define
\begin{align*}
  &\gamma_D\left(y^{(0:D-1)}\right) = \E \left[f_D\left(y^{(0:D)}\right) \middle| y^{(0:D-1)}\right] \quad \text{and} \displaybreak[0] \\
  &\gamma_k(y^{(0:k-1)}) = \E \left[f_k\left(y^{(0:k)}, \gamma_{k+1}\left(y^{(0:k)}\right)\right) \middle| y^{(0:k-1)}\right], \displaybreak[0]
\end{align*}
for $0 \leq k < D$, where $y^{(k)} \sim p\left(y^{(k)}|y^{(0:k-1)}\right)$. 
Note that our single nested case corresponds to the setting of $D=1$, $f_0 = f$, $f_1 = \phi$, $y^{(0)}=y$,
$y^{(1)}=z$, $\gamma_0 = I$, and $\gamma_1 = \gamma$. Our goal is to
estimate $\gamma_0 = \E \left[f_0\left(y^{(0)},
\gamma_1\left(y^{(0)}\right)\right)\right]$. To do so we will use the following NMC scheme:
\begin{align*}
  I_D & \left(y^{(0:D-1)}\right) = \frac{1}{N_D} \sum_{n=1}^{N_D} f_D\left(y^{(0:D-1)}, y^{(D)}_n\right) \quad \text{and} \displaybreak[0] \\
   I_k &\left(y^{(0:k-1)}\right) \\
  &= \frac{1}{N_k} \sum_{n=1}^{N_k} f_k\left(y^{(0:k-1)}, y^{(k)}_n, I_{k+1}\left(y^{(0:k-1)}, y^{(k)}_n\right)\right) \displaybreak[0]
\end{align*}
for $0 \leq k \le D-1$, where each $y^{(k)}_n \sim p\left(y^{(k)}|y^{(0:k-1)}\right)$ is drawn
independently. Note that there are multiple values of $y^{(k)}_n$ for each possible $y^{(0:k-1)}$
and that $I_k\left(y^{(0:k-1)}\right)$ is still a random variable given  $y^{(0:k-1)}$.

We are now ready to provide our general result for the convergence bounds that applies to cases of
repeated nesting, provides constant factors (rather than just using big $O$ notation), and
shows how the bound can be improved if the additional assumption of continuous differentiability holds.
\begin{restatable}{theorem}{theRepeat}
	\label{the:Repeat} 
	If $f_0, \cdots, f_D$ are all Lipschitz continuous in their second input with Lipschitz 
	constants
	\[K_k := \sup_{y^{(0:k)}} \left| \frac{\partial f_k\left(y^{(0:k)},\gamma_{k+1}(y^{(0:k)})\right)}{\partial \gamma_{k+1}}\right|,
	\]
	for all $k \in 0,\dots,D-1$ and if 
	\begin{align*}
	\varsigma_{k}^2  
	&:=\E \left[\left(f_k\left(y^{(0:k)},\gamma_{k+1}
	\left(y^{(0:k)}\right) \right)-\gamma_k\left(y^{(0:k-1)}\right)\right)^2\right] \displaybreak[0]\\
	&< \infty \quad \forall k\in 0,\dots,D
	\end{align*}
	then
	\begin{align}
	\label{eq:bound-lip}
	\hspace{-5pt} \E \left[\left(I_0 - \gamma_0\right)^2\right] \le
	\frac{\varsigma_{0}^2}{N_0} +
	\sum_{k=1}^{D} \left(\prod_{\ell=0}^{k-1} K_{\ell}^2\right)
	\frac{\varsigma_{k}^2}{N_{k}}+ O(\epsilon)
	\end{align}
	where $O(\epsilon)$ represents asymptotically dominated terms.
	
	If $f_0, \cdots, f_D$ are also continuously differentiable with 
	second derivative bounds 
	\[
	C_k := \sup_{y^{(0:k)}} \left|\frac{\partial^2 f_k\left(y^{(0:k)},\gamma_{k+1}(y^{(0:k)})\right)}{\partial \gamma^2_{k+1}}\right|
	\]
	then this mean square error bound can be tightened to
	\begin{align}
	\begin{split}
	\label{eq:bound-cont}
	\E &\left[\left(I_0 - \gamma_0\right)^2\right] \le 
	\frac{\varsigma_0^2}{N_0}+ \\
	&\left(
	\frac{C_0 \varsigma_{1}^2}{2 N_{1}}
	+\sum_{k=0}^{D-2}  \left(\prod_{d=0}^{k} K_{d}\right)
	\frac{C_{k+1} \varsigma^2_{k+2}}{2 N_{k+2}}
	\right)^2 + O(\epsilon).
	\end{split}
	\end{align}
	For a single nesting, we can further characterize $O(\epsilon)$ giving
	\begin{align}
	\E  &\left[\left(I_0 - \gamma_0\right)^2\right]  \le \frac{\varsigma^2_0}{N_0}+\frac{4 K_{0}^2 \varsigma_1^2}{N_0 N_{1}}
	+\frac{2 K_{0}\varsigma_{0} \varsigma_1}{N_{0} \sqrt{N_1}}+\frac{K_0 ^2 \varsigma_1^2}{N_1}
	\end{align}
	\vspace{-5pt}
	\begin{align}
	\begin{split}
	\E &\left[\left(I_0 - \gamma_0\right)^2\right]  \le \frac{\varsigma^2_0}{N_0}+\frac{C_0 ^2 \varsigma_1^4}{4 N_1^2}\left(1+\frac{1}{N_0}\right)\\
	&+\frac{K_{0}^2 \varsigma_1^2}{N_0 N_{1}}
	+\frac{2 K_{0}\varsigma_1}{N_{0} \sqrt{N_1}}\sqrt{\varsigma_0^2+\frac{C_0 ^2 \varsigma_1^4}{4 N_1^2}}
	+ O\left(\frac{1}{N_1^3}\right)
	\end{split}
	\label{eq:cont-single}
	\end{align}
	for when the continuous differentiability assumption does not hold and 
	holds respectively.
\end{restatable}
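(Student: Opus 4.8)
The plan is a downward induction on the nesting level $k$. For each $k$ introduce the \emph{idealized} estimator
\[
\hat I_k\!\left(y^{(0:k-1)}\right) = \frac{1}{N_k}\sum_{n=1}^{N_k} f_k\!\left(y^{(0:k-1)}, y^{(k)}_n, \gamma_{k+1}\!\left(y^{(0:k)}_n\right)\right),
\]
obtained by replacing the recursive call $I_{k+1}$ by the exact inner expectation $\gamma_{k+1}$. Then $\E[\hat I_k\mid y^{(0:k-1)}] = \gamma_k(y^{(0:k-1)})$, and the usual i.i.d.\ variance computation gives $\E[(\hat I_k - \gamma_k)^2] = \varsigma_k^2/N_k$, while the remaining discrepancy is $I_k - \hat I_k = \frac1{N_k}\sum_n \delta^{(k)}_n$ with $\delta^{(k)}_n = f_k(\cdots, I_{k+1}(y^{(0:k)}_n)) - f_k(\cdots, \gamma_{k+1}(y^{(0:k)}_n))$, bounded pointwise by $K_k\,|I_{k+1}(y^{(0:k)}_n) - \gamma_{k+1}(y^{(0:k)}_n)|$ using Lipschitz continuity.

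Writing $e_k := \norm{I_k - \gamma_k}_2$ and expanding $\E[(I_k - \gamma_k)^2]$ via $(\hat I_k - \gamma_k) + (I_k - \hat I_k)$, the key simplification is conditional independence: given $y^{(0:k-1)}$ and all the sibling draws $y^{(k)}_1,\dots,y^{(k)}_{N_k}$, the subtrees feeding $I_{k+1}(y^{(0:k)}_1),\dots,I_{k+1}(y^{(0:k)}_{N_k})$ use disjoint, independent sample sets, so the $\delta^{(k)}_n$ are conditionally independent across $n$ and the centered outer summands $f_k(\cdots,\gamma_{k+1}) - \gamma_k$ are conditionally mean zero. This kills every $n\neq n'$ cross term: the double sum of $\E[\delta^{(k)}_n\delta^{(k)}_{n'}]$ collapses to $(N_k-1)/N_k$ times the squared conditional bias $\E[(\E[\delta^{(k)}_1\mid y^{(0:k-1)}])^2]\le K_k^2 e_{k+1}^2$, and the $(\hat I_k - \gamma_k)(I_k-\hat I_k)$ cross term reduces to its diagonal part, handled by Cauchy--Schwarz. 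This yields the recursion
\[
e_k^2 \;\le\; \frac{\varsigma_k^2}{N_k} \;+\; K_k^2\, e_{k+1}^2 \;+\; \frac{2 K_k \varsigma_k\, e_{k+1} + K_k^2\, e_{k+1}^2}{N_k},
\]
terminated by $e_D^2 = \varsigma_D^2/N_D$. Unrolling it, the terms $\big(\prod_{\ell<k}K_\ell^2\big)\varsigma_k^2/N_k$ are the leading contributions while every extra term carries an additional $1/N_k$ factor and is therefore asymptotically dominated; collecting these into $O(\epsilon)$ gives \eqref{eq:bound-lip}, and for $D=1$ (where $e_1^2 = \varsigma_1^2/N_1$ exactly) retaining all terms and combining with Cauchy--Schwarz gives the first explicit single-nesting bound.

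For the tighter bound under continuous differentiability, the point is that the conditional bias of $I_k$ is $O(1/N)$ rather than $O(1/\sqrt N)$: a second-order Taylor expansion of $f_k$ in its second argument about $\gamma_{k+1}(y^{(0:k)})$ gives $\E[\delta^{(k)}_1\mid y^{(0:k)}] = \frac{\partial f_k}{\partial\gamma_{k+1}}\,\E[I_{k+1}-\gamma_{k+1}\mid y^{(0:k)}] + \frac12 R$ with $|R|\le C_k\,\E[(I_{k+1}-\gamma_{k+1})^2\mid y^{(0:k)}]$, whose leading part is $C_k\varsigma_{k+1}^2/N_{k+1}$ by the variance estimate established one level deeper. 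This requires running the induction jointly on the expected conditional variance and the conditional bias, rather than on the MSE alone. It produces a \emph{linear} recursion $|b_k|\lesssim K_k|b_{k+1}| + \frac12 C_k\varsigma_{k+1}^2/N_{k+1}$ with $b_D=0$, which unrolls exactly to $B := \frac{C_0\varsigma_1^2}{2N_1}+\sum_{k=0}^{D-2}\big(\prod_{d=0}^k K_d\big)\frac{C_{k+1}\varsigma_{k+2}^2}{2N_{k+2}}$; then $\E[(I_0-\gamma_0)^2] = \var[I_0] + b_0^2$, where $\var[I_0] = \varsigma_0^2/N_0$ plus terms carrying an extra $1/N_0$ or $1/N_1$ (absorbed in $O(\epsilon)$), yielding \eqref{eq:bound-cont}. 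For $D=1$ one again keeps all terms and estimates the single surviving cross term by Cauchy--Schwarz to obtain \eqref{eq:cont-single}.

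The main obstacle is bookkeeping precision: at each step of the unrolling one must decide which of the proliferating cross and higher-order terms are genuinely lower order (and so belong in $O(\epsilon)$) versus which must be kept, and the ``extra $1/N_k$'' claim has to be made rigorous simultaneously with a matching lower-order control of the propagated variance. The second delicate point is the Taylor step in the differentiable case: the remainder must be controlled by $C_k$ uniformly so that $\E[(I_{k+1}-\gamma_{k+1})^2\mid\cdot]$ is the only random quantity left, and replacing the conditional variance of $I_{k+1}$ by its leading term $\varsigma_{k+1}^2/N_{k+1}$ must be justified to the claimed order — which is precisely why the induction is carried on variance and bias together rather than on the mean squared error directly.
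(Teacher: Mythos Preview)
Your proposal is correct and follows the same inductive skeleton as the paper: an MSE recursion down the levels, Lipschitz control of the propagated inner error, a second-order Taylor expansion for the bias in the smooth case, and a joint bias--variance induction there. The one substantive difference is the pivot in the decomposition. You split about the \emph{idealized} estimator $\hat I_k$ (with $I_{k+1}$ replaced by $\gamma_{k+1}$), which gives $\|\hat I_k-\gamma_k\|_2^2=\varsigma_k^2/N_k$ directly and pushes all propagated error into $I_k-\hat I_k$; the conditional-independence argument you give for killing the off-diagonal cross terms is exactly right. The paper instead uses the bias--variance decomposition $E_k=v_k^2+\beta_k^2$ about the conditional mean $\bar f_k=\E[I_k\mid y^{(0:k-1)}]$, so that its ``variance'' piece still contains the noisy inner call and must be separated from $s_k^2$ via Minkowski; this extra step is why the paper picks up a factor $4$ in front of $K_0^2\varsigma_1^2/(N_0N_1)$ in the explicit single-nesting bound, whereas your route naturally yields $1$ there. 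Conversely, the paper's decomposition isolates the bias $\beta_k$ from the outset, which is what both arguments eventually need for the continuously differentiable refinement (the paper carries the recursion on $\lambda_k=\E[|\beta_{k+1}|]$, matching your $b_k$ recursion up to indexing). The Taylor step, the unrolling, and the handling of dominated terms are otherwise essentially identical.
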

\vspace{-10pt}
\begin{proof}
	See Appendix~\ref{sec:app:repeat}.
\end{proof}
\vspace{-6pt}
These results give a convergence rate of $O(\sum_{k=0}^{D} 1/N_k)$
when only Lipschitz continuity holds and $O(1/N_0 +(\sum_{k=1}^{D} 1/N_k)^2)$ when
all the $f_k$ are also continuously differentiable.  
As estimation requires drawing $O(T)$ samples where $T= \prod_{k=0}^{D} N_k$, 
the convergence rate will rapidly diminish with repeated nesting.  More precisely,
as shown in Appendix~\ref{sec:app:opt-conv}, the optimal convergence rates are
$O(1/T^{\frac{1}{D+1}})$ and $O(1/T^{\frac{2}{D+2}})$ respectively for the two cases, both of which
imply that the rate diminishes exponentially with $D$.


\section{Special Cases}
\label{sec:special_cases}

We now outline some special cases where it is possible to achieve a
convergence rate of $O(1/N)$ in the mean square error (MSE) as per conventional MC estimation.  
Establishing these cases is important because it identifies for which problems we can use conventional results,
when we can achieve an improved convergence rate, and what precautions we must take to ensure this.
We will focus on single nesting instances, but note that all results still apply to repeated nesting scenarios because they can
be used to ``collapse'' layers and thereby reduce the depth of the nesting.

\subsection{Linear $f$}
\label{sec:linear_case}

Our first special case is that $f$ is linear in its second argument, i.e. $f(y,\alpha v + \beta w) = 
\alpha f(y,v)+ \beta f(y,w)$.
Here the problem can be rearranged to a single expectation, a well-known result which forms the basis for pseudo-marginal, 
nested sequential MC \citep{andersson2015nested}, and certain ABC methods~\citep{csillery2010approximate}. 
Namely we have
\begin{align}
I
& = \mathbb{E}_{y \sim p(y)}\left[f\left(y,\mathbb{E}_{z\sim p(z|y)}\left[\phi(y,z)\right]\right)\right] \displaybreak[0] \nonumber \\
& = \mathbb{E}_{y \sim p(y)}\left[ \mathbb{E}_{z\sim p(z|y)}\left[f(y,\phi(y,z))\right]\right] \displaybreak[0] \nonumber \\
& \approx\frac{1}{N} \sum_{n=1}^{N} f(y_n,\phi(y_n,z_n))
\end{align}
where $(y_n, z_n) \sim p(y)p(z|y)$ if $\gamma(y)$ is of the form of~\eqref{eq:gamma_1} and
$y_n \sim p(y)$ and $z_n \sim p(z)$ are independently drawn if $\gamma(y)$ is of the form of~\eqref{eq:gamma_2}.




\subsection{Finite Possible Realizations of $y$}
\label{sec:discrete}

Our second case is if $y$ must take one of finitely many values $y_1, \cdots, y_C$, then it is possible to
use another approach to ensure the same convergence rate as standard MC.
The key observation is to note that in this case we can convert the nested problem
\eqref{eq:MC} into $C$ separate non-nested problems
\begin{align}
	\label{eq:disc-I}
         I = \sum_{c=1}^C P(y = y_c) \, f(y_c, \gamma(y_c))
\end{align}
which can then be estimated using
\begin{align}
        I_N  &= \sum_{c=1}^C (\hat{P}_N)_c \, (\hat{f}_N)_c \quad \text{where}
        \label{eq:IN} \displaybreak[0] \\
        \label{eq:PN}
        P(y = y_c) &\approx (\hat{P}_N)_c = \frac{1}{N} \sum_{n=1}^N \mathbbm{1}(y_n = y_c)  \displaybreak[0] \\
        \label{eq:fn}
        \hspace{-5pt} f(y_c, \gamma(y_c)) &\approx (\hat{f}_N)_c = f\left(y_c, \frac{1}{N} \sum_{n=1}^N \phi(y_c, z_{n,c})\right)
\end{align}
with $y_n \iid p(y)$ and $z_{n,c} \sim p(z|y_c)$ (or $z_{n,c} \sim p(z)$  
if using the formulation in \eqref{eq:gamma_2}).  Note the critical point that
each $z_{n,c}$ is independent of $y_n$ as each $y_c$ is a constant.
We can now show the following result which, though intuitively straightforward, 
requires care to formally prove.
\begin{restatable}{theorem}{thefiniteres}
	\label{the:finite-res}
  If $f$ is Lipschitz continuous, then the mean squared error of $I_N  = \sum_{c=1}^C (\hat{P}_N)_c \, (\hat{f}_N)_c$ 
  as an estimator for $I$ as per~\eqref{eq:disc-I} converges at rate $O(1/N)$.
\end{restatable}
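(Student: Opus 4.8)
The plan is to decompose the error $I_N - I$ into two pieces using~\eqref{eq:disc-I} and~\eqref{eq:IN}, namely
\[
I_N - I = \sum_{c=1}^C \left((\hat{P}_N)_c - P(y=y_c)\right) (\hat{f}_N)_c + \sum_{c=1}^C P(y=y_c)\left((\hat{f}_N)_c - f(y_c,\gamma(y_c))\right),
\]
and then bound the mean square of each. Since $C$ is finite, it suffices by Cauchy--Schwarz (or the inequality $(\sum_{i=1}^m a_i)^2 \le m \sum_{i=1}^m a_i^2$) to control each summand's second moment separately. For the second sum, each $(\hat{f}_N)_c$ is, by the crucial independence of $z_{n,c}$ from $y_n$ and the fact that $y_c$ is a constant, exactly a single nested MC estimator with $N$ inner samples and a degenerate (deterministic) outer distribution; hence Theorem~\ref{the:Rate} — or more simply the Lipschitz bound $\left|(\hat f_N)_c - f(y_c,\gamma(y_c))\right| \le K_0 \left|\frac1N\sum_n \phi(y_c,z_{n,c}) - \gamma(y_c)\right|$ combined with the $L^2$ assumption on $\phi$ — gives $\E[((\hat f_N)_c - f(y_c,\gamma(y_c)))^2] = O(1/N)$.

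For the first sum, $(\hat P_N)_c$ is a standard empirical frequency with $\E[((\hat P_N)_c - P(y=y_c))^2] = P(y=y_c)(1-P(y=y_c))/N = O(1/N)$, but the factor multiplying it is $(\hat f_N)_c$, which is random and not obviously bounded. The plan here is to write $(\hat f_N)_c = f(y_c,\gamma(y_c)) + ((\hat f_N)_c - f(y_c,\gamma(y_c)))$ and split again: the deterministic part $f(y_c,\gamma(y_c))$ is a fixed constant, so that contribution is $O(1/N)$ immediately; the remaining cross term $\left((\hat P_N)_c - P(y=y_c)\right)\left((\hat f_N)_c - f(y_c,\gamma(y_c))\right)$ is handled by Cauchy--Schwarz in $L^2$, giving a bound of the form $\sqrt{\E[((\hat P_N)_c - P)^2]}\cdot\sqrt{\E[((\hat f_N)_c - f)^2]} = O(1/\sqrt N)\cdot O(1/\sqrt N) = O(1/N)$. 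Summing the $O(1/N)$ contributions over the finitely many $c$ preserves the rate, yielding $\E[(I_N - I)^2] = O(1/N)$.

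The main obstacle I anticipate is handling the dependence structure cleanly: the $(\hat P_N)_c$ across different $c$ share the same samples $y_1,\dots,y_N$ and so are correlated, and each $(\hat f_N)_c$ depends on its own block of inner samples; one must be careful that the decompositions above only ever require second-moment (not cross-moment) control, which is exactly what the $(\sum a_i)^2 \le C\sum a_i^2$ trick and Cauchy--Schwarz provide, so correlations never actually need to be computed. A secondary technical point is confirming the $L^2$ integrability needed to invoke Theorem~\ref{the:Rate} on each slice — this follows since $f$ Lipschitz plus $\phi(y_c, z_{n,c}) \in L^2$ forces $f(y_c, \cdot)$ evaluated at the inner estimator to be in $L^2$ as well. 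Everything else is routine bookkeeping with the constant $C$ and the Lipschitz constant $K_0$ of $f$.
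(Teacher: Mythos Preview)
Your approach is essentially the same as the paper's: both split $(\hat P_N)_c(\hat f_N)_c - P_c f_c$ via an intermediate term, bound each piece in $L^2$ using Lipschitz continuity and the standard MC rate for the inner average, and then assemble using $(\sum a_i)^2 \le m\sum a_i^2$ (equivalently Minkowski). The only structural difference is the choice of intermediate: the paper inserts $(\hat P_N)_c f_c$ rather than your $P_c(\hat f_N)_c$, which lets it factor $\|(\hat P_N)_c((\hat f_N)_c - f_c)\|_2 = \|(\hat P_N)_c\|_2\,\|(\hat f_N)_c - f_c\|_2$ directly from independence and avoid your secondary split of $(\hat f_N)_c$.

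One small slip: for the cross term $X_c Y_c := ((\hat P_N)_c - P_c)((\hat f_N)_c - f_c)$ you need $\E[X_c^2 Y_c^2]$, but Cauchy--Schwarz only gives $\E[|X_c Y_c|] \le \sqrt{\E[X_c^2]}\sqrt{\E[Y_c^2]}$, i.e.\ a first-moment bound. This is trivially repaired either by using the independence of $(\hat P_N)_c$ and $(\hat f_N)_c$ that you already noted (so $\E[X_c^2 Y_c^2] = \E[X_c^2]\E[Y_c^2] = O(1/N^2)$), or simply by $|X_c|\le 1$ deterministically, giving $\E[X_c^2 Y_c^2]\le \E[Y_c^2]=O(1/N)$.
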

\vspace{-12pt}
\begin{proof}
	See Appendix~\ref{sec:app:finite-res}.
\end{proof}

\subsection{Products of Expectations}
\label{sec:products}

We next consider the scenario, which occurs for many latent variables models and probabilistic
programming problems, where $\gamma (y)$ is equal to the product of 
multiple expectations, rather than just a single expectation as per~\eqref{eq:gamma_1}.
That is,
\begin{equation}
\label{eq:prod-mc}
I = \mathbb{E}_{y\sim p(y)}\left[ f\left(y,\prod_{\ell=1}^{L}\mathbb{E}_{z_{\ell}\sim p(z_{\ell}|y)}\left[\psi_{\ell}(y,z_{\ell})\right]\right) \right].
\end{equation}
Because the $z_{\ell}$ will not in general be independent, we cannot trivially
rearrange~\eqref{eq:prod-mc} to a standard nested estimation by moving
the product within the expectation.
Our insight is that the required rearrangement can instead be achieved
by introducing new random variables $\{z'_{\ell}\}_{\ell=1:L}$ such that each 
$z'_{\ell}|y \sim p(z_\ell | y)$ and the $z'_{\ell}$ are independent of one another. 
This can be achieved by, for example, taking $L$ independent samples from the joint $Z_{\ell} \iid p(z_{1:L} | y)$ 
and using the $\ell^{\mathrm{th}}$ such draw for the $\ell^{\mathrm{th}}$ dimension of $z'$, i.e.
setting $z'_{\ell}= \{Z_{\ell}\}_{\ell}$.
For every $y \in \mathcal{Y}$ we now have
\begin{align}
 \prod_{\ell=1}^{L}&\mathbb{E}_{z_{\ell}\sim p(z_{\ell}|y)}[\psi_{\ell}(y,z_{\ell})] 
 =  \prod_{\ell=1}^{L}\mathbb{E}_{z'_{\ell}\sim p(z'_{\ell}|y)}[\psi_{\ell}(y,z'_{\ell})] \displaybreak[0] \nonumber \\
  &= \mathbb{E}_{\{z'_{\ell}\}_{\ell=1:L} \sim p(\{z'_{\ell}\}_{\ell=1:L}|y)}\left[ \prod_{\ell=1}^{L} \psi_{\ell}(y,z'_{\ell}) \right]
\end{align}
which is a single expectation on an extended space and shows that \eqref{eq:prod-mc} 
fits the NMC formulation.
Furthermore, we can now show that if $f$ is linear, the MSE of the NMC estimator~\eqref{eq:prod-mc}
converges at the standard MC rate $O(1/N)$, provided that $M$ remains fixed.  

\begin{restatable}{theorem}{theprod}
	\label{the:prod}
	Consider the NMC estimator
	\begin{align*}
	I_{N} &= \frac{1}{N}\sum_{n=1}^N f\left(y_n,\prod_{\ell=1}^{L} \frac{1}{M_{\ell}} \sum_{m=1}^{M_{\ell}} \psi_{\ell}(y_n,z_{n,\ell,m}')\right)
	\end{align*}
	where each $y_n \in \mathcal{Y}$ and $z_{n,\ell,m}' \in \mathcal{Z}_{\ell}$ are independently drawn from 
	$y_n \sim p(y)$ and $z_{n,\ell,m}' | y_n \sim p(z_{\ell} | y_n)$, respectively. If $f$ is linear, 
        the estimator converges almost surely to $I$,
	with a convergence rate of $O(1/N)$ in the mean square error for any fixed choice of $\{M_{\ell}\}_{\ell = 1:L}$.
\end{restatable}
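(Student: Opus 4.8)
The plan is to use linearity of $f$ to collapse $I_N$ into an ordinary (non-nested) Monte Carlo average of i.i.d.\ terms, after which both conclusions are classical. Linearity of $f$ in its second argument (which in particular forces $f(y,0)=0$, hence $f(y,v)=v\,f(y,1)$) lets me write $f(y,v)=v\,g(y)$ with $g(y):=f(y,1)$. Setting $\hat{\mu}_{n,\ell}:=\frac{1}{M_\ell}\sum_{m=1}^{M_\ell}\psi_\ell(y_n,z_{n,\ell,m}')$, each summand becomes $X_n:=f\!\left(y_n,\prod_{\ell}\hat{\mu}_{n,\ell}\right)=g(y_n)\prod_{\ell=1}^L\hat{\mu}_{n,\ell}$, and since the tuples $(y_n,\{z_{n,\ell,m}'\}_{\ell,m})$ are i.i.d.\ over $n$, so are the $X_n$, with $I_N=\frac1N\sum_{n=1}^N X_n$.

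Next I would identify the mean of $X_n$ by conditioning on $y_n$. Conditionally on $y_n$, the $z_{n,\ell,m}'$ are independent across both $\ell$ and $m$ with $z_{n,\ell,m}'\sim p(z_\ell\mid y_n)$, so $\E\!\left[\prod_\ell\hat{\mu}_{n,\ell}\,\middle|\,y_n\right]=\prod_\ell\E[\hat{\mu}_{n,\ell}\mid y_n]=\prod_\ell\E_{z_\ell\sim p(z_\ell\mid y_n)}[\psi_\ell(y_n,z_\ell)]=\gamma(y_n)$; this is exactly the rearrangement established in the display following~\eqref{eq:prod-mc}. Hence $\E[X_n\mid y_n]=g(y_n)\gamma(y_n)=f(y_n,\gamma(y_n))$, so $\E[X_n]=\E_{y\sim p(y)}[f(y,\gamma(y))]=I$. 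Thus $I_N$ is an unbiased Monte Carlo estimator of $I$ assembled from i.i.d.\ terms, for any fixed choice of $\{M_\ell\}_{\ell=1:L}$. The two conclusions are then immediate: $I_N\asto I$ by the strong law of large numbers (needing $X_1\in L^1$), and $\MSE(I_N)=\var(I_N)=\var(X_1)/N=O(1/N)$ (needing $X_1\in L^2$), where the constant depends on $\{M_\ell\}$ only through $\var(X_1)$ and is therefore finite and fixed once $\{M_\ell\}$ is fixed.

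The main obstacle is the integrability bookkeeping: because $g(y_n)$ and the factors $\hat{\mu}_{n,\ell}$ all share the common dependence on $y_n$, one cannot conclude $X_1\in L^2$ merely by multiplying second moments of independent variables. The fix is again to condition on $y_n$ before taking moments: conditional independence across $\ell$ gives $\E[X_n^2\mid y_n]=g(y_n)^2\prod_\ell\E[\hat{\mu}_{n,\ell}^2\mid y_n]$ with $\E[\hat{\mu}_{n,\ell}^2\mid y_n]=\gamma_\ell(y_n)^2+\frac{1}{M_\ell}\var_{z_\ell\sim p(z_\ell\mid y_n)}\!\big(\psi_\ell(y_n,z_\ell)\big)$, where $\gamma_\ell(y):=\E_{z_\ell\sim p(z_\ell\mid y)}[\psi_\ell(y,z_\ell)]$ and $\gamma(y)=\prod_\ell\gamma_\ell(y)$. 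One then only needs $\E[X_1^2]=\E\!\left[g(y_1)^2\prod_\ell\E[\hat{\mu}_{1,\ell}^2\mid y_1]\right]<\infty$, which follows from the natural second-moment hypotheses on $f(y,\gamma(y))$ and the $\psi_\ell(y,z_\ell)$, mirroring the $L^2$ assumptions of Theorems~\ref{the:Rate} and~\ref{the:finite-res}. Everything else is routine manipulation of i.i.d.\ averages.
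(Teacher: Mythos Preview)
Your proposal is correct and follows essentially the same route as the paper: recognize that for fixed $\{M_\ell\}$ the summands are i.i.d.\ on the extended space $(y_n,\{z_{n,\ell,m}'\}_{\ell,m})$, show via conditioning on $y_n$, linearity of $f$, and conditional independence across $\ell$ that each summand has mean $I$, and then invoke the strong law and the standard MC variance rate. Your explicit reduction $f(y,v)=v\,g(y)$ is a cosmetic variant of the paper's direct use of linearity to move the conditional expectation inside $f$, and your integrability discussion is in fact more careful than the paper's, which simply appeals to ``convergence properties and rate as per standard MC'' without spelling out the $L^1$/$L^2$ hypotheses.
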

\vspace{-12pt}
\begin{proof}
See Appendix~\ref{sec:app-prod}.
\end{proof}
\vspace{-8pt}

As this result holds in the case $L=1$, an important consequence is that
whenever $f$ is linear, the same convergence rate is achieved regardless of whether we
reformulate the problem to a single expectation or not, provided that the number of samples
used by the inner estimator is fixed.  

\begin{figure*}[t!]
	\centering
	\begin{subfigure}[b]{0.49\textwidth}
		\centering
		\includegraphics[width=0.9\textwidth,trim={1.5cm 0 3.5cm 0},clip]{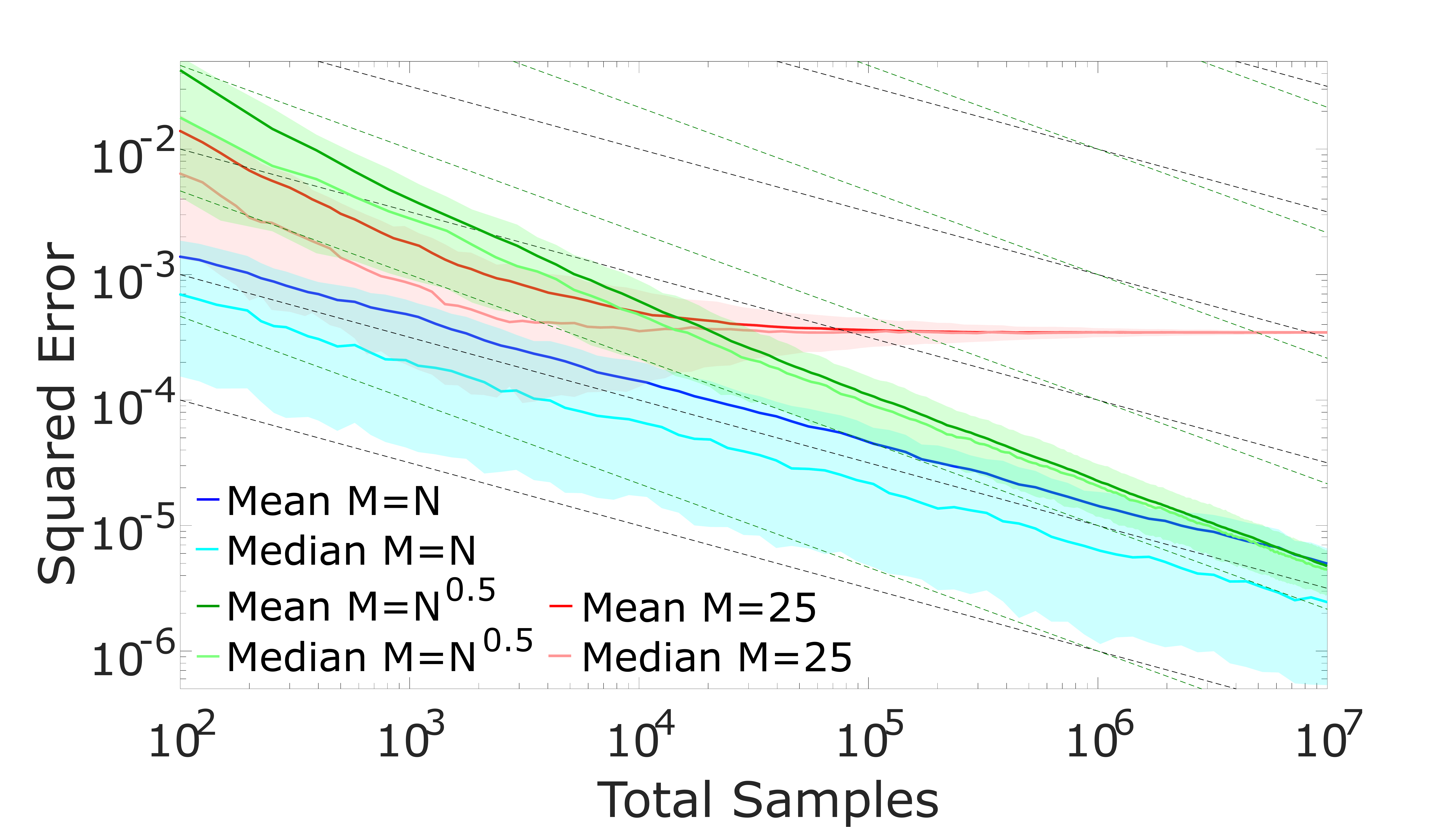}
		\caption{Convergence with increasing sample budget. \label{fig:emprical-conv}}
	\end{subfigure}
	\begin{subfigure}[b]{0.49\textwidth}
		\centering
		\includegraphics[width=0.9\textwidth,trim={1.5cm 0 3.5cm 0},clip]{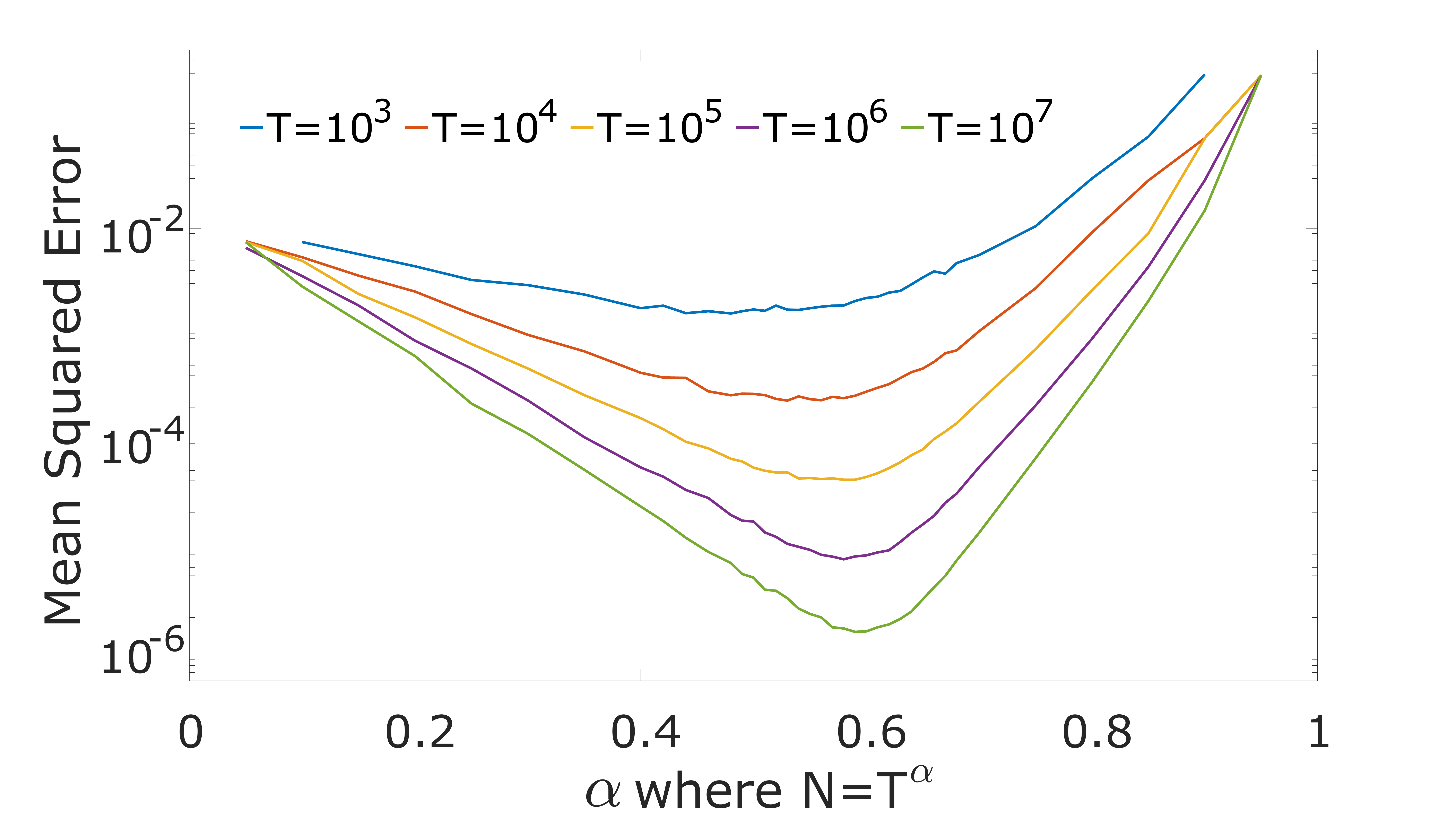}
		\caption{Final error for different $T$ and $N$.\label{fig:tau_sweep}}
	\end{subfigure}
	\vspace{-6pt}
	\caption{Empirical convergence of NMC for~\eqref{eq:model}.  [Left] 
		convergence in total samples for different ways of setting $M$ and $N$.  
		Results are averaged over 1000 independent runs, while shaded regions give the 25\%-75\% quantiles. The theoretical convergence rates, namely
		$O(1/\sqrt{T})$ and $O(1/T^{2/3})$ for setting $N\propto M$ and $N\propto M^2$
		respectively, are observed (see the dashed black and green lines respectively for reference).
		The fixed $M$ case converges at the standard MC error rate of $O(1/T)$ but to a 
		biased solution.
		[Right] final error for different total sample budgets
		as a function of $\alpha$ where $N=T^{\alpha}$ and $M=T^{1-\alpha}$ iterations are used for the outer
		and inner estimators respectively.  This shows that even though $\alpha=\frac{2}{3}$ is the
		asymptotically optimal allocation strategy, this is not the optimal solution for
		finite $T$. Nonetheless, as $T$ increases, the optimum value of $\alpha$ increases,
		starting around $0.5$ for $T=10^3$ and reaching around $0.6$ for $T=10^7$. \vspace{-20pt}}
\end{figure*}	

\subsection{Polynomial $f$}
\label{sec:polynomial}

Perhaps surprisingly, whenever $f$ is of the form
\begin{align}
  f(y,\gamma(y)) &= g(y) \, \gamma(y)^{\alpha}
\end{align}
where $\alpha \in \mathbb{Z}_{\geq 0}$,
then it is also possible to construct a standard MC estimator by building on the ideas
introduced in Section~\ref{sec:products} and those of \cite{goda2016computing}.
The key idea is
\begin{align}
\left(\E \left[z\right]\right)^2 = \E \left[z\right]\E \left[z'\right] = \E \left[z z'\right]
\end{align}
where $z$ and $z'$ are i.i.d.  Therefore, assuming appropriate integrability requirements,
we can construct the following non-nested MC estimator:
\begin{align}
&\E\left[g(y) \, \gamma(y)^{\alpha}\right] = 
 \E\left[g(y) \, 
\prod_{\ell=1}^{\alpha}
\E_{z_{\ell}\sim p(z | y)}\left[\phi(y,z_{\ell}) | y\right]\right] \nonumber \displaybreak[0] \\
 &=\E\left[g(y)\prod_{\ell=1}^{\alpha} \phi(y,z_{\ell})\right] \nonumber \displaybreak[0]
\approx \frac{1}{N} \sum_{n=1}^{N} g(y_n)\prod_{\ell=1}^{\alpha} \phi(y_n,z_{n,\ell})
\end{align}
where we independently draw each $z_{n,\ell} | y_n \sim p(z | y_n)$.


\section{Empirical Verification}
\label{sec:empirical}

The convergence rates proven in Section~\ref{sec:convergence} are only
\emph{upper bounds} on the worst-case performance. We will now
examine whether these convergence rates are tight in practice, investigate what happens
when our guidelines are not followed, and outline some applications of our results.

\subsection{Simple Analytic Model}
\label{sec:simple}

We start with the following analytically calculable problem
\begin{subequations}
\label{eq:model}
\begin{align}
 y &\sim \mathrm{Uniform}(-1,1), \\
z &\sim \mathcal{N}(0,1), \\
\phi(y,z) & = \sqrt{2/\pi}\exp\left(-2(y-z)^2\right), \\
f(y,\gamma(y)) & = \log (\gamma (y)) = \log(\E_{z}[\phi(y,z)]).
\end{align}
\end{subequations}
for which $I=\frac{1}{2}\log\left(\frac{2}{5\pi}\right)-\frac{2}{15}$.
Figure~\ref{fig:emprical-conv} shows the corresponding empirical convergence obtained by
applying~\eqref{eq:nested-mc} to~\eqref{eq:model} directly. It shows that, for this
problem, the theoretical convergence rates from Theorem~\ref{the:Repeat} are indeed
realized.
The figure also demonstrates the danger of not increasing
$M$ with $N$, showing that the NMC estimator converges to an incorrect solution when $M$
is held constant.  Figure~\ref{fig:tau_sweep} shows the effect of varying $N$ and $M$ for various
fixed sample budgets $T$ and demonstrates that the asymptotically optimal strategy can be suboptimal
for finite budgets.

\subsection{Planning Cancer Treatment}
\label{sec:cancer}

We now introduce a real-world example to show the applicability of NMC in a scenario
where the solution is not analytically tractable and conventional MC is insufficient.
Consider a treatment center assessing a new policy for planning cancer treatments, subject to a budget. 
Clinicians must decide on a patient-by-patient basis whether to administer chemotherapy in the
hope that their tumor will reduce in size sufficiently to be able to perform surgery at a later date.
A treatment is considered to have been successful if the size of the tumor drops below a threshold value in a fixed time window.
The clinicians have at their disposal a simulator for the evolution of tumors with time,
parameterized by both observable values, $y$, such as tumor size, and unobservable values, $z$, such as the patient-specific response to treatment.
Given a set of input parameters, the simulator deterministically returns a binary response $\phi(y,z)\in\left\lbrace 0,1\right\rbrace $, with $1$ indicating a successful treatment.
To estimate the probability of a successful treatment for a given patient, the clinician must calculate the expected
success over these unobserved variables, namely $\E_{z\sim p(z|y)} [\phi(y,z)]$ where $p(z|y)$ represents a probabilistic
model for the unobserved variables, which could, for example, be constructed based on empirical data.
The clinician then decides whether to go ahead with the treatment for that
patient based on whether the calculated probability of success exceeds a certain threshold $T_{\mathrm{treat}}$.

\begin{figure}[t]
	\centering
	\includegraphics[width=0.45\textwidth,trim={1.5cm 0 3.5cm 0},clip]{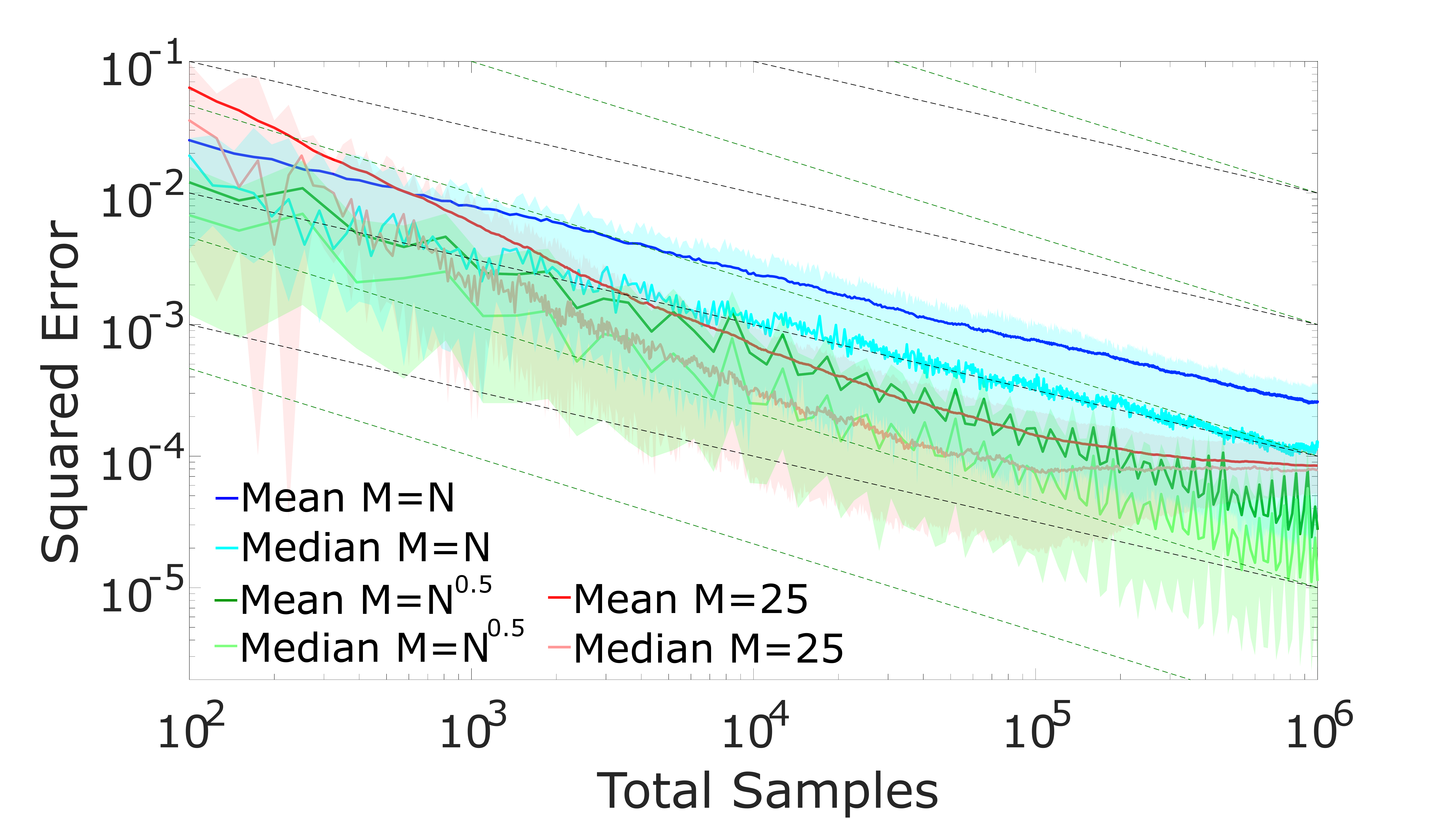}
	\caption{Convergence of NMC for cancer simulation.
		A ground truth estimate was calculated
		using a single run with $M=10^5$ and $N=10^5$.  Experimental setup and conventions are as per Figure~\ref{fig:emprical-conv}
		and we again observe the expected convergence rates. When $M=\sqrt{N}$ an interesting fluctuation behavior is observed.  
		Further testing suggests that this originates because the bias of the estimator depends in
		a fluctuating manner on the value of $M$ as the binary output of $\phi(y,z)$ creates a quantization
		effect on the possible estimates for $\hat{\gamma}$.  This effect is also observed for the $M=N$ case
		but is less pronounced. \vspace{-20pt} \label{fig:emperical-conv-cancer}}
\end{figure}

The treatment center wishes to estimate the expected number of patients that will be treated for a given $T_{\mathrm{treat}}$ so that it can minimize this threshold without exceeding its budget.
To do this, it calculates the expectation of the clinician's decisions to administer 
treatment, giving the complete nested expectation for calculating the number of treated patients as
\begin{equation}
	\label{eq:cancer}
I(T_{\mathrm{treat}}) = \E_{} \left[\mathbb{I}\left(\E_{z\sim p(z|y)} [\phi(y,z)]>T_{\mathrm{treat}}\right)\right],
\end{equation}
where the step function $\mathbb{I}(\cdot > T_{\mathrm{treat}})$ imposes a non-linear
mapping, preventing conventional MC estimation. Full details on $\phi$, $p(y)$, and $p(z|y)$ are 
given in Appendix~\ref{sec:cancer_sim_app}.

To verify the convergence rate, we repeated the analysis from Section~\ref{sec:simple} for~\eqref{eq:cancer} 
at a fixed value of $T_{\mathrm{treat}}=0.35$. 
The results, shown in Figure~\ref{fig:emperical-conv-cancer}, again verify the theoretical rates. 
By further testing different values of $T_{\mathrm{treat}}$, we found $T_{\mathrm{treat}} = 0.125$ 
to be optimal under the budget.

\begin{figure}[t]
	\centering
	\includegraphics[width=0.45\textwidth,trim={1.5cm 0 3.5cm 0},clip]{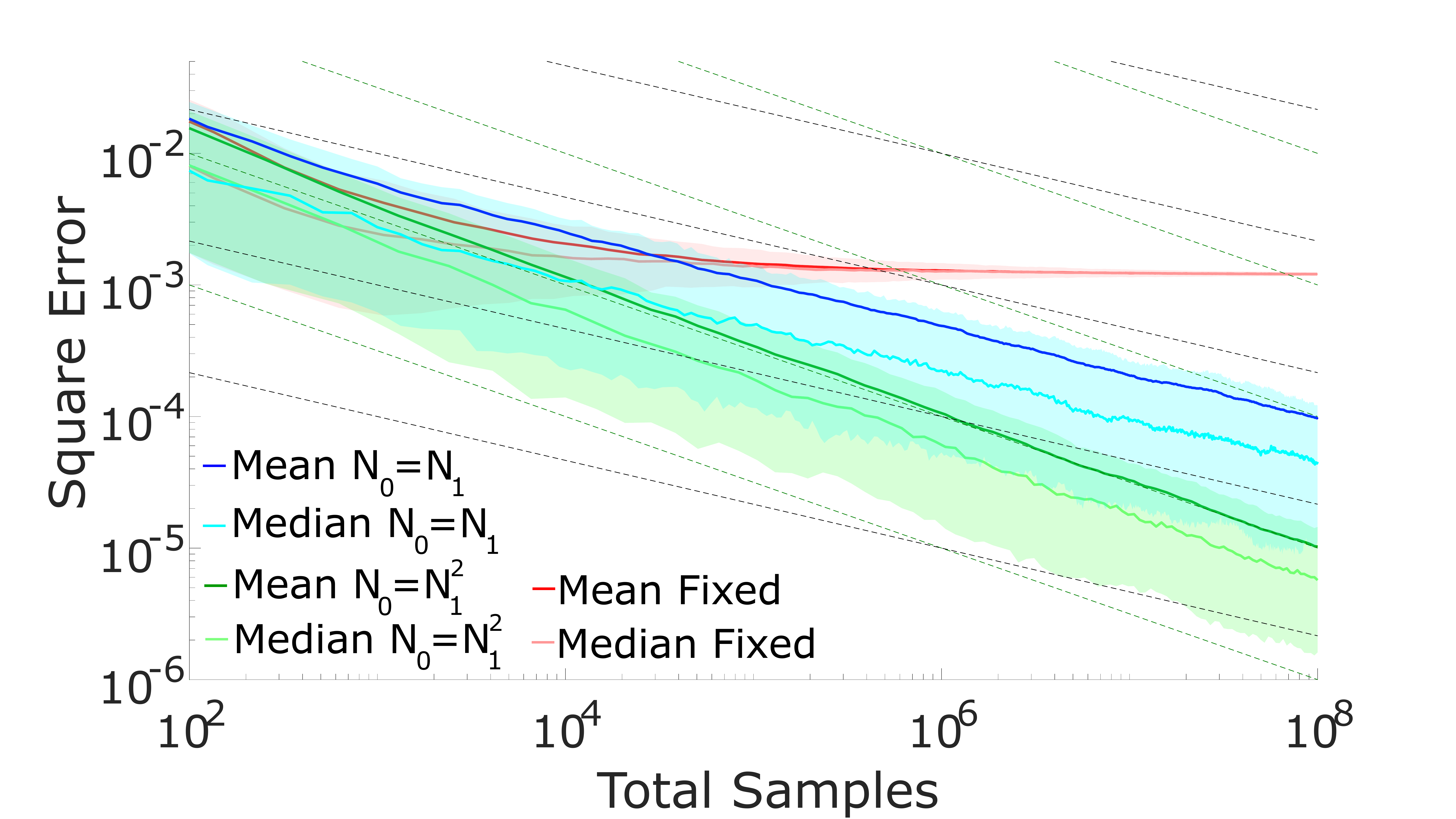}
	\vspace{-5pt}
	\caption{Empirical convergence of NMC to~\eqref{eq:repeat-nest} for an increasing total sample budget
		$T=N_0 N_1 N_2$.  Setup and conventions as per Figure~\ref{fig:emprical-conv}.
		Shown in red is the convergence with a fixed $N_2=5$ and $N_0=N_1^2$, which we see gives a biased solution. Shown in blue is the convergence
		when setting $N_0=N_1=N_2$, which we see converges at the expected $O(T^{-1/3})$ rate.  Shown in green is the convergence when
		setting $N_0=N_1^2=N_2^2$ which we see again gives the theoretical convergence rate, namely $O(T^{-1/2})$.
		\vspace{-5pt}\label{fig:multi-nest}}
\end{figure}

\vspace{-4pt}


\subsection{Repeated Nesting}
\label{sec:exp-repeat-app}

We next consider some simple models with multiple levels of nesting, starting with
\begin{subequations}
	\label{eq:repeat-nest}
\begin{align}
y^{(0)} \sim \mathrm{Uniform}(0,1), \;&\;
y^{(1)} \sim \mathcal{N}(0,1), \;\;
y^{(2)} \sim \mathcal{N}(0,1), \displaybreak[0] \nonumber\\ 
f_0 \left(y^{(0)}, \gamma_1\left(y^{(0)}\right)\right)&= \log \gamma_1\left(y^{(0)}\right) \displaybreak[0] \\ 
\begin{split}
f_1 \left(y^{(0:1)}, \gamma_2\left(y^{(0:1)}\right)\right)&= \\
\exp\Bigg(-\frac{1}{2}\bigg(y^{(0)}- &y^{(1)}-\log \gamma_2\left(y^{(0:1)}\right)\bigg)\Bigg)  
\end{split} \displaybreak[0]  \\
f_2 \left(y^{(0:2)}\right)=&\exp\left(y^{(2)}-\frac{y^{(0)}+y^{(1)}}{2}\right)
\end{align}
\end{subequations}
which has analytic solution $I=-3/32$. 
The convergence plot shown in Figure~\ref{fig:multi-nest} demonstrates that the
theoretically expected convergence
behaviors are observed for different methods of setting 
$N_0, N_1$, and $N_2$.

We further investigated the empirical performance of different strategies for choosing $N_0, N_1$, $N_2$ under a 
finite fixed budget $T=N_0N_1N_2$.  In particular, we looked to establish the optimal empirical
setting under the fixed budget $T=10^6$ for the model described in~\eqref{eq:repeat-nest} and 
a slight variation where $y^{(0)}$ is replaced with $y^{(0)}/10$, for which the ground
truth is now $I=39/160$.
Defining $\alpha_1$ and $\alpha_2$ such that $N_0 = T^{\alpha_1}$, 
$N_1 = T^{\alpha_2(1-\alpha_1)}$, and $N_2 = T^{(1-\alpha_1)(1-\alpha_2)}$, we ran a Bayesian optimization
algorithm, namely BOPP~\citep{rainforth2016nips}, to optimize the log MSE,
$\log_{10} \left(\E \left[(I_0(\alpha_1,\alpha_2)-\gamma_0)^2\right]\right)$, with respect to 
$(\alpha_1,\alpha_2)$.  For each tested $(\alpha_1,\alpha_2)$, the MSE was estimated using 
$1000$ independently generated samples of $I_0$ and we allowed a total of $200$ such tests.
We found respective optimal values for $(\alpha_1,\alpha_2)$ of  $(0.53,0.36)$ and $(0.38,0.45)$.
By comparison, the asymptotically optimal
setup suggested by our theoretical results is $(0.5,0.5)$,
showing that the finite budget optimal allocation can vary significantly
from the asymptotically optimal solution and that it does so in a problem dependent manner.

As a byproduct, BOPP also produced Gaussian process approximations to the log MSE variations,
as shown in Figure~\ref{fig:multi-tau}.  We see that the two problems lead to distinct performance variations.
  Based on the (unshown) uncertainty estimates of these Gaussian processes, we believe these 
  approximations are a close representation of the truth.

\vspace{-4pt}

\section{Applications}
\label{sec:app}

\subsection{Bayesian Experimental Design}
\label{sec:bo-design}

In this section, we show how our results can be used to derive an improved estimator for the problem of
Bayesian experimental design (BED) in the case where the experiment outputs are discrete.  
A summary of our approach is provided here, 
with full details
provided in Appendix~\ref{sec:exp-design}.

Bayesian experimental design provides a framework for designing experiments in a manner that is optimal from
an information-theoretic viewpoint \citep{chaloner1995bayesian,sebastiani2000maximum}.  Given a prior
$p(\theta)$ on parameters $\theta$ and a corresponding likelihood $p(y|\theta,d)$ for experiment 
outcomes $y$ given a design $d$, the Bayesian optimal design $d^*$ is given by maximizing the mutual
information between $\theta$ and $y$ defined as follows
\begin{align}
\bar{U}(d)=
\int_{\mathcal{Y}}\int_{\Theta} p(y,\theta | d)\log\left(\frac{p(\theta |y, d)}{p(\theta)}\right)d\theta dy. 
\label{eq:u_bar_main}
\end{align}
Estimating $d^*$ is challenging as $p(\theta |y, d)$ is rarely known in closed-form.  
However, appropriate algebraic manipulation shows that~\eqref{eq:u_bar_main}
is consistently estimated by
\begin{align}
\label{eq:exp-des-nmc-main}
\begin{split}
\hat{U}_{\text{NMC}}(d) =
\frac{1}{N} &\sum_{n=1}^{N} \Bigg[ \log(p(y_n | \theta_{n,0},d))  \\
&- \log \left(\frac{1}{M} \sum_{m=1}^{M}p(y_n | \theta_{n,m},d)\right) \Bigg]
\end{split}
\end{align}
where $\theta_{n,m} \sim p(\theta)$ for each $(m,n) \in \{0,\ldots,M\}\times \{1,\ldots,N\}$, 
and $y_n \sim p(y|\theta=\theta_{n,0}, d)$ for each $n \in \{1,\ldots,N\}$.  
This na\"{i}ve NMC estimator has been implicitly used by \cite{myung2013tutorial} amongst others and gives a convergence
rate of $O(1/N+1/M^2)$ as per Theorem~\ref{the:Repeat}.

\begin{figure}[t]
	\centering
	\begin{subfigure}[b]{0.23\textwidth}
		\centering
		\includegraphics[height=0.88\textwidth]{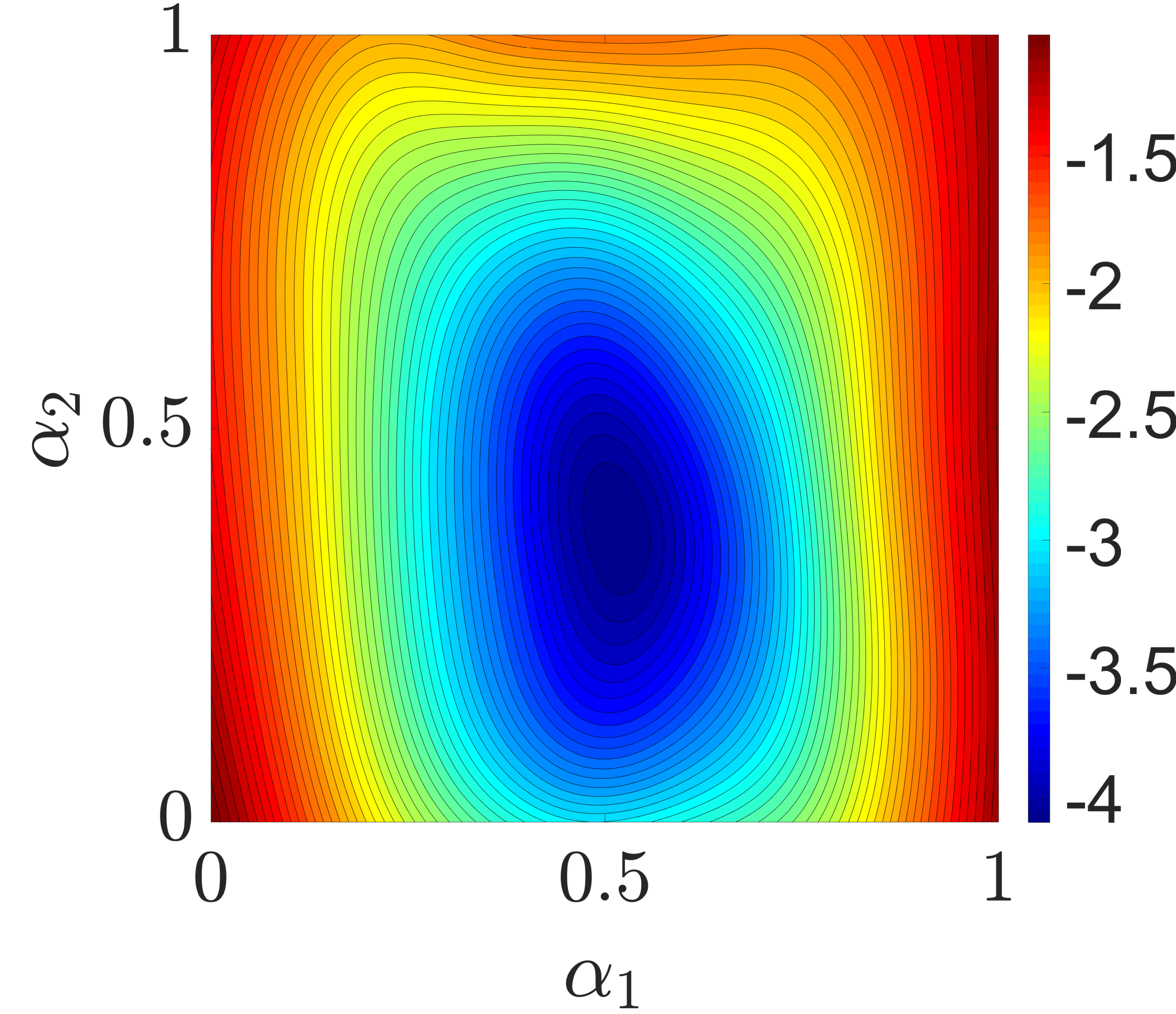}
		\caption{Original}
	\end{subfigure}
	~
	\begin{subfigure}[b]{0.23\textwidth}
		\centering
		\includegraphics[height=0.88\textwidth]{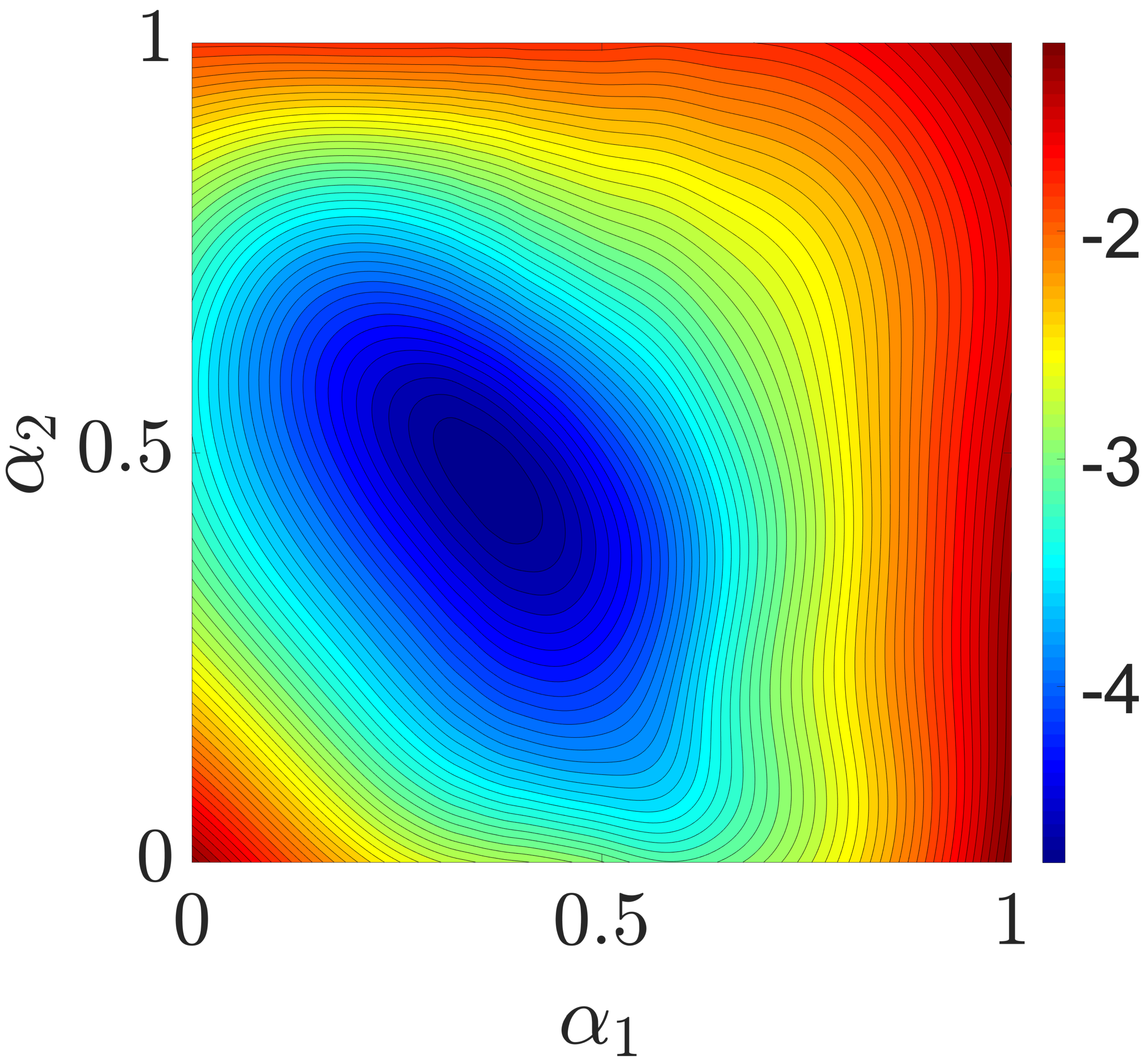}
		\caption{Modification}
	\end{subfigure}
	\vspace{-6pt}
	\caption{Contour plots of $\log_{10} \left(\E \left[(I_0-\gamma_0)^2\right]\right)$ produced by BOPP
		for different allocations of the sample budget $T=10^6$ for the
		problem shown in~\eqref{eq:repeat-nest} and its modified variant.  \vspace{-14pt}
		\label{fig:multi-tau}}
\end{figure}

When $y$ can only take on finitely many realizations $y_{1},\dots,y_c$, we use the ideas introduced
in Section~\ref{sec:discrete} to derive the following improved estimator
\begin{align}
\label{eq:u_bar_MC_main}
&\hat{U}_{R}(d) = \frac{1}{N} \sum_{n=1}^{N} \sum_{c=1}^{C} p(y_c | \theta_n, d) \log\left(p(y_c | \theta_n, d)\right) \\
&- \sum_{c=1}^{C} \left[\left(\frac{1}{N}\sum_{n=1}^{N} p(y_c | \theta_n, d)\right) \log \left(\frac{1}{N} \sum_{n=1}^{N} p(y_c | \theta_n, d)\right) \right] \nonumber
\end{align}
where $\theta_n \sim p(\theta), \forall n \in \{1,\dots,N\}$.  
As $C$ is fixed,~\eqref{eq:u_bar_MC_main} converges at the standard MC error rate of $O(1/N)$.  This constitutes
a substantially faster convergence as~\eqref{eq:exp-des-nmc-main} requires a total of $MN$
samples compared to $N$ for~\eqref{eq:u_bar_MC_main}.  

We finish by showing that the theoretical advantages of this reformulation also lead to empirical gains.  
For this we consider a model used in psychology experiments introduced by \cite{vincent2016hierarchical}, details of which are given in Appendix~\ref{sec:exp-design}.  
Figure~\ref{fig:exp-conv} demonstrates that the theoretical convergence rates
are observed while results given in Appendix~\ref{sec:exp-design} show that this leads to significant practical gains 
in estimating $\bar{U}(d)$.

\begin{figure}[t]
	\centering
	\includegraphics[width=0.45\textwidth,trim={1.5cm 0 3.5cm 0},clip]{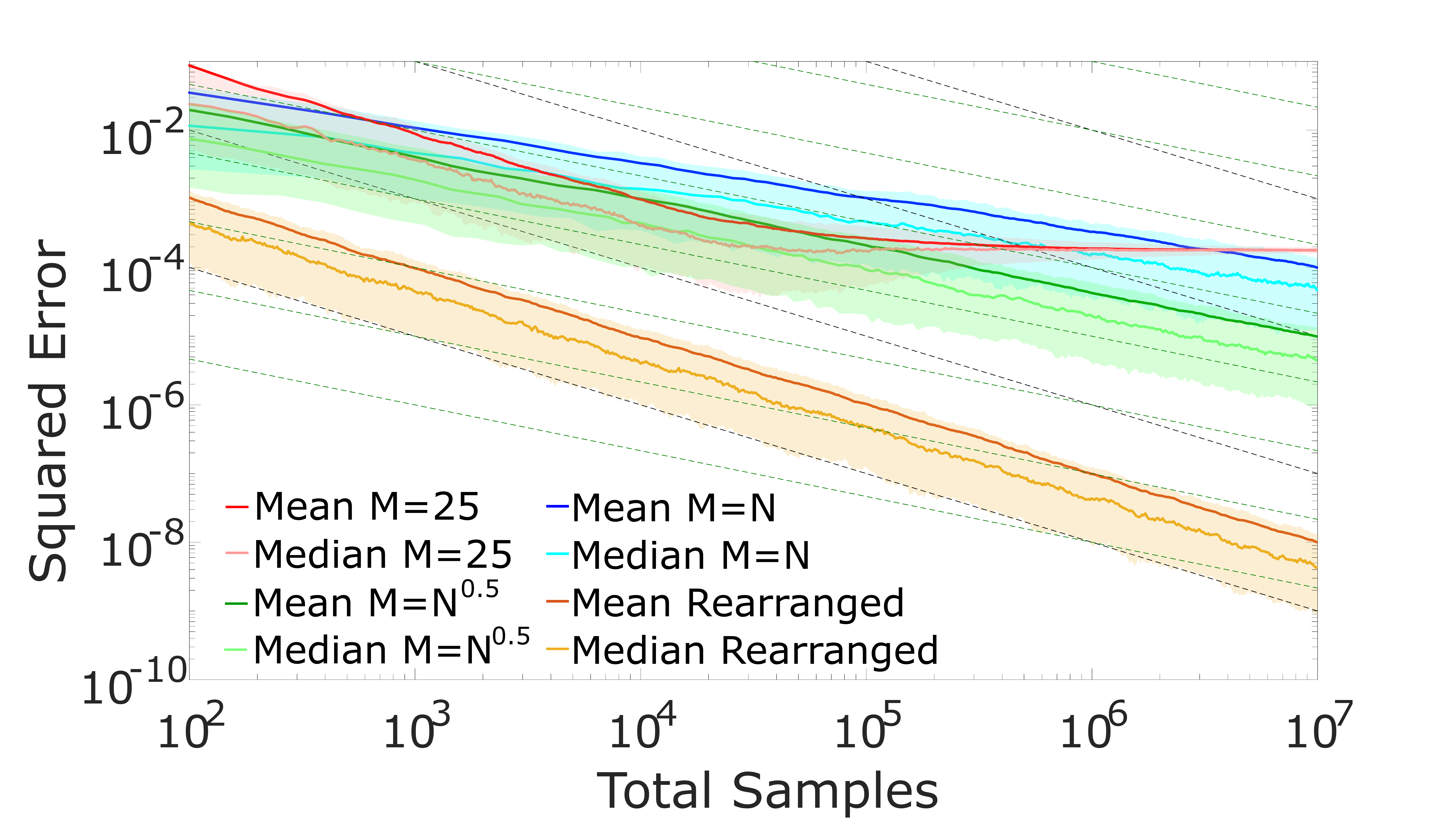}\vspace{-7pt}
	\caption{Convergence of NMC (i.e.~\eqref{eq:exp-des-nmc-main}) and our reformulated
		estimator~\eqref{eq:u_bar_MC_main} for the BED problem.
		Experimental setup and conventions are as per Figure~\ref{fig:emprical-conv}, with a
		ground truth estimate made using a single run of the reformulated
		estimator with $10^{10}$ samples. We
		see that the theoretical convergence rates are observed,
		with the advantages of the reformulated estimator particularly pronounced.\label{fig:exp-conv} \vspace{-20pt}}
\end{figure}

\subsection{Variational Autoencoders}

To give another example of the applicability of our results, we now use 
Theorem~\ref{the:Repeat} to directly derive a new result for the 
importance weighted autoencoder (IWAE)~\citep{burda2015importance}.  Both the IWAE
and the standard variational autoencoder (VAE)~\citep{kingma2013auto} use 
lower bounds on the model evidence as objectives for train deep generative models and 
employ estimators of the form
\begin{align}
	\label{eq:iwae}
	I_{N,M} = \frac{1}{N}  \sum_{n=1}^{N}  \log \left(\frac{1}{M} \sum_{m=1}^{M} w_{n,m} (\theta)\right)
\end{align}
for some given $\theta$ upon which the random $w_{n,m} (\theta)$ depend.  The IWAE sets $N=1$ and the
VAE $M=1$.  We can view~\eqref{eq:iwae} as a (biased) NMC estimator for 
the evidence $\log \E \left[w_{1,1} (\theta)\right]$, which is
the target one actually wishes to optimize (for the generative network).  We
can now assess the MSE of this biased estimator using~\eqref{eq:cont-single},
noting that this is a special case where $\varsigma_0^2=0$,
giving 
$\E \left[\left(I_{N,M}-I\right)^2\right] \le \frac{C_0 ^2 \varsigma_1^4}{4 M^2}\left(1+\frac{1}{N}\right)
+\frac{K_{0}^2 \varsigma_1^2}{NM}+\frac{C_0 K_0 \varsigma_1^3}{N M^{3/2}}+O(\frac{1}{M^3})$. For a fixed
budget $T=NM$ this becomes $O\left(\frac{1}{M^2}+\frac{1}{T}+\frac{1}{T\sqrt{M}}\right)$.  Given $T$ is fixed, we thus see that 
the higher $M$ is, the lower the error bound.  Therefore,
the lowest MSE is achieved by setting $N=1$ and $M=T$, as is done by the IWAE.  
As we show in~\citet{rainforth2018tighter}, these results further carry over to the reparameterized derivative
estimates $\nabla_{\theta} I_{N,M}$.

\subsection{Nesting Probabilistic Programs}

Probabilistic programming systems (PPSs)
\citep{goodman2008church,wood2014new} provide a strong motivation for the study of NMC methods
because many PPSs allow for
arbitrary nesting of models (or queries, as they are known in the PPS literature), such that it is easy to 
define and run nested inference problems, including cases with multiple layers of
nesting~\citep{stuhlmuller2012dynamic,stuhlmuller2014reasoning}.
Though this ability to nest queries has started to be exploited in application-specific work
\citep{ouyang2016practical,le2016nested}, the resulting nested inference problems fall outside the
scope of conventional convergence proofs and so the statistical validity of the underlying inference
engines has previously been an open question in the field.

As we show in~\citet{rainforth2017thesis,rainforth2017nestpp}, the results presented here
can be brought to bear on assessing the relative correctness
of the different ways PPSs allow model nesting.  
In particular, the correctness of sampling from the conditional distribution of one query within another
follows from Theorem~\ref{the:Repeat},
but only if the computation for each call to the inner query increases the more times that query is
called.  This requirement is not satisfied by current systems.
Meanwhile, Theorem~\ref{the:prod} can be used to the assert 
that observing the output of one query inside another leads to convergence at the standard MC rate, provided
that the computation of the inner query instead remains fixed.

\section{Conclusions}

We have introduced a formal framework for NMC estimation and shown that
it can be used to yield a consistent estimator for problems that cannot be tackled
with conventional MC alone.  We have derived convergence rates and considered what minimal continuity assumptions
are required for convergence.
However, we have also highlighted a number of potential pitfalls for
na\"{i}ve application of NMC and provided guidelines for avoiding these, e.g. highlighting the
importance of increasing the number of samples in both the inner  and the outer estimators to ensure convergence.
We have further introduced techniques for converting certain classes of NMC problems to conventional
MC ones, providing improved convergence rates.
Our work has implications throughout machine learning and we hope it will provide
the foundations for exploring this plethora of applications.

\clearpage

\begin{appendices}
	\onecolumn
%
%


\section{Proof of Theorem~\ref{the:Rate} - Simplified Convergence Rate}
\label{sec:app:rate_single}

\theRate*

\begin{proof}
	Though the Theorem follows directly from Theorem~\ref{the:Repeat}, we also provide the following proof 
	for this simplified case to provide a more accessible intuition behind the result.  Note that the approach
	taken is distinct from the proof of Theorem~\ref{the:Repeat}.
	
	Using Minkowski's inequality, we can bound the mean squared error of $I_{N,M}$ by
	\begin{align} \label{eq:mse-bound-app}\begin{split}
	&\E[(I-I_{N,M})^2] = \norm{I - I_{N,M}}_2^2 \leq {U}^2 + {V}^2 + 2 U V \leq 2\left(U^2 + V^2\right)
	\end{split} \\
	\nonumber
		\text{where} \quad  U &= \norm{I - \frac{1}{N} \sum_{n=1}^N f(y_n, \gamma(y_n))}_2
		\quad \text{and} \quad
		V = \norm{\frac{1}{N} \sum_{n=1}^N f(y_n, \gamma(y_n)) - I_{N,M}}_2.
	\end{align}
	We see immediately that $U = O\left(1 / \sqrt{N}\right)$, since $\frac{1}{N} \sum_{n=1}^N
	f(y_n, \gamma(y_n))$ is a MC estimator for $I$, noting our assumption that
	$f(y_n, \gamma(y_n)) \in L^2$. For the second term,
	\begin{eqnarray*}
		V &=& \norm{\frac{1}{N} \sum_{n=1}^N f(y_n, (\hat{\gamma}_M)_n) - f(y_n, \gamma(y_n))}_2 \\
		&\leq&\frac{1}{N} \sum_{n=1}^{N} \norm{f(y_n, (\hat{\gamma}_M)_n) - f(y_n,
			\gamma(y_n))}_2
		\leq \frac{1}{N} \sum_{n=1}^N K \norm{(\hat{\gamma}_M)_n - \gamma(y_n)}_2
	\end{eqnarray*}
	where $K$ is a fixed constant, again by Minkowski and using the assumption that $f$ is
	Lipschitz. We can rewrite
	\[
	\norm{(\hat{\gamma}_M)_n - \gamma(y_n)}_2^2
	= \E \left[ \E \left[ ((\hat{\gamma}_M)_n - \gamma(y_n))^2 \middle| y_n \right]\right].
	\]
	by the tower property of conditional expectation, and note that
	\begin{align*}
	\E &\left[ ((\hat{\gamma}_M)_n - \gamma(y_n))^2 \middle| y_n \right]
	= \var\left(\frac{1}{M} \sum_{m=1}^M \phi(y_n, z_{n,m}) \middle| y_n \right) 
	= \frac{1}{M} \var\left(\phi(y_n, z_{n,1}) \middle| y_n \right)
	\end{align*}
	since each $z_{n,m}$ is i.i.d. and conditionally independent given $y_n$. As
	such
	\begin{align*}
	\norm{(\hat{\gamma}_M)_n - \gamma(y_n)}_2^2
	&= \frac{1}{M} \, \E \left[ \var \left(\phi(y_n, z_{n,1}) \middle| y_n \right)\right] 
	= O(1/M),
	\end{align*}
	noting that $\E \left[ \var \left(\phi(y_n, z_{n,1}) \middle| y_n \right)\right]$ is a
	finite constant by our assumption that $\phi(y_n, z_{n,m}) \in L^2$. Consequently,
	\[
	V \leq \frac{NK}{N} O\left(1/\sqrt{M}\right) = O\left(1/\sqrt{M}\right).
	\]
	Substituting these bounds for $U$ and $V$ in \eqref{eq:mse-bound-app} gives
	\begin{align*}
	\norm{I - I_{N,M}}_2^2
	&\leq 2\left(O\left(1/\sqrt{N}\right)^2 + O\left(1/\sqrt{M}\right)^2\right) 
	= O\left(1/N + 1/M\right)
	\end{align*}
	as desired.
\end{proof}

\section{The Inevitable Bias of Nested Estimation} 
\label{sec:bias}

In this section we demonstrate formally that NMC schemes must produce biased estimates of $I(f)$ for certain
functions $f$. In fact, our result applies more generally: we show that this holds for any
MC scheme that makes use of imperfect estimates $\hat{\zeta}_n$ of $\gamma(y_n)$,
either via a NMC procedure (e.g. $\hat{\zeta}_n = (\hat{\gamma}_M)_n$), or
when these inner estimates are generated by some other methods
such as a variational approximation~\citep{blei2016variational} or Bayesian
quadrature~\citep{o1991bayes}.
\vspace{-2pt}
\begin{restatable}{theorem}{theBias}
	\label{the:bias}
  Suppose that the random variables $\hat{\zeta}_n$ satisfy
    $\mathbb{P}(\hat{\zeta}_n \neq \gamma(y_n)) > 0$.
  Then we can choose $f$ such that if $y_n \sim p(y)$,
    $\E\left[\frac{1}{N} \sum_{n=1}^N f(y_n, \hat{\zeta}_n)\right] \neq I(f)$ for any
    $N$ (including the limit $N\rightarrow\infty$).
\end{restatable}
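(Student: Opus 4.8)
The plan is to prove the claim by exhibiting a single ``bad'' choice of $f$ whose bias is manifestly nonzero and, crucially, does not shrink with $N$. Since each $y_n\sim p(y)$, linearity of expectation gives
\[
\E\!\left[\frac{1}{N}\sum_{n=1}^{N} f(y_n,\hat\zeta_n)\right]=\frac{1}{N}\sum_{n=1}^{N}\E\!\left[f(y_n,\hat\zeta_n)\right]
\qquad\text{and}\qquad
I(f)=\E_{y\sim p(y)}\!\left[f(y,\gamma(y))\right],
\]
so it suffices to choose $f$ with $\E[f(y_n,\hat\zeta_n)]>I(f)$ for every $n$: the inequality then survives averaging over $n$ for every $N$, and, under the standard i.i.d.\ assumption on the pairs $(y_n,\hat\zeta_n)$, the strong law of large numbers makes the estimator converge almost surely to the same biased value $\E[f(y_1,\hat\zeta_1)]\neq I(f)$, which also covers the limit $N\to\infty$.

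Concretely, I would take $f(y,v)=h\!\left(v-\gamma(y)\right)$, where $h\colon\real\to\real$ is nonnegative with $h(0)=0$ and $h(x)>0$ for all $x\neq 0$; for definiteness the bounded, $2$-Lipschitz choice $h(x)=\min(1,x^2)$ works, and it has the bonus of showing that the bias is not an artefact of $f$ being unbounded or nonsmooth. Since $\gamma$ is a fixed measurable function, $f$ is a legitimate measurable function of $(y,v)$. By construction $f(y,\gamma(y))=h(0)=0$ for all $y$, hence $I(f)=0$; meanwhile $h\!\left(\hat\zeta_n-\gamma(y_n)\right)\ge 0$ with strict inequality on the event $\{\hat\zeta_n\neq\gamma(y_n)\}$, which has positive probability by hypothesis, so $\E[f(y_n,\hat\zeta_n)]>0=I(f)$ for each $n$. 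Combined with the first paragraph this proves the theorem, including the limiting case.

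I do not expect a genuine technical obstacle here; the only real decision is the choice of $f$, and the one subtlety is to avoid an ill-defined ``$\infty-0$'', which is why a bounded $h$ is preferable to, say, $h(x)=x^2$. If one is uncomfortable with $f$ referencing $\gamma$, the theorem can be recovered abstractly: were $\E[f(y_1,\hat\zeta_1)]=\E[f(y_1,\gamma(y_1))]$ to hold for \emph{every} bounded measurable $f$, the laws of $(y_1,\hat\zeta_1)$ and $(y_1,\gamma(y_1))$ would coincide, forcing the former to be supported on the graph of $\gamma$ and hence $\mathbb{P}(\hat\zeta_1=\gamma(y_1))=1$, contradicting the hypothesis; so some bounded measurable $f$ not involving $\gamma$ already witnesses the bias (and can be taken continuous under mild topological assumptions on $\mathcal{Y}\times\Phi$). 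Finally, in the important special case $\hat\zeta_n=(\hat\gamma_M)_n$, which is conditionally unbiased given $y_n$, one may instead apply Jensen's inequality with any strictly convex $f$ --- for instance $f(y,v)=v^2$ --- for a shorter argument.
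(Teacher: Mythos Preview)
Your proposal is correct and follows essentially the same approach as the paper: the paper takes $f(y,w)=(\gamma(y)-w)^2$, observes $I(f)=0$, and notes that $f(y_n,\hat\zeta_n)>0$ on the positive-probability event $\{\hat\zeta_n\neq\gamma(y_n)\}$, exactly as you do. Your use of the bounded $h(x)=\min(1,x^2)$ rather than $h(x)=x^2$ is a small refinement that sidesteps any integrability concern, and your additional remarks (the abstract law-identification argument and the Jensen route for unbiased $\hat\zeta_n$) go a bit beyond what the paper does here, the latter in fact anticipating the paper's separate Theorem on strictly convex/concave $f$.
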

\vspace{-12pt}
\begin{proof}
	Take $f(y, w) = (\gamma(y) - w)^2$. Then $f(y, \gamma(y)) = 0$, so that $I(f) = 0$.  On
	the other hand, $f(y_n, \hat{\zeta}_n) \geq 0$ since $f$ is non-negative.
	Moreover, $f(y_n, \hat{\zeta}_n) > 0$ on the event $\{\hat{\zeta}_n \neq \gamma(y_n)\}$.
	Since we assumed this event has nonzero probability, it follows that $\E \left[f(y_n, \hat{\zeta}_n)\right] > 0$ and hence
	\[
	\E\left[\frac{1}{N} \sum_{n=1}^N f(y_n, \hat{\zeta}_n)\right]
	= \frac{1}{N} \sum_{n=1}^N \E\left[f(y_n, \hat{\zeta}_n)\right]
	> 0 = I(f)
	\]
	which gives the required result.
\end{proof}
\vspace{-8pt}
It also follows from Jensen's inequality
that \emph{any} strictly convex or concave $f$  entails a biased estimator 
when $\hat{\zeta}_n$ is unbiased but has non-zero variance given $y_n$, e.g.
when $\hat{\zeta}_n$ is a MC estimate.  More formally we have
\begin{theorem}
	\label{the:bias-conv}	
	Suppose that $y_n \sim p(y)$ and that each $\hat{\zeta}_n$ satisfies $\E \left[\hat{\zeta}_n \middle| y_n\right] = \gamma(y_n)$.
	Define $\mathcal{A} \subseteq \mathcal{Y}$ as
	$\mathcal{A} = \left\{y \in \mathcal{Y} \;\middle|\; \mathrm{Var}\left(\hat{\zeta}_n \middle| y_n=y \right)>0\right\}$
	and assume that ~$\mathbb{P}(y_n \in \mathcal{A}) > 0$.
	Then for any $f$ that is strictly convex in its second argument,
	\[ 
	\E\left[\frac{1}{N} \sum_{n=1}^N f(y_n, \hat{\zeta}_n)\right] > I(f).
	\]
	Similarly for any $f$ that is strictly concave in its second argument, 
	\[
	\E\left[\frac{1}{N} \sum_{n=1}^N f(y_n, \hat{\zeta}_n)\right] < I(f).
	\]
\end{theorem}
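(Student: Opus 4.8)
The plan is to use Jensen's inequality conditionally on $y_n$, then integrate over the outer distribution, being careful to track where the inequality is strict. Specifically, for a fixed $y$, the function $w \mapsto f(y,w)$ is strictly convex, and since $\E[\hat{\zeta}_n \mid y_n = y] = \gamma(y)$, the conditional version of Jensen's inequality gives $\E[f(y,\hat{\zeta}_n) \mid y_n = y] \ge f(y, \gamma(y))$ for every $y$. The point is then to argue that this inequality is \emph{strict} precisely on the set $\mathcal{A}$: if $\mathrm{Var}(\hat{\zeta}_n \mid y_n = y) > 0$, then $\hat{\zeta}_n$ given $y_n=y$ is genuinely non-degenerate, so strict convexity of $f(y,\cdot)$ forces $\E[f(y,\hat{\zeta}_n)\mid y_n=y] > f(y,\gamma(y))$.

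From there, I would write $\E[f(y_n,\hat{\zeta}_n)] = \E\big[\E[f(y_n,\hat{\zeta}_n)\mid y_n]\big]$ by the tower property, and split the outer expectation according to whether $y_n \in \mathcal{A}$ or not. On $\mathcal{A}^c$ we have the (weak) inequality $\E[f(y_n,\hat{\zeta}_n)\mid y_n] \ge f(y_n,\gamma(y_n))$ holding pointwise, while on $\mathcal{A}$ the inequality is strict. Since $\mathbb{P}(y_n \in \mathcal{A}) > 0$, integrating a quantity that is everywhere $\ge f(y_n,\gamma(y_n))$ and strictly greater on a set of positive probability yields $\E[f(y_n,\hat{\zeta}_n)] > \E[f(y_n,\gamma(y_n))] = I(f)$. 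Finally, $\E\left[\frac{1}{N}\sum_{n=1}^N f(y_n,\hat{\zeta}_n)\right] = \frac{1}{N}\sum_{n=1}^N \E[f(y_n,\hat{\zeta}_n)] > I(f)$ since the $y_n$ are identically distributed. The concave case is identical with all inequalities reversed (or by applying the convex result to $-f$).

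The main obstacle — really the only subtle point — is justifying rigorously that strict convexity plus positive conditional variance gives a \emph{strict} conditional Jensen inequality, and that this strictness survives integration. For the first part I would invoke the standard fact that for a strictly convex $g$, $\E[g(X)] = g(\E[X])$ only when $X$ is a.s. constant; positive variance rules this out. For the second, one needs the elementary measure-theoretic lemma that if $h \ge 0$ a.s. and $h > 0$ on a set of positive measure then $\E[h] > 0$, applied to $h(y_n) = \E[f(y_n,\hat{\zeta}_n)\mid y_n] - f(y_n,\gamma(y_n))$. I would also note in passing that implicit integrability assumptions (e.g. $f(y_n,\hat{\zeta}_n) \in L^1$) are needed for the expectations to be well-defined, mirroring the $L^2$-type conditions used elsewhere in the paper; with those in hand the argument is short and clean.
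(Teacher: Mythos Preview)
Your proposal is correct and follows essentially the same argument as the paper: apply the tower property, use conditional Jensen's inequality for strictly convex $f(y,\cdot)$, and observe that equality in Jensen requires $\hat{\zeta}_n$ to be a.s.\ constant given $y_n$, which fails on the positive-probability event $\{y_n\in\mathcal{A}\}$. The paper's write-up is terser (it does not explicitly split into $\mathcal{A}$ and $\mathcal{A}^c$ or name the measure-theoretic lemma), but the content is the same.
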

\begin{proof}
	We prove our claim for the case that $f$ is strictly convex; our proof for the other concave case  
	is symmetrical. We have
	\begin{align*}
		\E\left[\frac{1}{N} \sum_{n=1}^N f(y_n, \hat{\zeta}_n)\right] = \E\left[f(y_1, \hat{\zeta}_1)\right]
		= \E\left[\E\left[f(y_1, \hat{\zeta}_1) \middle| y_1\right]\right] 
		\ge \E\left[f\left(y_1, \E\left[\hat{\zeta}_1 \middle| y_1\right]\right)\right] = I(f)
	\end{align*}
	where the $\ge$ is a result of Jensen's inequality on the inner expectation.  
	Since $f$ is strictly convex and therefore non-linear, equality holds if and only if
	$\hat{\zeta}_1$ is almost surely constant given $y_1$. 
	This is violated whenever $y_1 \in \mathcal{A}$, which by assumption
	has a non-zero probability of occurring.  Consequently,
	the inequality must be strict, giving the desired result.
\end{proof}

In addition to some special cases discussed in the Section~\ref{sec:special_cases},
it may still be possible to develop unbiased estimation
schemes for certain non-linear $f$ using Russian roulette sampling~\citep{lyne2015russian} or other debiasing techniques.
However, these induce their own complications: for some problems the resultant estimates
have infinite variance \citep{lyne2015russian} and as shown by \cite{jacob2015nonnegative}, there are 
no general purpose ``$f$-factories'' that produce both non-negative and
unbiased estimates for non-constant, positive output functions $f : \real \rightarrow \real^+$,
given unbiased estimates for the inputs.

\section{Proof of Theorem~\ref{the:Consistent} - ``Almost almost sure'' convergence}
\label{sec:app:consistent}

\theConsistent*

\begin{proof}
	For all $N, M$, we have by the triangle inequality that
		\begin{align*}
	\left|I_{N,M} - I\right| \leq V_{N,M} + U_N, &\quad \text{where} \\
		V_{N,M} = \left|\frac{1}{N} \sum_{n=1}^N f(y_n, \gamma(y_n)) - I_{N,M} \right| 
		\quad \text{and}& \quad
		U_N = \left|I - \frac{1}{N} \sum_{n=1}^N f(y_n, \gamma(y_n)) \right|.
	\end{align*}
	A second application of the triangle inequality then allows us to write
	\[
    V_{N,M} \leq \frac{1}{N} \sum_{n=1}^N (\epsilon_M)_n
	\]
	where we recall that $(\epsilon_M)_n = |f(y_n, \gamma(y_n)) - f(y_n, \hat{\gamma}_n)|$.
	Now, for all fixed $M$, each $(\epsilon_M)_n$ is i.i.d. Furthermore, since
        $\E
	\left[(\epsilon_M)_1\right] \to 0$ as $M \to \infty$ by our assumption and 
	$(\epsilon_M)_n$ is nonnegative, there exists some $L \in \N$
	such that $\E\left[\left|(\epsilon_M)_n \right|\right] < \infty$ for all $M \geq L$.
	Consequently, the strong law of large numbers means that as $N \to \infty$ then for all $M \geq L$
	\begin{align}
	\label{eq:epas}
	\frac{1}{N} \sum_{n=1}^N (\epsilon_M)_n \asto \E\left[(\epsilon_M)_1\right].
	\end{align}      
        For any fixed $\delta > 0$ then by repeatedly applying Egorov's theorem to each $M \geq L$,
        we can find a sequence of events 
        \[
                B_{L}, B_{L+1}, B_{L+2}, \ldots
        \]
        such that for every $M \geq L$,
        \[
                \mathbb{P}(B_{M}) < \frac{\delta}{4} \cdot \frac{1}{2^{M-L}}
        \]
        and outside of $B_{M}$, the sequence $\frac{1}{N} \sum_{n=1}^N (\epsilon_M)_n$
        converges \emph{uniformly} to $\E\left[(\epsilon_M)_1\right]$. 
        This uniform convergence (as opposed to just the piecewise convergence implied
        by~\eqref{eq:epas}) now guarantees that we can define some function
        $\tau^1_\delta : \N \to \N$ such that 
	\begin{align}
	\label{eq:ombo}
                \left|\frac{1}{M'} \sum_{n=1}^{M'} (\epsilon_M)_n(\omega) - \E\left[(\epsilon_M)_1\right]\right| < \frac{1}{M} 
        \end{align}
        for all $M \geq L$, $M' \geq \tau^1_\delta(M)$, and $\omega \not\in B_M$, remembering that $\omega$ is
        a point in our sample space.  We further have that~\eqref{eq:ombo} holds for all 
        $M \geq M_0$, $M' \geq \tau^1_\delta(M)$, and $\omega \not\in B_\delta := \bigcup_{M \geq L} B_M$.
        Consequently, for all such $M$, $M'$ and $\omega$,
        \begin{equation} 
                \label{eqn:almostsure:1}
                V_{M',M}(\omega)
                \leq
                \frac{1}{M'} \sum_{n=1}^{M'} (\epsilon_M)_n(\omega) 
                < 
                \frac{1}{M} + \E\left[(\epsilon_M)_1\right],
        \end{equation}
        while we also have
        \begin{equation} 
                \label{eqn:almostsure:2}
                \mathbb{P}(B_\delta) 
                \leq 
                \sum_{M \geq L} \mathbb{P}\left(B_M\right)
                < \sum_{M \geq L} \frac{\delta}{4} \cdot \frac{1}{2^{M-L}}
                = \frac{\delta}{2}.
        \end{equation}
      To complete the proof, we must remove the dependence of $U_N$ on $N$ as well. This is
	straightforward once we observe that $U_N \asto 0$ as $N \to \infty$ by the strong law of 
        large numbers. So, by Egorov's theorem again, there exists an event $C_\delta$ such that
        \begin{equation} 
                \label{eqn:almostsure:3}
                \mathbb{P}(C_\delta) < \frac{\delta}{2}
        \end{equation}
        and outside of $C_\delta$, the sequence $U_N$ converges uniformly to $0$. This uniform convergence,
        in turn, ensures the existence of a function $\tau^2_\delta : \N \to \N$ such that
        \begin{equation} 
                \label{eqn:almostsure:4}
                U_{M'}(\omega) < \frac{1}{M} 
        \end{equation}
        for all $M \in \N$, $M' \geq \tau^2_\delta(M)$, and $\omega \not\in C_\delta$.
	
        We can now define $\tau_\delta(M) = \max(\tau^1_\delta(M), \tau^2_\delta(M))$, and 
        $A_\delta = B_\delta \cup C_\delta$. By inequalities in \eqref{eqn:almostsure:2}
        and \eqref{eqn:almostsure:3},
        \[ 
                \mathbb{P}(A_\delta) \leq \mathbb{P}(B_\delta) + \mathbb{P}(C_\delta) < \delta.  
        \] 
        Also, by the inequalities in \eqref{eqn:almostsure:1} and \eqref{eqn:almostsure:4},
	\[ 
                \left|I - I_{\tau_\delta(M),M}(\omega)\right| 
                \leq
                V_{\tau_\delta(M),M}(\omega) 
                +
                U_{\tau_\delta(M)}(\omega) 
                \leq 
                \frac{1}{M} + \frac{1}{M} + \E\left[(\epsilon_M)_1\right]
	\]
        for all $M \geq L$ and $\omega \notin A_\delta$. Since $\E\left[(\epsilon_M)_1\right] \to 0$, we have here that 
        $I_{\tau_\delta(M),M}(\omega) \to I$ as desired.
\end{proof}

\section{Proof of Theorem~\ref{the:Repeat} - Convergence for Repeated Nesting}
\label{sec:app:repeat}
%

\theRepeat*
\begin{proof}
As this is a long and involved proof, we start by defining a number of useful terms that will be 
used throughout.  Unless otherwise stated, these definitions hold for all $k \in \left\{0,\dots,D\right\}$.
Note that most of these terms implicitly depend on the number of
samples $N_0,N_1,\dots,N_D$.  However, $s_k$, $\zeta_{d,k}$, and $\varsigma_k$ do not
and are thus constants for a particular problem.
\begin{align}
\intertext{$E_k \left(y^{(0:k-1)}\right)$ is the MSE of the estimator at depth $k$ given $y^{(0:k-1)}$}
E_k \left(y^{(0:k-1)}\right) &:= \E \left[\left(I_{k}\left(y^{(0:k-1)}\right)-
\gamma_{k}\left(y^{(0:k-1)}\right)\right)^2 \middle| y^{(0:k-1)}\right]
\displaybreak[0] \\
\intertext{$\bar{f}_{k} \left(y^{(0:k-1)}\right)$ is the expected value of the estimate at depth
	$k$, or equivalently the expected function output using the estimate of the layer below}
\begin{split}
\bar{f}_{k} \left(y^{(0:k-1)}\right) &:=\E\left[I_{k}\left(y^{(0:k-1)}\right) \middle| y^{(0:k-1)}\right] \;\; \forall k\in\{1,\dots,D-1\} \\
&=\E\left[f_k\left(y^{(0:k)},I_{k+1}\left(y^{(0:k)}\right)\right)
\middle|  y^{(0:k-1)}\right] 
\end{split}
\displaybreak[0] \\ 
\intertext{$v_k^2 \left(y^{(0:k-1)} \right)$ is the variance of the estimator at depth $k$}
    \begin{split}
    	v_k^2 \left(y^{(0:k-1)} \right) &:= 
    	\text{Var}\left[I_{k}\left(y^{(0:k-1)}\right) \middle| y^{(0:k-1)}\right] \\
    	&= \E\left[\left(I_{k}\left(y^{(0:k-1)}\right)- \bar{f}_k 
    	\left(y^{(0:k-1)}\right) \right)^2 \middle| y^{(0:k-1)}\right]
    \end{split}
\displaybreak[0]   \\ 
\intertext{$\beta_k \left(y^{(0:k-1)} \right)$ is the bias of the estimator at depth $k$}
   \begin{split}
   	\label{eq:bias-def}
   	\beta_k \left(y^{(0:k-1)} \right) &:= 
   	\E  \left[I_{k}\left(y^{(0:k-1)}\right)-
   	\gamma_{k}\left(y^{(0:k-1)}\right) \middle| y^{(0:k-1)}\right] \\
   	&=\bar{f}_{k} \left(y^{(0:k-1)}\right)-\gamma_{k}\left(y^{(0:k-1)}\right) \\
   	&=
   	\E \left[f_k\left(y^{(0:k)}, I_{k+1}\left(y^{(0:k)}\right)\right)
   	- f_k\left(y^{(0:k)}, \gamma_{k+1}\left(y^{(0:k)}\right)\right)
   	\middle|  y^{(0:k-1)} \right]
   \end{split}
\displaybreak[0] \\ 
\intertext{$s_k^2 \left(y^{(0:k-1)}\right)$ is the variance at depth $k$ of the true function output}
s_k^2 \left(y^{(0:k-1)}\right) &:= \E \left[\left(f_k\left(y^{(0:k)},\gamma_{k+1}
\left(y^{(0:k)}\right) \right)-\gamma_k\left(y^{(0:k-1)}\right)\right)^2 \middle|
y^{(0:k-1)}	\right]
\\ \displaybreak[0]
s_D^2 \left(y^{(0:D-1)}\right) &:= \E \left[\left(f_D\left(y^{(0:D)}\right)-\gamma_D\left(y^{(0:D)}\right)\right)^2 \middle| y^{(0:D-1)}	\right]
\\
\intertext{$\zeta_{d,k}^2\left(y^{(0:k-1)}\right)$ is expectation of $s_d^2 \left(y^{(0:d-1)}\right)$ over 
	$y^{(k:d-1)}$}
\begin{split}
	\zeta_{d,k}^2\left(y^{(0:k-1)}\right) &:= 
	\E \left[ s_{d}^2 \left(y^{(0:d-1)}\right) \middle|
	y^{(0:k-1)}\right] \\
	&=\E \left[\left(f_d\left(y^{(0:d)},\gamma_{d+1}
	\left(y^{(0:d)}\right) \right)-\gamma_d\left(y^{(0:d-1)}\right)\right)^2 \middle|
	y^{(0:k-1)}	\right]
\end{split}
\displaybreak[0] \\
\intertext{$\varsigma_{k}^2 $ is expectation of $s_k^2 \left(y^{(0:k-1)}\right)$
	over all $y^{(0:k-1)}$}
\varsigma_{k}^2  
&:=\zeta_{k,0}^2=\E \left[\left(f_k\left(y^{(0:k)},\gamma_{k+1}
\left(y^{(0:k)}\right) \right)-\gamma_k\left(y^{(0:k-1)}\right)\right)^2\right]
 \displaybreak[0] \\
 \intertext{$A_k \left(y^{(0:k-1)}\right)$ is the MSE in the function output from
 	using the estimate of the next layer, rather than the true value $\gamma_{k+1}\left(y^{(0:k)}\right)$,
 	we fix $A_D:=0$}
A_k \left(y^{(0:k-1)}\right):=& \E \left[\left(f_k\left(y^{(0:k)}, I_{k+1}\left(y^{(0:k)}\right)\right)
- f_k\left(y^{(0:k)}, \gamma_{k+1}\left(y^{(0:k)}\right)\right)\right)^2
\middle|  y^{(0:k-1)} \right] 
\displaybreak[0]\\
 \intertext{$\sigma_k^2 \left(y^{(0:k-1)}\right)$
 	is the variance in the function output from using the estimate of the next layer,
 	we fix $\sigma_D^2 \left(y^{0:D-1}\right) := s_D^2 \left(y^{0:D-1}\right)$ }
 \begin{split}
 \sigma_k^2 \left(y^{(0:k-1)}\right) &:= 
 \text{Var}\left[f_k\left(y^{(0:k)},I_{k+1}\left(y^{(0:k)}\right)\right) \middle| y^{(0:k-1)}\right] \\
 &= \E\left[\left(f_k\left(y^{(0:k)},I_{k+1}\left(y^{(0:k)}\right)\right)
 - \bar{f}_k 
 \left(y^{(0:k-1)}\right) \right)^2 \middle| y^{(0:k-1)}\right]
 \end{split}
\displaybreak[0]  \\ 
 \intertext{$\omega_k\left(y^{(0:k-1)} \right)$ is the expectation over $y^{(k)}$ of the MSE for the next layer,
 	we fix $\omega_D \left(y^{(0:D-1)}\right) := 0$}
      \begin{split}
      \omega_k\left(y^{(0:k-1)} \right) &:=  \E \left[E_{k+1} 
      \left(y^{(0:k)}\right) \middle|  y^{(0:k-1)} \right] \\
      &=\E \left[
      \left(I_{k+1}\left(y^{(0:k)}\right) - \gamma_{k+1}\left(y^{(0:k)}\right)\right)^2
      \middle|  y^{(0:k-1)} \right] 
      \end{split}
  \displaybreak[0]
 \\
  \intertext{$\lambda_k \left(y^{(0:k-1)} \right)$ is the expectation over $y^{(k)}$ of the magnitude of the bias for the next layer,
  	we fix $\lambda_D \left(y^{(0:D-1)}\right) := 0$ and note that $\lambda_{D-1} \left(y^{(0:D-2)}\right) := 0$ also
  	as the innermost layer is an unbiased}
    \begin{split}
   \lambda_k \left(y^{(0:k-1)} \right) &:= \E \left[\left|\beta_{k+1} 
   \left(y^{(0:k)}\right) \right| \Bigg|   y^{(0:k-1)} \right]\\
   &=
   \E \left[ \left|\E \left[
   \left(I_{k+1}\left(y^{(0:k)}\right) - \gamma_{k+1}\left(y^{(0:k)}\right)\right)
   \middle|  y^{(0:k)} \right] \right| \Bigg|  y^{(0:k-1)} \right]
   \end{split}
\end{align}

\vspace{5pt}
\noindent\textbf{\large Lipschitz Continuous Case}
\vspace{5pt}

\noindent Given these definitions, we start by breaking the error down into a variance and bias term.  
Using the standard bias-variance decomposition we have
\begin{align}
E_k \left(y^{(0:k-1)}\right) &= \E \left[\left(I_{k}\left(y^{(0:k-1)}\right)-
\gamma_{k}\left(y^{(0:k-1)}\right)\right)^2 \middle| y^{(0:k-1)}\right]
\nonumber \\
&=v_k^2 \left(y^{(0:k-1)} \right)
+\left(\beta_k \left(y^{(0:k-1)} \right)\right)^2 \label{eq:bias-var-decomp}
\end{align}
It is immediately clear from its definition in \eqref{eq:bias-def} that the bias term
$\left(\beta_k \left(y^{(0:k-1)} \right)\right)^2$ is independent of 
$N_0$.  On the other hand, we will show later that the
dominant components of the variance term for large $N_{0:D}$ depend only
on $N_0$.  We can thus think of increasing $N_0$ as reducing the variance of
the estimator and increasing $N_{1:D}$ as reducing the bias.

We first consider the variance term
\begin{align*}
v_k^2 \left(y^{(0:k-1)} \right) &= \E \left[\left(
\frac{1}{N_k} \sum_{n=1}^{N_k} f_k\left(y^{(0:k)}_n, I_{k+1}\left(y^{(0:k)}_n\right)\right)-
\bar{f}_k 
\left(y^{(0:k-1)}\right) \right)^2 \middle| y^{(0:k-1)}\right]   \displaybreak[0] \\
&=\frac{1}{N_k} \E \left[\left(
f_k\left(y^{(0:k)}, I_{k+1}\left(y^{(0:k)}\right)\right)-
\bar{f}_k 
\left(y^{(0:k-1)}\right) \right)^2 \middle| y^{(0:k-1)}\right]
\end{align*}
with the equality following because the $y_n^{(0:k)}$ being drawn i.i.d. and the
expectation of each
$f_k\left(y^{(0:k)}, I_{k+1}\left(y^{(0:k)}\right)\right)$ equaling $\bar{f}_k 
\left(y^{(0:k-1)}\right)$ means that all the cross terms are zero.
By the definition of $\sigma_k^2$ we now have
\begin{align}
	\label{eq:lln}
v_k^2 \left(y^{(0:k-1)} \right)  &= \frac{\sigma_k^2 \left(y^{(0:k-1)} \right)}{N_k}.
\end{align}
By using Minkowski's inequality and the definition of $A_k$ it also follows that
\begin{align}
\sigma_k & \left(y^{(0:k-1)} \right) \le 
\left(A_k \left(y^{(0:k-1)}\right)\right)^{\frac{1}{2}}+
\left(\E \left[\left(
f_k\left(y^{(0:k)}, \gamma_{k+1}\left(y^{(0:k)}\right)\right)
-\bar{f}_k\left(y^{(0:k-1)}\right)\right)^2 \middle| y^{(0:k-1)} \right] \right)^{\frac{1}{2}}.\label{eq:sigle1}
\end{align}
Using a bias-variance decomposition on the second term above and noting that 
$s_k^2 \left(y^{(0:k-1)} \right)$ and $\bar{f}_k\left(y^{(0:k-1)}\right)-\beta_k \left(y^{(0:k-1)} \right)$
are respectively the variance and expectation of $f_k\left(y^{(0:k)}, \gamma_{k+1}\left(y^{(0:k)}\right)\right)$,
we can rearrange the right-hand size of~\eqref{eq:sigle1} to give
\begin{align}
\sigma_k \left(y^{(0:k-1)} \right) &\le\left(A_k \left(y^{(0:k-1)}\right)\right)^{\frac{1}{2}}
+\left(s_k^2 \left(y^{(0:k-1)} \right) +
\left(\beta_k \left(y^{(0:k-1)} \right)\right)^2 
 \right)^{\frac{1}{2}}. \label{eq:sigma_bound}
\end{align}
Here $s_k^2$ is independent of the number of samples used at any level of the estimate,
while $A_k$ and $\beta_k^2$ are independent of $N_d \; \forall d\le k$.
Now by Jensen's inequality, we have that
\begin{align}
\label{eq:AleB}
\left(\beta_k \left(y^{(0:k-1)}\right)\right)^2 \le
A_k \left(y^{(0:k-1)}\right)
\end{align}
noting that the only difference in the definition of $\left(\beta_k \left(y^{(0:k-1)}\right)\right)^2$
and $A_k \left(y^{(0:k-1)}\right)$ is
whether the squaring occurs inside or outside the expectation.
Therefore, presuming that $A_k$
does not increase with $N_d  \; \forall d>k$, neither will $\sigma_k^2 \left(y^{(0:k-1)} \right)$, and so
the variance term will converge to zero with rate $O(1/N_k)$.  
Further, if ${A_k}\rightarrow 0$ as $N_{k+1},\dots,N_D \rightarrow \infty$,
then for a large number of inner samples $\sigma_k^2 \rightarrow s_k^2$ and thus we will have
$ v_k^2 \left(y^{(0:k-1)} \right) \le \frac{s_k^2}{N_k} +
O\left(\epsilon\right)$ where $O\left(\epsilon\right)$ represents higher order
terms that are dominated in the limit $N_k,\dots,N_D \rightarrow \infty$.
Provided this holds, we will also, therefore, have that
\begin{align}
\label{eq:E_decomp}
E_k \left(y^{(0:k-1)}\right) 
=\frac{\sigma_k^2 \left(y^{(0:k-1)}\right)}{N_k}+ \beta_k^2 \left(y^{(0:k-1)} \right)
&=\frac{s_k^2 \left(y^{(0:k-1)}\right)}{N_k}+ \beta_k^2 \left(y^{(0:k-1)} \right) +O(\epsilon).
\end{align}

We now show that Lipschitz continuity is sufficient for ${A_k}\rightarrow0$ and derive a
concrete bound on the variance by bounding ${A_k}$.  By definition of Lipschitz continuity,
we have that
\begin{align}
\left(A_k \left(y^{(0:k-1)}\right)\right)^{\frac{1}{2}} &\le
\left(\E \left[K_k^2 \left(I_{k+1}\left(y^{(0:k)}\right)-
 \gamma_{k+1}\left(y^{(0:k)}\right)\right)^2\middle| y^{(0:k-1)}
 \right] \right)^{\frac{1}{2}} \nonumber \\
 &= K_k \left(\omega_k \left(y^{(0:k-1)}\right)\right)^{\frac{1}{2}}
\end{align}
where we remember that $\omega_k \left(y^{(0:k-1)}\right) 
=\E \left[E_{k+1} 
\left(y^{(0:k)}\right) \middle|  y^{(0:k-1)} \right]$ is the expected MSE
of the next level estimator.  Once we also have an expression for the 
bias, we will thus be able to use this bound on $A_k$ along with~\eqref{eq:bias-var-decomp},~\eqref{eq:lln},
and~\eqref{eq:sigma_bound} to inductively derive a bound on the error.

For the case where we only assume Lipschitz continuity then we will simply
use the bound on the bias given by~\eqref{eq:AleB} leading to
\begin{align}
 E_k &\left(y^{(0:k-1)}\right) 
\le \frac{\sigma_k^2 \left(y^{(0:k-1)}\right)}{N_k} + A_k \left(y^{(0:k-1)}\right) \displaybreak[0] \\
&\le \frac{s_k^2 \left(y^{(0:k-1)}\right) +
2A_k \left(y^{(0:k-1)}\right)
+2\left(A_k \left(y^{(0:k-1)}\right)\right)^{\frac{1}{2}}
\left(s_k^2 \left(y^{(0:k-1)}\right) + A_k \left(y^{(0:k-1)}\right)\right)^{\frac{1}{2}}}
{N_k} +A_k \left(y^{(0:k-1)}\right)\nonumber \displaybreak[0] \\
&= \frac{s_k^2 \left(y^{(0:k-1)}\right) +
	2K_k^2 \omega_k \left(y^{(0:k-1)}\right)}{N_k} +K_k^2 \omega_k \left(y^{(0:k-1)}\right)\nonumber\\
&\phantom{==} 	+\frac{2K_k\left(\omega_k \left(y^{(0:k-1)}\right)\right)^{\frac{1}{2}}
	\left(s_k^2 \left(y^{(0:k-1)}\right) +  K_k^2 \omega_k \left(y^{(0:k-1)}\right)\right)^{\frac{1}{2}}}
{N_k}\nonumber \displaybreak[0] \displaybreak[0]\\
&\le \frac{s_k^2 \left(y^{(0:k-1)}\right) +
	4 K_k^2 \omega_k \left(y^{(0:k-1)}\right)
	+2 K_k \left(\omega_k \left(y^{(0:k-1)}\right)\right)^{\frac{1}{2}}
	s_k \left(y^{(0:k-1)}\right)}{N_k} +K_k^2 \omega_k \left(y^{(0:k-1)}\right)
\label{eq:general-bound-lip}
\end{align}
which fully defines a bound on conditional the variance of one layer given the mean squared error of the layer below.
In particular as $\omega_D \left(y^{(0:D-1)}\right) = 0$ we now have
\[
E_D \left(y^{(0:D-1)}\right) \le \frac{s_D^2 \left(y^{(0:D-1)}\right)}{N_D} = 
\frac{\E \left[\left(f_D\left(y^{(0:D)}\right)-\gamma_D\left(y^{(0:D)}\right)\right)^2 \middle| y^{(0:D-1)}	\right]}{N_D}
\]
which is the standard error for Monte Carlo convergence.  
We
further have 
\[
\omega_{D-1} \left(y^{(0:D-2)}\right) = 
\E \left[E_{D} 
\left(y^{(0:D-1)}\right) \middle|  y^{(0:D-2)} \right]
=
\frac{\zeta^2_{D,D-1}
	\left(y^{(0:D-2)}\right) }{N_D}.
\]
and thus
\begin{align}
\begin{split}
E_{D-1} \left(y^{(0:D-2)}\right) \le&  \frac{s_{D-1}^2 \left(y^{(0:D-2)}\right)}{N_{D-1}} +
	\frac{4 K_{D-1}^2 \zeta^2_{D,D-1}
		\left(y^{(0:D-2)}\right)}{N_D N_{D-1}} \\
&+ \frac{2 K_{D-1}s_{D-1} \left(y^{(0:D-2)}\right)
		\zeta_{D,D-1}
		\left(y^{(0:D-2)}\right)}{N_{D-1} \sqrt{N_D}}
+\frac{K_{D-1}^2 \zeta^2_{D,D-1}\left(y^{(0:D-2)}\right)}{N_D}.
\end{split}
\end{align}
This leads to the following result for the single nesting case
\begin{align}
E_0 \le \frac{\varsigma^2_0}{N_0}+\frac{4 K_{0}^2 \varsigma_1^2}{N_0 N_{1}}
+\frac{2 K_{0}\varsigma_{0} \varsigma_1}{N_{0} \sqrt{N_1}}+\frac{K_0 ^2 \varsigma_1^2}{N_1}
\end{align}
$\approx \frac{\varsigma^2_0}{N_0}+\frac{K_0 ^2 \varsigma_1^2}{N_1} = O\left(\frac{1}{N_0}+\frac{1}{N_1}\right)$
where the approximation becomes exact as $N_0,N_1 \rightarrow \infty$.
Note that there is no $O\left(\epsilon\right)$ term as this bound is exact
in the finite sample case.

Things quickly get messy for double nesting and beyond so we will
ignore non-dominant terms in the limit $N_0,\dots,N_D \rightarrow \infty$
and resort to using $O(\epsilon)$ for these instead. 
We first note that removing dominated terms from~\eqref{eq:general-bound-lip} gives
\begin{align}
	\label{eq:Eklip}
E_k \left(y^{(0:k-1)}\right) \le 
\frac{s_k^2}{N_k} + K_k^2 \omega_k \left(y^{(0:k-1)}\right) + O(\epsilon)
\end{align}
as $s_k^2$ does not decrease with increasing $N_{k+1:D}$ whereas the other
terms do.  We therefore also have
\begin{align}
\omega_k \left(y^{(0:k-1)}\right) &= \E \left[E_{k+1} 
\left(y^{(0:k)}\right) \middle|  y^{(0:k-1)} \right] \nonumber \\ &\le \E \left[ \frac{s_{k+1}^2\left(y^{(0:k)}\right)}
{N_{k+1}} + K^2_{k+1} \omega_{k+1} \left(y^{(0:k)}\right) \middle|
y^{(0:k-1)}\right] + O(\epsilon) \label{eq:omega-bound}
\end{align}
and therefore by recursively substituting~\eqref{eq:omega-bound} into itself
we have
\begin{align}
	\label{eq:Komega}
	K_k^2\omega_k \left(y^{(0:k-1)}\right) &\le
	\sum_{d=k+1}^{D} \frac{\left(\prod_{\ell=k}^{d-1} K_{\ell}^2\right)
		\E \left[ s_{d}^2 \left(y^{(0:d-1)}\right) \middle|
		y^{(0:k-1)}\right]}{N_{d}}+ O(\epsilon).
\end{align}
Now noting that $\zeta_{d,k}^2\left(y^{(0:k-1)}\right) =
\E \left[ s_{d}^2 \left(y^{(0:d-1)}\right) \middle| y^{(0:k-1)}\right]$, substituting
\eqref{eq:Komega} back into~\eqref{eq:Eklip} gives
\begin{align}
E_k \left(y^{(0:k-1)}\right) 
&= \frac{s_k^2\left(y^{(0:k-1)}\right)}{N_k} +
\sum_{d=k+1}^{D} \frac{\left(\prod_{\ell=k}^{d-1} K_{\ell}^2\right)
	\zeta_{d,k}^2\left(y^{(0:k-1)}\right)}{N_{d}}+ O(\epsilon).
\label{eq:final-lip-bound}
\end{align}
By definition we have that
$\zeta_{0,0}^2 = s_0^2 =\varsigma_0^2$ and $\zeta_{d,0}^2 = \varsigma_d^2$ and
as~\eqref{eq:final-lip-bound} holds in the case $k=0$, 
the mean squared error of the overall estimator is as follows
\begin{align}
\E \left[\left(I_0-\gamma_0\right)^2\right] 
= E_0 \le 
\frac{\varsigma_{0}^2}{N_0} +
\sum_{k=1}^{D} \frac{\left(\prod_{\ell=0}^{k-1} K_{\ell}^2\right)
	\varsigma_{k}^2}{N_{k}}+ O(\epsilon)
\end{align}
and we have the desired result for the Lipschitz case.

\vspace{15pt}
\noindent\textbf{\large Continuously Differentiable Case}
\vspace{5pt}

\noindent We now revisit the bound for the bias in the continuously differentiable case to
show that a tighter overall bound can be found.  We first remember that
\[
\beta_k \left(y^{(0:k-1)} \right) =
\E \left[f_k\left(y^{(0:k)}, I_{k+1}\left(y^{(0:k)}\right)\right)
- f_k\left(y^{(0:k)}, \gamma_{k+1}\left(y^{(0:k)}\right)\right)
\middle|  y^{(0:k-1)} \right].
\]
Taylor's theorem implies that for any continuously differentiable $f_k$ we can write
\begin{align}
	\begin{split}
f_k&\left(y^{(0:k)}, I_{k+1}\left(y^{(0:k)}\right)\right)
	- f_k\left(y^{(0:k)}, \gamma_{k+1}\left(y^{(0:k)}\right)\right) \\
	&\quad\quad\quad\quad= \frac{\partial f_k \left(y^{(0:k)},\gamma_{k+1}(y^{(0:k)})\right)}{\partial \gamma_{k+1}}
		\left(I_{k+1}\left(y^{(0:k)}\right) - \gamma_{k+1}\left(y^{(0:k)}\right)\right) \\
	&\quad\quad\quad\quad\phantom{=}+\frac{1}{2}
	\frac{\partial f_k^{2} \left(y^{(0:k)},\gamma_{k+1}(y^{(0:k)})\right)}{\partial \gamma_{k+1}^{2}}
	\left(I_{k+1}\left(y^{(0:k)}\right) - \gamma_{k+1}\left(y^{(0:k)}\right)\right)^{2} \\
	&\quad\quad\quad\quad\phantom{=}+h_3\left(I_{k+1}\left(y^{(0:k)}\right) \right) 
	\left(I_{k+1}\left(y^{(0:k)}\right) - \gamma_{k+1}\left(y^{(0:k)}\right)\right)^{3}
	\end{split}
\end{align}
where $\lim_{x\rightarrow\gamma_{k+1}\left(y^{(0:k)}\right)}h_3(x)=0$.  Consequently, the
last term is $O\left(\left(I_{k+1}\left(y^{(0:k)}\right) - \gamma_{k+1}\left(y^{(0:k)}\right)\right)^{3}\right)$
and so will diminish in magnitude faster than the first two terms provided that the derivatives are
bounded, which is guaranteed by our assumptions.  We will thus use $O(\epsilon)$ for this term and note that
it is dominated in the limit.

Now defining
\begin{align*}
	\delta_{\ell,k} &= \E \left[ \frac{\partial f_k^{\ell} \left(y^{(0:k)},\gamma_{k+1}(y^{(0:k)})\right)}{\partial \gamma_{k+1}^{\ell}}
	\left(I_{k+1}\left(y^{(0:k)}\right) - \gamma_{k+1}\left(y^{(0:k)}\right)\right)^{\ell}
	\middle|  y^{(0:k-1)} \right] 
\end{align*}
then we have
\begin{align*}
	\beta_k^2 \left(y^{(0:k-1)} \right)
	=&\delta_{1,k}^2+\frac{\delta_{2,k}^2}{4}+\delta_{1,k}\delta_{2,k} +O(\epsilon).
\end{align*}
By using the tower property we further have that
\begin{align*}
\delta_{\ell,k} &= \E \left[ \E \left[\frac{\partial f_k^{\ell} \left(y^{(0:k)},\gamma_{k+1}(y^{(0:k)})\right)}{\partial \gamma_{k+1}^{\ell}}
\left(I_{k+1}\left(y^{(0:k)}\right) - \gamma_{k+1}\left(y^{(0:k)}\right)\right)^{\ell}
\middle|  y^{(0:k)} \right] \middle|  y^{(0:k-1)} \right] \displaybreak[0]\\
&= \E \left[ \frac{\partial f_k^{\ell} \left(y^{(0:k)},\gamma_{k+1}(y^{(0:k)})\right)}{\partial \gamma_{k+1}^{\ell}} \E \left[
\left(I_{k+1}\left(y^{(0:k)}\right) - \gamma_{k+1}\left(y^{(0:k)}\right)\right)^{\ell}
\middle|  y^{(0:k)} \right] \middle|  y^{(0:k-1)} \right] \displaybreak[0]\\
&\le \E \left[ \left|\frac{\partial f_k^{\ell} \left(y^{(0:k)},\gamma_{k+1}(y^{(0:k)})\right)}{\partial \gamma_{k+1}^{\ell}} \right| 
\left|\E \left[
\left(I_{k+1}\left(y^{(0:k)}\right) - \gamma_{k+1}\left(y^{(0:k)}\right)\right)^{\ell}
\middle|  y^{(0:k)} \right] \right| \; \middle|  y^{(0:k-1)} \right]  \displaybreak[0]\\
&\le \left(\sup_{y^{(0)}} \left|
\frac{\partial^{\ell} f_k \left(y^{(0:k)},\gamma_{k+1}(y^{(0:k)})\right)}{\partial \gamma^{\ell}_{k+1}} \right| \right)
\E \left[\left| \E \left[
\left(I_{k+1}\left(y^{(0:k)}\right) - \gamma_{k+1}\left(y^{(0:k)}\right)\right)^{\ell}
\bigg| y^{(0:k)} \right]
\right| \; \Bigg| y^{(0:k-1)} \right].
\end{align*}
Now for the particular cases of $\ell=1$ and $\ell=2$ then the derivative terms where 
defined in the theorem and the expectations correspond respectively to our definitions of $\lambda_k$ and $\omega_k$
giving
\begin{align*}
\delta_{1,k} &\le  K_k \lambda_k\left(y^{(0:k-1)} \right) \\
\delta_{2,k} &\le C_k \omega_k\left(y^{(0:k-1)} \right)
\end{align*}
and therefore
\begin{align}
\beta_k^2 \left(y^{(0:k-1)} \right)  &\le K_k^2 \lambda_k^2 \left(y^{(0:k-1)} \right) +\frac{C_k^2}{4} \omega_k^2\left(y^{(0:k-1)} \right) 
+K_k\; C_k \; \lambda_k \left(y^{(0:k-1)} \right) \omega_k \left(y^{(0:k-1)} \right) + O(\epsilon) \nonumber\\
&=\left(K_k \lambda_k \left(y^{(0:k-1)} \right) +
\frac{C_k}{2} \omega_k\left(y^{(0:k-1)} \right) \right)^2+O(\epsilon). \label{eq:beta-k-cont}
\end{align}

Remembering~\eqref{eq:E_decomp}
we can recursively define the error bound in the same manner as the Lipschitz case.  We can immediately see that,
as $\beta_D =0$ without any nesting, we recover the bound from the Lipschitz case for the inner most estimator as expected.  
As the innermost estimator is unbiased we also have
$\lambda_{D-1} \left(y^{(0:D-2)} \right)=0$ and so
\begin{align*}
\beta_{D-1}^2 \left(y^{(0:D-2)} \right) &\le \frac{C_{D-1}^2}{4} \omega^2_{D-1} \left(y^{(0:D-2)}\right) + O(\epsilon) \\
&\le \frac{C_{D-1}^2}{4} 
\left(\E \left[\frac{s_D^2\left(y^{(0:D-1)}\right)}{N_D}
\middle|  y^{(0:D-2)} \right]\right)^2+ O(\epsilon) \\
&= \frac{C_{D-1}^2 \; \zeta^4_{D,D-1}
	\left(y^{(0:D-2)}\right) }{4N_D^2}+ O(\epsilon).
\end{align*}
Going back to our original bound on $\sigma_{D-1}^2 \left(y^{(0:D-2)}\right)$ given in~\eqref{eq:sigma_bound}
and substituting for $\beta_{D-1} \left(y^{(0:D-2)} \right)$ we now have
\begin{align}
\sigma_{D-1} \left(y^{(0:D-2)} \right) &\le\left(A_{D-1} \left(y^{(0:D-2)}\right)\right)^{\frac{1}{2}}
+\left(s_{D-1}^2 \left(y^{(0:D-2)} \right) +
\frac{C_{D-1}^2 \; \zeta^4_{D,D-1}
	\left(y^{(0:D-2)}\right) }{4N_D^2}+ O(\epsilon)
\right)^{\frac{1}{2}}.
\end{align}
There does not appear to be tighter bound for $A_{D-1} \left(y^{(0:D-2)}\right)$ than in the Lipschitz continuous case
and so using the same bound of $A_{D-1} \left(y^{(0:D-2)}\right) \le K_{D-1}^2 \zeta^2_{D,D-1}	\left(y^{(0:D-2)}\right) / N_{D-1}$ we have
\begin{align}
 E_{D-1}\left(y^{(0:D-2)}\right) \le& \frac{\sigma_{D-1}^2 \left(y^{(0:D-2)} \right) }{N_{D-1}}+
 \frac{C_{D-1}^2 \; \zeta^4_{D,D-1}	\left(y^{(0:D-2)}\right) }{4N_D^2}+ O(\epsilon) \displaybreak[0] \nonumber\\
 \begin{split}
 \le&  \frac{s_{D-1}^2 \left(y^{(0:D-2)}\right)}{N_{D-1}} +
\frac{K_{D-1}^2 \zeta^2_{D,D-1}	\left(y^{(0:D-2)}\right)}{N_D N_{D-1}}
+\frac{C_{D-1}^2 \; \zeta^4_{D,D-1}  	\left(y^{(0:D-2)}\right) }{4N_D^2}\left(1+\frac{1}{N_{D-1}}\right) \\
&+ \frac{2 K_{D-1}\zeta_{D,D-1}	\left(y^{(0:D-2)}\right)}{N_{D-1} \sqrt{N_D}} \left(s_{D-1} \left(y^{(0:D-2)}\right)^2+
\frac{C_{D-1}^2 \; \zeta^4_{D,D-1}  	\left(y^{(0:D-2)}\right) }{4N_D^2}\right)^{\frac{1}{2}}
+ O(\epsilon).
\end{split}
\end{align}
Therefore for the single nesting case, we now have
\begin{align}
E_0 \le \frac{\varsigma^2_0}{N_0}+\frac{K_{0}^2 \varsigma_1^2}{N_0 N_{1}}
+\frac{2 K_{0}\varsigma_1}{N_{0} \sqrt{N_1}}\sqrt{\varsigma_0^2+\frac{C_0 ^2 \varsigma_1^4}{4 N_1^2}}+\frac{C_0 ^2 \varsigma_1^4}{4 N_1^2}\left(1+\frac{1}{N_0}\right)
+ O\left(\frac{1}{N_1^3}\right)
\end{align}
$\approx \frac{\varsigma^2_0}{N_0}+\frac{C_0 ^2 \varsigma_1^4}{4 N_1^2} = O\left(\frac{1}{N_0}+\frac{1}{N_1^2}\right)$
where again the approximation becomes tight when $N_0,N_1 \rightarrow \infty$.
Here we have used the fact that the only $O(\epsilon)$ term comes from the Taylor expansion and is equal to $O\left(\frac{1}{N_1^3}\right)$ 
because we have $\delta_{1,D-1}=0$ and therefore
\begin{align*}
O(\epsilon)=& O\left(\delta_{2,D-1}\delta_{3,D-1}+\delta_{2,D-1}\delta_{4,D-1}\right)\\
=& O\left(\delta_{2,D-1}\E \left[
\left(I_{1}\left(y^{(0)}\right) - \gamma_{1}\left(y^{(0)}\right)\right)^{3}\middle| y^{(0)}\right]\right)
+ O\left(\delta_{2,D-1}\E \left[
\left(I_{1}\left(y^{(0)}\right) - \gamma_{1}\left(y^{(0)}\right)\right)^{4}\middle| y^{(0)}\right]\right) \\
=& O\left(\frac{1}{N_1}\E \left[
\left(\frac{1}{N_1}\sum_{n=1}^{N_1} f_1\left(y^{(0:1)}_n\right)- \E \left[f_1\left(y^{(0:1)}\right) \middle| y^{(0)}\right]\right)^{3}\middle| y^{(0)}\right]\right) \\
&+ O\left(\frac{1}{N_1}\E \left[
\left(\frac{1}{N_1}\sum_{n=1}^{N_1} f_1\left(y^{(0:1)}_n\right)- \E \left[f_1\left(y^{(0:1)}\right) \middle| y^{(0)}\right]\right)^{4}\middle| y^{(0)}\right]\right) \\
\intertext{now noting that the $y^{(0:1)}_n$ are i.i.d., and that $\E\left[f_1\left(y^{(0:1)}_1\right)- \E \left[f_1\left(y^{(0:1)}\right) \middle| y^{(0)}\right]\middle| y^{(0)}\right]=0$ such many of the cross terms when expanding the brackets are zero, we have}
=& O\left(\frac{1}{N_1^4} \sum_{n=1}^{N_1} \E \left[
\left( f_1\left(y^{(0:1)}_1\right)- \E \left[f_1\left(y^{(0:1)}\right) \middle| y^{(0)}\right]\right)^{3}\middle| y^{(0)}\right]\right) \\
&+O\left(\frac{1}{N_1^5} \sum_{n=1}^{N_1} \E \left[
\left( f_1\left(y^{(0:1)}_1\right)- \E \left[f_1\left(y^{(0:1)}\right) \middle| y^{(0)}\right]\right)^{4}\middle| y^{(0)}\right]\right) \\
&+O\left(\frac{3}{N_1^5} \sum_{n=1}^{N_1} \sum_{m=1,m\neq n}^{N_1} \left( \E \left[
\left( f_1\left(y^{(0:1)}_1\right)- \E \left[f_1\left(y^{(0:1)}\right) \middle| y^{(0)}\right]\right)^{2}\middle| y^{(0)}\right]\right)^2\right) \displaybreak[0] \\
=& O\left(\frac{1}{N_1^3}\right) + O\left(\frac{1}{N_1^4}\right)+ O\left(\frac{1}{N_1^3}\right) = O\left(\frac{1}{N_1^3}\right)
\end{align*}
as required.

Returning to calculating the bound for the repeated nesting case then by 
substituting~\eqref{eq:beta-k-cont} into~\eqref{eq:E_decomp} we have more generally
\begin{align}
	E_k \left(y^{(0:k-1)}\right) 
	&\le \frac{s_k^2 \left(y^{(0:k-1)}\right)}{N_k}+ \left(K_k \lambda_k \left(y^{(0:k-1)} \right) +
	\frac{C_k}{2} \omega_k\left(y^{(0:k-1)} \right) \right)^2+O(\epsilon). \label{eq:Ekcont}
	\end{align}
Now remembering that
$\omega_k \left(y^{(0:k-1)}\right) = \E \left[E_{k+1} 
\left(y^{(0:k)}\right) \middle|  y^{(0:k-1)} \right] $ from~\eqref{eq:E_decomp} we have
\begin{align}
	\omega_k\left(y^{(0:k-1)} \right) &= \E \left[\frac{s_{k+1}^2 \left(y^{(0:k)}\right)}{N_{k+1}} + \beta^2_{k+1} 
	\left(y^{(0:k)}\right) \middle|  y^{(0:k-1)} \right] + O(\epsilon) \nonumber \\
	&= \frac{\zeta_{k+1,k}^2}{N_{k+1}}+\E \left[\beta^2_{k+1} 
	\left(y^{(0:k)}\right) \middle|  y^{(0:k-1)} \right] + O(\epsilon).  \label{eq:omega}
\end{align}
We also have that except at $k=D-1$ and $k=D$ (for which both $\lambda_k$ and $\beta_{k+1}$ are zero), then
\[
\lambda_k\left(y^{(0:k-1)} \right) = \E \left[\left|\beta_{k+1} 
\left(y^{(0:k)}\right) \right| \Bigg|  y^{(0:k)}\right] \gg
\E \left[\beta^2_{k+1} 
\left(y^{(0:k)}\right) \middle|  y^{(0:k-1)} \right]
\]
for sufficiently large $N_{k+1},\dots,N_D$.
This means that when we substitute~\eqref{eq:omega} into~\eqref{eq:Ekcont},
the second term in \eqref{eq:omega} becomes dominated giving
\begin{align}
	E_k \left(y^{(0:k-1)}\right) 
	\le\frac{s_k^2 \left(y^{(0:k-1)}\right)}{N_k}
	+\left(K_k \lambda_k \left(y^{(0:k-1)}\right) 
	+\frac{C_k \zeta_{k+1,k}^2}{2 N_{k+1}}\right)^2 + O(\epsilon). \label{eq:E_k_for_sub}
\end{align}
 Now as $\beta_{k+1}^2 \left(y^{(0:k)}\right) =E_{k+1} 
 \left(y^{(0:k)}\right) -\frac{s_{k+1}^2 \left(y^{(0:k)}\right)}{N_{k+1}}$ we have
 \begin{align*}
  \lambda_k \left(y^{(0:k-1)}\right)  &=
  \E \left[\sqrt{E_{k+1} 
  \left(y^{(0:k)}\right) -\frac{s_{k+1}^2 \left(y^{(0:k)}\right)}{N_{k+1}}}
  \middle|  y^{(0:k-1)}  \right] +O(\epsilon)\\
  \intertext{and substituting in~\eqref{eq:E_k_for_sub} gives}
 \lambda_k \left(y^{(0:k-1)}\right)  &\le \E \left[K_{k+1} \lambda_{k+1} \left(y^{(0:k)}\right) 
  +\frac{C_{k+1} \zeta_{k+2,k+1}^2}{2 N_{k+2}} \middle|  y^{(0:k-1)} \right] 
  +O(\epsilon) \displaybreak[0]\\
 &= \frac{C_{k+1} \zeta_{k+2,k}^2}{2 N_{k+2}}
 +K_{k+1} \E \left[\lambda_{k+1} \left(y^{(0:k)}\right)  \middle|  y^{(0:k-1)} \right] +O(\epsilon) \displaybreak[0]\\
  &\le \frac{C_{k+1} \zeta_{k+2,k}^2}{2 N_{k+2}}+
  \sum_{d=k+1}^{D-2}  \E \left[\left(\prod_{\ell=k+1}^{d} K_{\ell}\right)
  \frac{C_{d+1} \zeta^2_{d+2,d}}{2 N_{d+2}} \middle|  y^{(0:k-1)} \right] +O(\epsilon) \displaybreak[0]\\
 &\le \frac{C_{k+1} \zeta_{k+2,k}^2}{2 N_{k+2}}+
 \sum_{d=k+1}^{D-2}  \left(\prod_{\ell=k+1}^{d} K_{\ell}\right)
 \frac{C_{d+1} \zeta^2_{d+2,k}}{2 N_{d+2}}+O(\epsilon)
 \end{align*}
 and thus
 \begin{align*}
 E_k \left(y^{(0:k-1)}\right)  \le \frac{s_k^2 \left(y^{(0:k-1)}\right)}{N_k}
 +\frac{1}{4}\left(
 \frac{C_k \zeta_{k+1,k}^2}{N_{k+1}}
 +\sum_{d=k}^{D-2}  \left(\prod_{\ell=k}^{d} K_{\ell}\right)
  \frac{C_{d+1} \zeta^2_{d+2,k}}{N_{d+2}}
 \right)^2 + O(\epsilon).
 \end{align*}
 and therefore
  \begin{align*}
  \E \left[\left(I_0-\gamma_0\right)^2\right] 
  = E_0  \le \frac{\varsigma_0^2}{N_0}
  +\frac{1}{4}\left(
  \frac{C_0 \varsigma_{1}^2}{N_{1}}
  +\sum_{k=0}^{D-2}  \left(\prod_{d=0}^{k} K_{d}\right)
  \frac{C_{k+1} \varsigma^2_{k+2}}{N_{k+2}}
  \right)^2 + O(\epsilon)
  \end{align*}
  as required and we are done.

\end{proof}

\section{Proof of Theorem~\ref{the:finite-res} - Convergence Rate for Finite Realisations of $y$}
\label{sec:app:finite-res}

\thefiniteres*

\begin{proof}
	Denote
	$P_c = P(y = y_c)$ and
	$f_c = f(y_c, \gamma(y_c))$
	noting that as the $y_c$ are fixed values, so are $P_c$ and $f_c$.
	Then, Minkowski's inequality allows us to bound the mean squared error as
	\begin{eqnarray*}
		\E\left[(I_N - I)^2\right] = \norm{I_N - I}_2^2
		\leq \left(\sum_{c=1}^C W_c \right)^2 \quad \text{where} \quad W_c := \norm{(\hat{P}_N)_c \, (\hat{f}_N)_c - P_c \, f_c}_2.
	\end{eqnarray*}
	Moreover, again by Minkowski, we have $W_c \leq U_c + V_c$
	where 
	\[
	U_c = \norm{(\hat{P}_N)_c \, (\hat{f}_N)_c - (\hat{P}_N)_c \, f_c}_2, \quad
	V_c = \norm{(\hat{P}_N)_c \, f_c - P_c \, f_c}_2.
	\]
	Factoring out $(\hat{P}_N)_c$ in $U_c$ and noting that each $y_n$ and $z_{n,c}$ are sampled independently gives
	\begin{eqnarray*}
		U_c = \sqrt{\E\left[(\hat{P}_N)_c^2 \, \left((\hat{f}_N)_c - f_c\right)^2\right]} 
		= \sqrt{\E\left[(\hat{P}_N)_c^2\right]} \sqrt{\E\left[\left((\hat{f}_N)_c - f_c\right)^2\right]}.
	\end{eqnarray*}
	Using Minkowski's
	inequality, we may write the first right-hand term as
	\begin{align*}
		\sqrt{\E\left[(\hat{P}_N)_c^2\right]} = \norm{(\hat{P}_N)_c}_2
		&\leq \frac{1}{N} \sum_{n=1}^N \norm{\mathbbm{1}(y_n = y_c)}_2 
		=\frac{1}{N} \sum_{n=1}^N \E \left[{\mathbbm{1}(y_n = y_c)}^2\right] 
		= \frac{1}{N} \sum_{n=1}^N P_c
		= P_c.
	\end{align*}
	For the second term, note that by Lipschitz continuity, we have for some constant $K >
	0$
	\begin{eqnarray*}
		\sqrt{\E\left[\left((\hat{f}_N)_c - f_c\right)^2\right]} = \norm{(\hat{f}_N)_c - f_c}_2 \leq
		K \, \norm{\frac{1}{N} \sum_{n=1}^N \phi(y_c, z_{n,c}) - \gamma(y_c)}_2  
		= K \cdot O(1/\sqrt{N}) 
		= O(1/\sqrt{N}),
	\end{eqnarray*}
	since $\frac{1}{N} \sum_{n=1}^N \phi(y_c, z_{n,c})$ is a Monte Carlo estimator for
	$\gamma(y_c)$. Altogether then, we have that
	\[
	U_c = P_c \cdot O(1 / \sqrt{N}) = O(1 / \sqrt{N}).
	\]
	We can also factor out $f_c$ in $V_c$ to obtain
	\[
	V_c = |f_c| \cdot \norm{(\hat{P}_N)_c - P_c}_2 = |f_c| \cdot O(1/\sqrt{N}) = O(1/\sqrt{N}),
	\]
	since $(\hat{P}_N)_c$ is a Monte Carlo estimator for $P_c$.
	Now by noting that 
	$(A+B)^2 \le 2(A^2+B^2)$
	for any $A, B \in \mathbb{R}$, an inductive argument shows that 
	\[
	\left(\sum_{\ell=1}^L A_\ell\right)^2 \leq 2^{\lceil \log_2 L \rceil} \sum_{\ell=1}^L A_\ell^2
	\]
	for all $A_1, \cdots, A_L \in \mathbb{R}$.  We can now show that
	our asymptotic bounds for $U_c$ and $V_c$ entail that our overall mean squared
	error satisfies
	\begin{eqnarray}
	\E\left[(I_N - I)^2\right] &\leq& 2^{\lceil \log_2 C \rceil} \sum_{c=1}^C W_c^2 \label{eq:square-inequal-1} \notag
	\leq 2^{\lceil \log_2 C \rceil} \sum_{c=1}^C (U_c + V_c)^2 \notag
	\leq 2^{\lceil \log_2 C \rceil + 1} \sum_{c=1}^C U_c^2 + V_c^2 \label{eq:square-inequal-2} \notag \\
	&=& 2^{\lceil \log_2 C \rceil + 1} \sum_{c=1}^C O(1/N) + O(1/N) \notag
	= O(1/N), \notag
	\end{eqnarray}
	as desired.
\end{proof}

\section{Proof for Theorem~\ref{the:prod} - Products of Expectations}
\label{sec:app-prod}

\theprod*

\begin{proof}
	Consider fixed sizes of nested sample sets, $\{M_\ell\}_{\ell = 1:L}$.
	For each $y \in \mathcal{Y}$ and 
        \[
                x = \{\{z_{\ell,m}'\}_{m=1:M_{\ell}}\}_{\ell=1:L} \in \mathcal{X} 
                = \mathcal{Z}_1^{M_{1}} \otimes \dots \otimes \mathcal{Z}_L^{M_L},
        \]
	define 
	\[
	\eta(y,x) = f\left(y,\prod_{\ell=1}^{L} \frac{1}{M_{\ell}} \sum_{m_{\ell}=1}^{M_{\ell}} \psi_{\ell}(y,z_{\ell,m}')\right).
	\]
	Now, $I_{N} = \frac{1}{N} \sum_{n=1}^{N} \eta(y_n,x_n)$ is a standard
	MC estimator on the space $\mathcal{Y} \otimes \mathcal{X}$. Thus,
	$I_{N} \asto \mathbb{E}[I_{N}]$ with convergence properties and rate as per standard MC.  
	We finish the proof by showing that $\mathbb{E}[I_{N}]=I$ when $f$ is linear:
	\begin{align*}
	\displaybreak[0]
	\mathbb{E}[I_{N}] &= \mathbb{E} \left[\frac{1}{N} \sum_{n=1}^N f\left(y_n,\prod_{\ell=1}^{L} \frac{1}{M_{\ell}}  \sum_{m=1}^{M_{\ell}} \psi_{\ell}(y_n,z_{n,\ell,m}') \right)\right] 
	= \mathbb{E}\left[ \mathbb{E}\left[ f\left( y_1,\prod_{\ell=1}^{L} \frac{1}{M_{\ell}}  \sum_{m=1}^{M_{\ell}} \psi_{\ell}(y_1,z_{1,\ell,m}')\right) \middle| y_1 \right]\right],
	\intertext{now using the linearity of $f$}
	\displaybreak[0]
	\phantom{\mathbb{E}[I_{N}]} &= \mathbb{E}\left[ f\left( y_1,\mathbb{E}\left[ \prod_{\ell=1}^{L} \frac{1}{M_{\ell}}  \sum_{m=1}^{M_{\ell}} \psi_{\ell}(y_1,z_{1,\ell,m}') \middle| y_1 \right] \right) \right],
	\intertext{and using the fact that terms for different $\ell$ are by construction independent}
	\displaybreak[0]
	\phantom{\mathbb{E}[I_{N}]} 
	&= \mathbb{E}\left[ f\left( y_1, \prod_{\ell=1}^L\mathbb{E}\left[\frac{1}{M_{\ell}}  \sum_{m=1}^{M_{\ell}} \psi_{\ell}(y_1,z_{1,\ell,m}') \middle| y_1 \right] \right) \right] 
	= \mathbb{E}\left[ f\left( y_1, \prod_{\ell=1}^{L}\mathbb{E}\left[\psi_{\ell}(y_1,z'_{1,\ell,1}) \middle| y_1 \right]\right)\right] 
	= I,
	\end{align*}
	as required.
\end{proof}


\section{Optimizing the Convergence Rates}
\label{sec:app:opt-conv}

We have shown that the mean squared error converges at a rate \[
O\left(\sum_{k=0}^{D} \frac{1}{N_k}\right) \quad \mathrm{or} \quad 
O\left(\frac{1}{N_0} +\left(\sum_{k=1}^{D} \frac{1}{N_k}\right)^2\right)
\]
 depending on the smoothness assumptions that can be made about
$f$.  Here we show that given a sample budget for the inner most estimator $T=\prod_{k=0}^{D} N_k$, then these bounds are
optimized by setting $N_0 \propto N_1 \propto \dots \propto N_D$ and 
$N_0 \propto N_1^2 \propto \dots \propto N_D^2$ respectively for the two cases and that this
gives bounds of $O\left(1/T^{\frac{1}{D+1}}\right)$ and $O\left(1/T^{\frac{2}{D+2}}\right)$respectively.  For the single
nested case, this gives bounds of $O(1/\sqrt{T})$ and $O(1/T^{2/3})$ respectively.

We start by explaining why $T$ is an appropriate measure of the overall computational cost.  
First note that for each sample of $y^{(0:k)}$,
 the NMC estimator requires $N_k$ samples of $y^{(k+1)}$.  
Thus there are $N_0$ samples of the outermost level, $N_0 \times N_1$ of the next level, and 
$T = \prod_{k=0}^{D} N_k$ samples of the innermost level, regardless of the setup.
In other words, each individual estimate of the innermost level uses $N_D$ 
samples and we generate $\prod_{k=0}^{D-1} N_k = T/ N_D$ of these estimates 
because we need to generate one estimate for each sample of the layer above.  Thus what we can vary 
for a fixed $T$ is whether we use more estimates each using fewer samples, or fewer estimates each using more samples.

Now the total cost of generating $I_0$ scales with sum the costs of each individual layer, namely
\[
\text{Cost} = \sum_{k=0}^{D} c_{k} \prod_{\ell=0}^{k} N_\ell
\]
where $c_{k}$ is the per sample cost local computations made at the $k^{\text{th}}$ layer (i.e. sampling
$y^{(0:k)}$ and evaluating $f_k$ for given inputs), which is independent of the $N_k$.  For large $N_D$, we see that the dominant
cost is that of the inner most layer, namely $c_{T} \prod_{\ell=0}^{D} N_\ell = c_{T}T$, and
we asymptotically spend 100\% of our time dealing with the innermost estimator.  To give intuition to this,
 think about writing the estimator 
as a hierarchy of nested for loops; as the length of the loops increases we spend an increasing proportion of our time 
inside the innermost loop.  Consequently, in the asymptotic regime, our computational cost is $O(T)$ and we can
use $T$ is an appropriate measure of the overall computational cost.  

To derive the optimal rates, we first consider the single nested case where $D=1$, $N_0=N$, and $N_1=M$.
Consider setting $N = \tau(M)$ then $T = \tau(M) \cdot M$ and our bounds become $O(R)$, where
\[
  R = 1/\tau(M) + 1/M \quad \text{and} \quad R = 1/\tau(M) + 1/M^2.
\]
for the two cases respectively.

In this first case supposing $\tau(M) = O(M)$ easily gives
\[
  T = M \tau(M)
    = O\left(M^2\right)
\]
and as such 
\begin{equation} \label{eq:rm}
  R = O\left(\frac{1}{M}\right) = O\left(\frac{1}{\sqrt{T}}\right)
\end{equation}
as $M \to \infty$.  In contrast, consider the case that $\tau(M) \gg M$ as $M \to \infty$. We then have
$\frac{1}{\sqrt{M}} \gg \frac{1}{\sqrt{\tau(M)}}$ as $M \to \infty$, so that
\[
  R = O\left(\frac{1}{M}\right) \gg \frac{1}{\sqrt{M}} \frac{1}{\sqrt{\tau(M)}} = \frac{1}{\sqrt{T}}.
\]
Comparing with \eqref{eq:rm}, we observe that, for the same total
budget of samples $T$, this choice of $\tau$ provides a strictly weaker convergence
guarantee than in the previous case. When $M \gg \tau(M)$ also then we have
$\frac{1}{\sqrt{\tau(M)}} \gg \frac{1}{\sqrt{M}}$ as $M \to \infty$ and so
\[
R = O\left(\frac{1}{\tau(M)}\right) \gg \frac{1}{\sqrt{M}} \frac{1}{\sqrt{\tau(M)}} = \frac{1}{\sqrt{T}}
\]
which is again a weaker bound.  We thus see that the $O(1/N + 1/M)$ bound is optimized when
$N \propto M$, giving a convergence rate of $O(1/\sqrt{T})$.

In the second case suppose that $\tau(M) = O(M^2)$ as $M \to \infty$.  This now gives
\[
  T = M \tau(M)
  = O\left(M^3\right)
\]
and therefore
\[
R = O\left(\frac{1}{M^2}\right) = O\left(\frac{1}{T^{2/3}}\right)
\]
as $M \to \infty$.  Now considering the cases $\tau(M) \gg M^2$ leads to $\frac{1}{M^{4/3}}  \gg \frac{1}{\tau(M)^{2/3}}$ and thus
\[
R = O\left(\frac{1}{M^2}\right) \gg \frac{1}{M^{2/3}} \frac{1}{\tau(M)^{2/3}} = \frac{1}{T^{2/3}}.
\]
Similarly, if $\tau(M) \ll M^2$ then $\frac{1}{\tau(M)^{1/3}}\gg \frac{1}{M^{2/3}}$ and thus
\[
R = O\left(\frac{1}{\tau(M)}\right) \gg \frac{1}{M^{2/3}} \frac{1}{\tau(M)^{2/3}} = \frac{1}{T^{2/3}}.
\]
Both of these cases lead to weaker bounds and so we see that the $O(1/N + 1/M^2)$ bound
is tightest when $N \propto M^2$, giving a convergence rate of $O(1/T^{2/3})$.

We now consider the repeated nesting case without continuously differentiability such that our
bound is $O\left(\sum_{k=0}^{D} \frac{1}{N_k}\right)$.  Here we can immediately see that
$N_0 \propto N_1 \propto \dots \propto N_D$ leads to $N_k \propto T^\frac{1}{D+1}$ and thus
 $O\left(1/T^{\frac{1}{D+1}}\right)$ convergence.  If we were to set any $N_k \ll T^{\frac{1}{D+1}}$ then this term would
 dominate the sum and lead to a worse converge.  Thus the result from the single nested case
 trivially extends to the multiple nested case, giving the required result.
 
 Finally considering repeated nesting for the bound $O\left(\frac{1}{N_0} +\left(\sum_{k=1}^{D} \frac{1}{N_k}\right)^2\right)$
 then we have from the previous result that $N_1 \propto N_2 \propto \dots \propto N_D$
 is required for optimality as otherwise one of the terms in the summation dominates the
 other terms.  If we now define $M=\prod_{k=1}^{D} N_k = T/N_0$ then we get a convergence
 rate of $O(1/N_0+1/M^2)$ which is identical to the single nesting case for this tighter
 bound.  We, therefore, have that the optimal configuration must be
 $N_0 \propto N_1^2 \propto \dots \propto N_D^2$ giving a bound of $O\left(1/T^{\frac{2}{D+2}}\right)$
 as it gives $N_0 \propto T^{\frac{2}{D+2}}$.

\vspace{-5pt}

\section{Additional details pertaining to cancer simulator}
\label{sec:cancer_sim_app}

In this section, we elucidate some more details about the cancer simulator described in the manuscript, provide more rigorous mathematical definitions for the relevant terms using the same nomenclature, and also include more results figures.

\vspace{-5pt}

\subsection{Simulator details}
\label{sec:cancer_sim_details}

We define $I(T_{\text{treat}})$ to be the expected proportion of patients who receive treatment.
A particular patient is represented by $y\in\mathcal{R}^d$. Specifically, $y$ consists of only a single real number ($d=1$) representing the size of the tumor upon discovery. Initial tumor size is drawn from a scaled Rayleigh distribution.
The outcome of the simulator is then $\phi(y,z)\in\left\lbrace 0,1\right\rbrace $, and is the binary outcome of whether that particular patient and sample of unobserved parameters yield an expected tumor size below the threshold, $T_{opp}$, after a fixed time duration, $t_{max}$. The simulator is a pair of coupled, parameterized differential equations for the action of an anti-tumor treatment such as chemotherapy, as described in~\citet{enderling2014cancer}:
\begin{align}
&\frac{dc}{dt} = -\lambda c \log\big(\frac{c}{K}\big) - \xi c \\
&\frac{dK}{dt} = \phi c - \psi K c^{2/3},
\end{align}
where $c(t,x)\in\mathcal{R}_+$ represents tumor size, with initial size $y_n$.
Similarly, $K(t,x)\in\mathcal{R}_+$ represents the notion of a carrying capacity, with the initial carrying capacity, $K(0,z)$, set to a known constant $K_0$.
The magnitude of the patient response to an anti-tumor treatment (such as chemotherapy) is represented by $\xi\in[0,1]$, drawn from a beta distribution.
$\{\lambda, \psi, \phi\}\in\mathcal{R}_+^3$ represent the parameters of the simulator.
We also define $x _{n,m}= \{\lambda, \psi, \phi\, K_0, \xi\}$ and $z_{n,m}=\{x_{n,m}, T_{\text{opp}}, t_{\text{max}}\}$, where all but £$\xi$ are set to constant values. Expanding this to condition all values on $y_n$ is trivial given domain knowledge. Alternatively, they could also be drawn at random, but not be conditioned on $y_n$. Such relations are omitted here for simplicity.

We can now fully define $\phi$ as:
\begin{equation}
\phi(y_n,z_{n,m}) = \mathbbm{1}(c(t_{\text{max}}, x_{n,m})<T_{\text{opp}}).
\end{equation}
Taking the expectation of $\phi$ over $M$ different realizations of $z$ yields the estimate $(\hat{\gamma}_M)_n$. 
This value is the probability that treatment will be successful for a particular patient, marginalizing over possible unobserved dynamics. This is the point at which clinician decides whether initiate the treatment plan.
This decision is represented $f(y_n,(\hat{\gamma}_M)_n) \in [0,1]$ as:
\begin{equation}
f(y_n,(\hat{\gamma}_M)_n) = \mathbbm{1}((\hat{\gamma}_M)_n>T_{\text{treat}})
\end{equation}
where $T_{\text{treat}}$ is the minimum probability of success required for that patient to receive the treatment, and again, could be conditioned on $y$ also.
Taking the expectation of $f$ over patients yields the expected frequency with which the treatment will be delivered, given a value of $T_{\text{treat}}$. The hospital wishes to estimate the value $T_{\text{treat}}$ that maximizes the number of patients treated, while only treating those patients with the highest probability of success, and (in expectation) staying within the budgetary constraint.

\begin{figure}[t]
	\centering
	\includegraphics[width=0.48\textwidth]{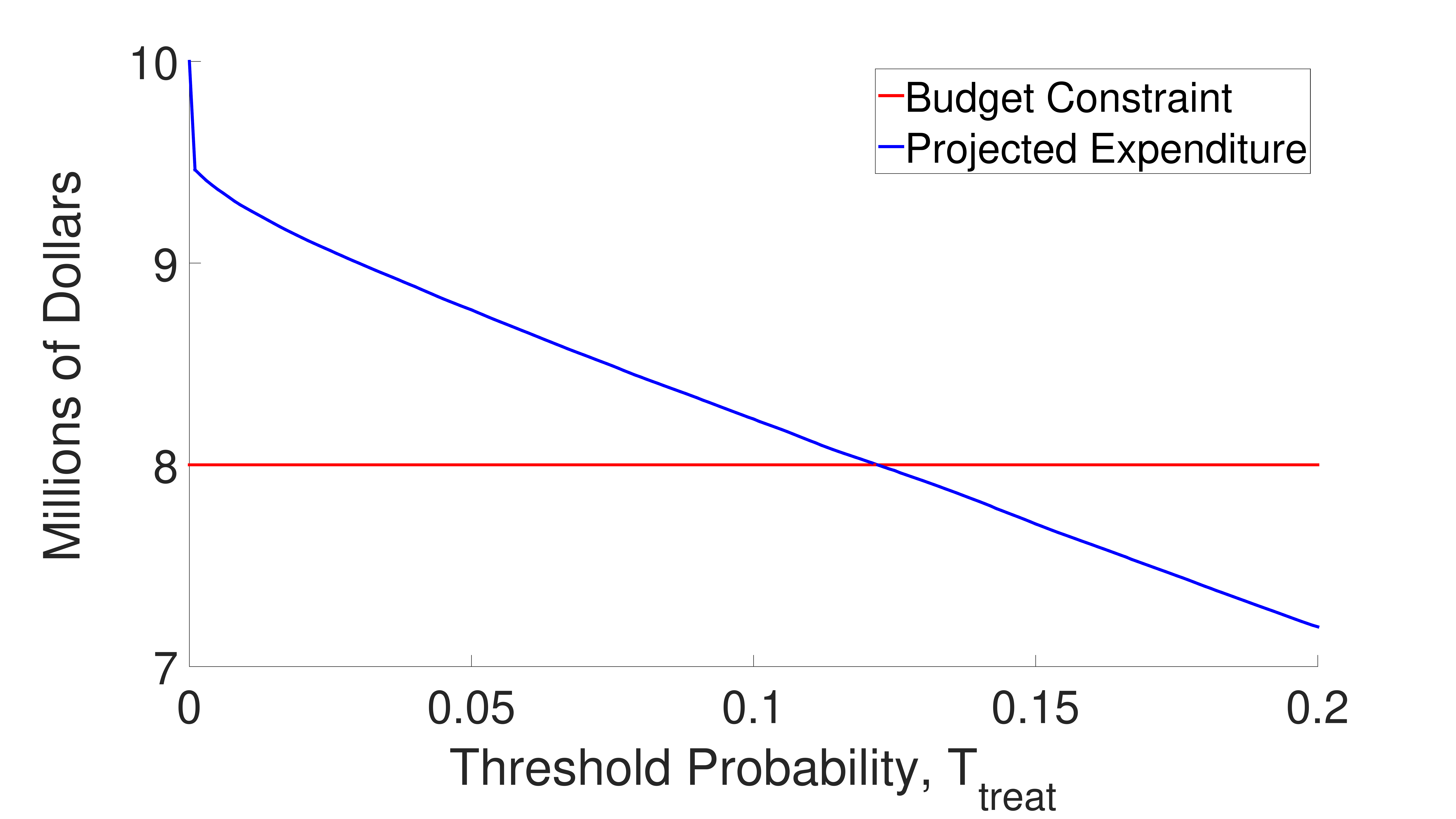}
	\caption{Projected expenditure (proportional to $I_{N,M}$) evaluated at different values of $T_{\mathrm{treat}}$. The budget constraint is shown by the horizontal red line. The optimal value of $T_{\mathrm{treat}}$ is found by the intersection 
		and occurs at $T_{\mathrm{treat}} = 12.5\%$. Evaluated was carried out at 100 $T_{\mathrm{treat}} \i$. Only the bottom 20\% is pictured as this is the operating range for most treatment centers. }
	\label{fig:emperical-cost}
\end{figure}

The model is completed by the definition of the following distributions and parameters.
\begin{align*}
& K_0 = 100000000, \quad
\phi = 0.001, \quad
\psi = 0.05, \quad
\lambda = 0.5, \quad
\xi \sim \text{Beta}(5, 2), \\
& c_0 \sim 1000*\text{Rayleigh}(10), \quad
T_{\text{opp}} = 2000,\quad
T_{\text{treat}} = 0.35, \quad
t_{\text{max}} = 250, \quad
t_{\text{step}} = 0.01 
\end{align*}

\subsection{Budget result}
\label{sec:cancer_sim_result}

In the example outlined above, the treatment center is not actually attempting to evaluate the value of $I$, but to find the optimal value of $T_{\text{treat}}$ subject to a budgetary constraint. A simplistic way of evaluating the optimal value is to perform a dense search over different values of the parameter, each time evaluating the estimated expenditure, and select the best performing value.

Figure~\ref{fig:emperical-cost} shows the variation of predicted expenditure against the threshold probability, as well as the budget constraint. The intersection of these curves is the optimal setting of $T_{opp}$, here evaluated to be 12.5\%. From the blue line, it is clear that the relationship between expenditure and treatment probability is non-linear, especially at the extrema of the distribution, and hence the use of NMC was necessarily for evaluating the optimal value.


\section{Bayesian Experimental Design}
\label{sec:exp-design}

Bayesian experimental design provides a framework for designing experiments in a manner that is optimal from
an information-theoretic viewpoint \citep{chaloner1995bayesian,sebastiani2000maximum}.  By minimizing the entropy in the posterior distribution of the 
parameters of interest, one can maximize the information gathered by the experiment.

Let the parameters of interest be denoted by $\theta \in \Theta$ for which we define a prior distribution $p(\theta)$.
Let the probability of achieving outcome $y\in\mathcal{Y}$, given parameters $\theta$ 
and a design $d \in \mathcal{D}$, be defined by likelihood model $p(y | \theta, d)$.
Under our model, the outcome of the experiment given a chosen $d$ is distributed according to
\begin{equation}
\label{eq:marginal_def}
p(y | d) = \int_{\Theta} p(y,\theta | d) d\theta = \int_{\Theta} p(y | \theta, d) p(\theta) d\theta.
\end{equation}
where we have used the fact that $p(\theta)=p(\theta|d)$ because $\theta$ is independent of the design.
Our aim is to choose the optimal design $d$ under some criterion. 
We, therefore, define a utility function, $U(y,d)$, representing the utility of choosing a design $d$ 
and getting a response $y$.
Typically our aim is to maximize information gathered from the experiment, and so we set 
$U(y,d)$ to be the gain in Shannon information between the prior and the posterior:
\begin{align}
\label{eq:shannon_inf}
U(y,d) &= \int_{\Theta} p(\theta |y, d) \log(p(\theta |y, d)) d\theta -\int_{\Theta} p(\theta) \log(p(\theta))d\theta
\end{align}
However, we are still uncertain about the outcome. Thus, we use the expectation of $U(y,d)$ with respect to $p(y | d)$
as our target:
\begin{align}
\bar{U}(d) & = \int_{\mathcal{Y}} U(y,d) p(y|d) dy\nonumber\\
&=\int_{\mathcal{Y}}\int_{\Theta} p(y,\theta | d)\log(p(\theta |y, d)) d\theta dy - \int_{\Theta} p(\theta) \log(p(\theta)) d\theta \nonumber\\
&=\int_{\mathcal{Y}}\int_{\Theta} p(y,\theta | d)\log\left(\frac{p(\theta |y, d)}{p(\theta)}\right)d\theta dy. 
\label{eq:u_bar_1}
\end{align}
noting that this corresponds to the mutual information between the parameters $\theta$ and
the observations $y$.  The Bayesian-optimal design is then given by
\begin{equation}
\label{eq:d_star}
d^* = \argmax_{d \in \mathcal{D}} \bar{U}(d).
\end{equation}

Finding $d^*$ is challenging because the posterior $p(\theta |y, d)$ is rarely known in closed form.  
To solve the problem, we proceed by rearranging~\eqref{eq:u_bar_1} using Bayes' rule (remembering that
$p(\theta)=p(\theta|d)$):
\begin{align}
\label{eq:u_bar_2}
\begin{split}
\bar{U}(d) = & \int_{\mathcal{Y}}\int_{\Theta} p(y,\theta | d) \log\left(\frac{p(\theta | y, d)}{p(\theta)}\right) d\theta dy \\
=& \int_{\mathcal{Y}}\int_{\Theta} p(y,\theta | d) \log\left(\frac{p(y | \theta, d)}{p(y |d)}\right) d\theta dy \\
=&\int_{\mathcal{Y}}\int_{\Theta} p(y,\theta | d) \log(p(y | \theta, d)) d\theta dy - \int_{\mathcal{Y}} p(y | d) \log(p(y | d))dy.
\end{split}
\end{align}
The first of these terms can now be evaluated using standard MC approaches as the integrand is analytic.  
In contrast, the second term is not directly amenable to standard MC estimation
as the marginal $p(y|d)$ represents an expectation
and taking its logarithm represents a non-linear functional mapping.  

To derive an estimator, we will now consider these terms separately.  Starting
with the first term,
\begin{align}
\bar{U}_1(d) = &\int_{\mathcal{Y}}\int_{\Theta} p(y,\theta | d) \log(p(y | \theta, d)) d\theta dy
\approx \frac{1}{N} \sum_{n=1}^{N} \log(p(y_n | \theta_n, d)) \label{eq:U1_MC}
\end{align}
where $\theta_n \sim p(\theta)$ and $y_n \sim p(y|\theta=\theta_n, d)$.  We
note that evaluating~\eqref{eq:U1_MC} involves both
sampling from $p(y | \theta, d)$ and directly evaluating it point-wise.
The latter of these cannot be avoided, but in the scenario where we
do not have direct access to a sampler for $p(y | \theta, d)$, we can
use the standard importance sampling trick, sampling instead
$y_n \sim q(y|\theta=\theta_n, d)$ and weighting the samples in~\eqref{eq:U1_MC}
by $w_n = \frac{p(y_n|\theta_n, d)}{q(y_n|\theta_n, d)}$.

Now considering the second term we have
\begin{align}
\bar{U}_2(d) = &\int_{\mathcal{Y}} p(y | d) \log(p(y | d))dy
\approx \frac{1}{N} \sum_{n=1}^{N} \log \left(\frac{1}{M} \sum_{m=1}^{M} p(y_n | \theta_{n,m}, d)\right) \label{eq:U2_MC}
\end{align}
where $\theta_{n,m} \sim p(\theta)$ and $y_n \sim p(y | d)$.  Here we can sample the latter by first sampling an otherwise unused $\theta_{n,0} \sim p(\theta)$ and 
then sampling $y_n \sim p(y | \theta_{n,0}, d)$.  Again we can use importance sampling if 
we do not have direct access to a sampler for $p(y | \theta_{n,0}, d)$.

Putting~\eqref{eq:U1_MC} and~\eqref{eq:U2_MC} together (and renaming
$\theta_n$ from~\eqref{eq:U1_MC} as $\theta_{n,0}$ for notational
consistency with ~\eqref{eq:U2_MC})  we now have the following complete estimator
given in the main paper and implicitly used by \cite{myung2013tutorial} amongst others
\begin{align}
\label{eq:exp-des-nmc}
\bar{U}(d) 
& \approx  
\frac{1}{N} \sum_{n=1}^{N} \left[ \log(p(y_n | \theta_{n,0},d)) 
- \log \left(\frac{1}{M} \sum_{m=1}^{M}p(y_n | \theta_{n,m},d)\right) \right]
\end{align}
where $\theta_{n,m} \sim p(\theta) \; \forall m \in 0:M, \;n \in 1:N$ and $y_n \sim p(y|\theta=\theta_{n,0}, d) \; \forall n \in 1:N$.

We now show that if $y$ can only take on one of $C$ possible values ($y_1, \ldots, y_C$), 
we can achieve significant improvements in the convergence rate by using a similar to that
introduced in Section 3.2 to convert to single MC estimator:
\begin{align}
\bar{U}(d)=&\int_{\mathcal{Y}}\int_{\Theta} p(y,\theta | d) \log(p(y | \theta, d)) d\theta dy - \int_{\mathcal{Y}} p(y | d) \log(p(y | d))dy \nonumber\\
=& \int_{\Theta} \left[\sum_{c=1}^{C} p(\theta) p(y_c|\theta, d) \log(p(y_c | \theta, d)) \right] d\theta
-\sum_{c=1}^{C} p(y_c | d)\log(p(y_c | d)) \nonumber \\
\approx& \label{eq:u_bar_MC}
\frac{1}{N} \sum_{n=1}^{N} \sum_{c=1}^{C} p(y_c | \theta_n, d) \log\left(p(y_c | \theta_n, d)\right) 
- \sum_{c=1}^{C} \left[\left(\frac{1}{N}\sum_{n=1}^{N} p(y_c | \theta_n, d)\right) \log \left(\frac{1}{N} \sum_{n=1}^{N} p(y_c | \theta_n, d)\right) \right]
\end{align}
where $\theta_n \sim p(\theta) \quad \forall n \in 1,\dots,N$.  As $C$ is a fixed constant,
the MSE for first term clearly converges at the standard MC error rate of $O(1/N)$.  Similarly each
$\hat{P}_N(y_c | d) = \frac{1}{N}\sum_{n=1}^{N} p(y_c | \theta_n, d)$ term also converges at a rate 
$O(1/N)$ to $p(y_c | d)$.  Now noting that $\hat{P}_N(y_c | d) \le 1$ and that $f(x) = x \log x$ is Lipschitz
continuous in the range $(0,1]$, each $\hat{P}_N(y_c | d) \log \left(\hat{P}_N(y_c | d)\right)$ 
term must also converge at the MC error rate if $p(y_c | d)>0 \; \forall c=1,\dots,C$.  
Finally if we assume that when $p(y_c | d)=0$ then $\hat{P}_N(y_c | d)=0$
almost surely for sufficiently large $N$, then the second term also converges at the MC error when
$p(y_c | d)=0$.  We now have a finite sum of terms which each convergence to $\bar{U}(d)$ with MC
MSE rate $O(1/N)$, and so the overall estimator~\eqref{eq:u_bar_MC} must also converge at this rate.
This compares to $O(1/T^{2/3})$ for~\eqref{eq:exp-des-nmc} (assuming we take $N \propto M^2$), noting that generating $T$ samples for~\eqref{eq:exp-des-nmc} has the same cost up to a constant factor as generating $N$ for~\eqref{eq:u_bar_MC}.
To the best of our knowledge, this is the first introduction of this superior estimator in the literature.

\begin{figure}[t]
	\centering
	\includegraphics[width=0.49\textwidth,trim={1.5cm 0 3.5cm 0},clip]{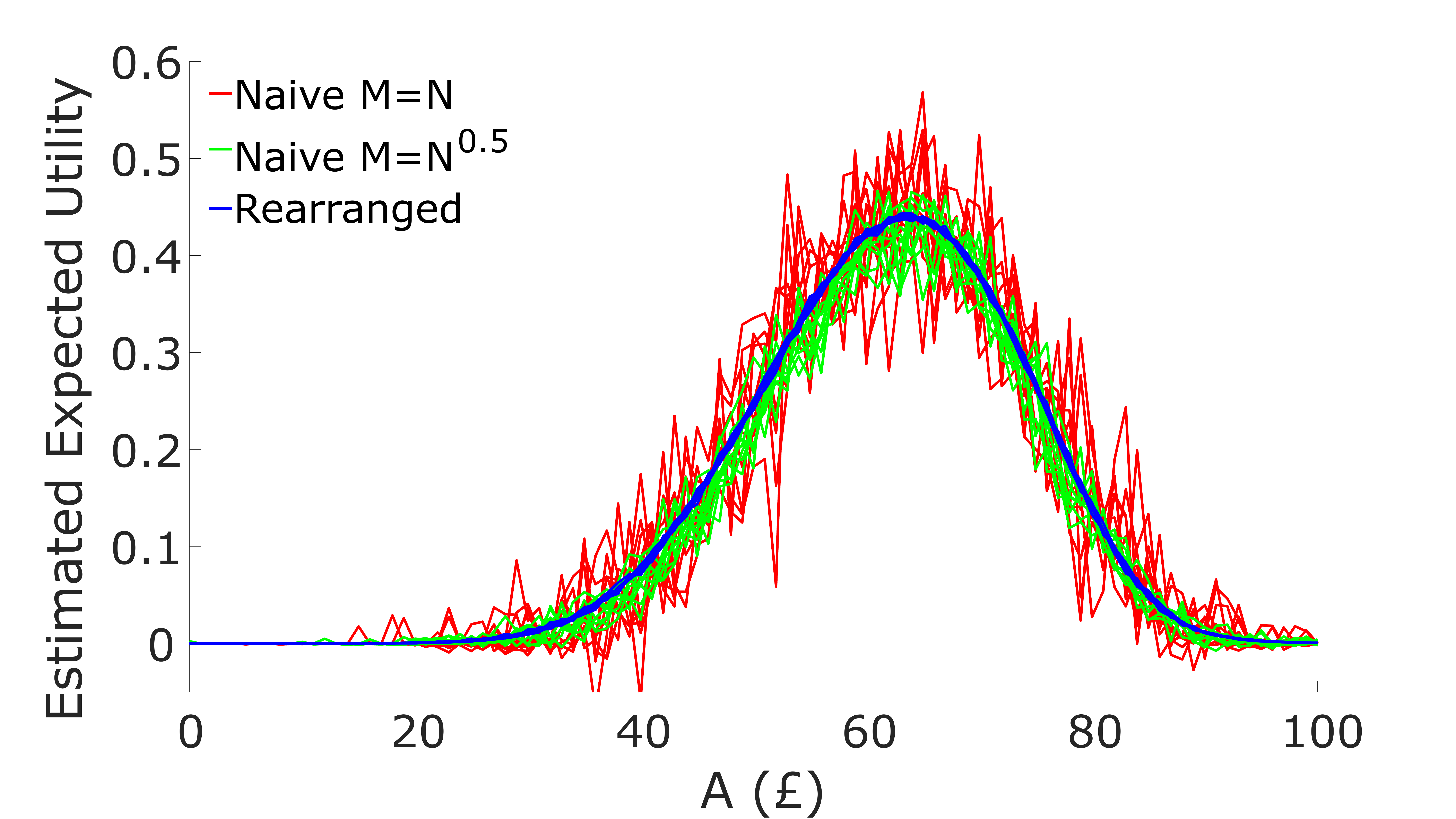}
	\caption{Estimated expected utilities $\bar{U}(d)$for 
		different values of one of the design parameters $A \in \{1,2,\dots,100\}$ given a fixed total
		sample budget of $T=10^4$.  Here the lines correspond to 10 independent runs, showing
		that the variance of \eqref{eq:exp-des-nmc} is far higher than \eqref{eq:u_bar_MC}.\label{fig:exp-d-scan}}
\end{figure}

We finish by showing that the theoretical advantages of this reformulation also leads to empirical gains in the estimation of $\bar{U}(d)$.  For this, we consider a model used in psychology experiments for delay discounting introduced by \cite{vincent2016hierarchical,vincent2017darc}.  Our experiment comprises of asking questions of the form \emph{``Would you prefer $\pounds A$ now, or $\pounds B$ in $D$ days?''} and we wish to choose the question  variables $d = \{A,B,D\}$ in the manner that will give the most incisive questions.  The target participant is presumed to have parameters $\theta=\{k,\alpha\}$ and the following response model
\begin{equation}
y \sim \mathrm{Bernoulli} \left(0.01 + 0.98 \cdot \Phi\left(\frac{1}{\alpha} \left(\frac{B}{1+e^k D}-A\right)\right)\right)
\end{equation}
where $y=1$ indicates choosing the delayed response and $\Phi$ represents the cumulative normal distribution.  As more questions are asked, the distribution over the parameters $\theta$ is updated, such
that the most optimal question to ask at a particular time depends on the previous questions
and responses.  For the sake of brevity, when comparing the performance of~\eqref{eq:exp-des-nmc} and~\eqref{eq:u_bar_MC} we will neglect the problem of how best to optimize the design, and consider
only the problem of evaluating $\bar{U}(d)$.  We will further consider the case where $B=100$ and $D = 50$ are fixed and we are only choosing the delayed value $A$.
We presume the following distribution on the parameters
\begin{align*}
k &\sim \mathcal{N}(-4.5,0.5^2) \\
\alpha &\sim \Gamma(2,2).
\end{align*}
We first consider convergence in the estimate of $\bar{U}(d)$ for the case $A=70$ for our suggested method~\eqref{eq:u_bar_MC} and the na\"{i}ve solution~\eqref{eq:exp-des-nmc}, the results of which are shown in Figure~2a in the main paper. 
Here we see that the convergence rates of the two methods are both as expected and that our suggested method offers significant empirical performance improvements.  

We next consider setting a total sample budget $T=10^4$ and look at the variation in the estimated values of $\bar{U}(d)$ for different values of $A$ for the two methods as shown in Figure~\ref{fig:exp-d-scan}.
This shows that the improvement in MSE leads to clearly visible improvements in the characterization of $\bar{U}(d)$ that
will translate to improvements in seeking the optimum.

\end{appendices}

\section*{Acknowledgements}

Tom Rainforth's research leading to these results has received funding from the
European Research Council under the European Union's Seventh Framework
Programme (FP7/2007-2013) ERC grant agreement no. 617071.  However, the
majority of this work was undertaken while he was in the 
Department of Engineering Science, University of Oxford, and was supported 
by a BP industrial grant. Robert Cornish is supported by an NVIDIA scholarship. Hongseok Yang is supported by
an Institute for Information \& communications Technology Promotion (IITP)
grant funded by the Korea government (MSIP) (No.R0190-16-2011, Development of Vulnerability Discovery Technologies for IoT Software Security).  Frank Wood is supported under DARPA PPAML through the U.S. AFRL under Cooperative Agreement FA8750-14-2-0006, Sub Award number 61160290-111668. 

{\bibliography{refsnestedmc}}

\begin{thebibliography}{51}
\providecommand{\natexlab}[1]{#1}
\providecommand{\url}[1]{\texttt{#1}}
\expandafter\ifx\csname urlstyle\endcsname\relax
  \providecommand{\doi}[1]{doi: #1}\else
  \providecommand{\doi}{doi: \begingroup \urlstyle{rm}\Url}\fi

\bibitem[Alquier et~al.(2016)Alquier, Friel, Everitt, and
  Boland]{alquier2016noisy}
P.~Alquier, N.~Friel, R.~Everitt, and A.~Boland.
\newblock Noisy {M}onte {C}arlo: {C}onvergence of {M}arkov chains with
  approximate transition kernels.
\newblock \emph{Statistics and Computing}, 26\penalty0 (1-2):\penalty0 29--47,
  2016.

\bibitem[Andrieu and Roberts(2009)]{andrieu2009pseudo}
C.~Andrieu and G.~O. Roberts.
\newblock The pseudo-marginal approach for efficient {M}onte {C}arlo
  computations.
\newblock \emph{The Annals of Statistics}, pages 697--725, 2009.

\bibitem[Andrieu et~al.(2010)Andrieu, Doucet, and
  Holenstein]{andrieu2010particle}
C.~Andrieu, A.~Doucet, and R.~Holenstein.
\newblock Particle {M}arkov chain {M}onte {C}arlo methods.
\newblock \emph{Journal of the Royal Statistical Society: Series B (Statistical
  Methodology)}, 2010.

\bibitem[Beaumont(2003)]{beaumont2003estimation}
M.~A. Beaumont.
\newblock Estimation of population growth or decline in genetically monitored
  populations.
\newblock \emph{Genetics}, 164\penalty0 (3):\penalty0 1139--1160, 2003.

\bibitem[Belomestny et~al.(2010)Belomestny, Kolodko, and
  Schoenmakers]{belomestny2010regression}
D.~Belomestny, A.~Kolodko, and J.~Schoenmakers.
\newblock Regression methods for stochastic control problems and their
  convergence analysis.
\newblock \emph{SIAM Journal on Control and Optimization}, 2010.

\bibitem[Blei et~al.(2016)Blei, Kucukelbir, and McAuliffe]{blei2016variational}
D.~M. Blei, A.~Kucukelbir, and J.~D. McAuliffe.
\newblock Variational inference: A review for statisticians.
\newblock \emph{arXiv preprint arXiv:1601.00670}, 2016.

\bibitem[Broadie et~al.(2011)Broadie, Du, and Moallemi]{broadie2011efficient}
M.~Broadie, Y.~Du, and C.~C. Moallemi.
\newblock Efficient risk estimation via nested sequential simulation.
\newblock \emph{Management Science}, 2011.

\bibitem[Burda et~al.(2015)Burda, Grosse, and
  Salakhutdinov]{burda2015importance}
Y.~Burda, R.~Grosse, and R.~Salakhutdinov.
\newblock Importance weighted autoencoders.
\newblock \emph{arXiv preprint arXiv:1509.00519}, 2015.

\bibitem[Chaloner and Verdinelli(1995)]{chaloner1995bayesian}
K.~Chaloner and I.~Verdinelli.
\newblock Bayesian experimental design: A review.
\newblock \emph{Statistical Science}, 1995.

\bibitem[Csill{\'e}ry et~al.(2010)Csill{\'e}ry, Blum, Gaggiotti, and
  Fran{\c{c}}ois]{csillery2010approximate}
K.~Csill{\'e}ry, M.~G. Blum, O.~E. Gaggiotti, and O.~Fran{\c{c}}ois.
\newblock {Approximate Bayesian computation (ABC) in practice}.
\newblock \emph{Trends in ecology \& evolution}, 25\penalty0 (7):\penalty0
  410--418, 2010.

\bibitem[Doucet et~al.(2001)Doucet, De~Freitas, and
  Gordon]{doucet2001introduction}
A.~Doucet, N.~De~Freitas, and N.~Gordon.
\newblock An introduction to sequential {Monte Carlo} methods.
\newblock In \emph{Sequential Monte Carlo methods in practice}, pages 3--14.
  Springer, 2001.

\bibitem[Durrett(2010)]{durrett2010probability}
R.~Durrett.
\newblock \emph{Probability: theory and examples}.
\newblock Cambridge university press, 2010.

\bibitem[Enderling and Chaplain(2014)]{enderling2014cancer}
H.~Enderling and M.~A. Chaplain.
\newblock Mathematical modeling of tumor growth and treatment.
\newblock \emph{Current Pharmaceutical Design}, 20\penalty0 (30):\penalty0
  4934--4940, 2014.
\newblock ISSN 1381-6128/1873-4286.

\bibitem[Fort et~al.(2017)Fort, Gobet, and Moulines]{fort2016mcmc}
G.~Fort, E.~Gobet, and E.~Moulines.
\newblock {MCMC} design-based non-parametric regression for rare-event.
  application to nested risk computations.
\newblock \emph{Monte Carlo Methods Appl}, 2017.

\bibitem[Giles(2008)]{giles2008multilevel}
M.~B. Giles.
\newblock Multilevel {M}onte {C}arlo path simulation.
\newblock \emph{Operations Research}, 56\penalty0 (3):\penalty0 607--617, 2008.

\bibitem[Gilks et~al.(1995)Gilks, Richardson, and
  Spiegelhalter]{gilks1995markov}
W.~R. Gilks, S.~Richardson, and D.~Spiegelhalter.
\newblock \emph{{MCMC} in practice}.
\newblock CRC press, 1995.

\bibitem[Goda(2016)]{goda2016computing}
T.~Goda.
\newblock Computing the variance of a conditional expectation via non-nested
  {Monte Carlo}.
\newblock \emph{Operations Research Letters}, 2016.

\bibitem[Goodman et~al.(2008)Goodman, Mansinghka, Roy, Bonawitz, and
  Tenenbaum]{goodman2008church}
N.~Goodman, V.~Mansinghka, D.~M. Roy, K.~Bonawitz, and J.~B. Tenenbaum.
\newblock {Church: a language for generative models}.
\newblock \emph{UAI}, 2008.

\bibitem[Gordy and Juneja(2010)]{gordy2010nested}
M.~B. Gordy and S.~Juneja.
\newblock Nested simulation in portfolio risk measurement.
\newblock \emph{Management Science}, 2010.

\bibitem[Heinrich(2001)]{heinrich2001multilevel}
S.~Heinrich.
\newblock Multilevel {M}onte {C}arlo methods.
\newblock \emph{LSSC}, 1:\penalty0 58--67, 2001.

\bibitem[Hoffman and Blei(2015)]{hoffman2015stochastic}
M.~Hoffman and D.~Blei.
\newblock Stochastic structured variational inference.
\newblock In \emph{AISTATS}, 2015.

\bibitem[Hong and Juneja(2009)]{hong2009estimating}
L.~J. Hong and S.~Juneja.
\newblock Estimating the mean of a non-linear function of conditional
  expectation.
\newblock In \emph{Winter Simulation Conference}, 2009.

\bibitem[Jacob et~al.(2015)Jacob, Thiery, et~al.]{jacob2015nonnegative}
P.~E. Jacob, A.~H. Thiery, et~al.
\newblock On nonnegative unbiased estimators.
\newblock \emph{The Annals of Statistics}, 43\penalty0 (2):\penalty0 769--784,
  2015.

\bibitem[Kingma and Welling(2013)]{kingma2013auto}
D.~P. Kingma and M.~Welling.
\newblock Auto-encoding variational {B}ayes.
\newblock \emph{arXiv preprint arXiv:1312.6114}, 2013.

\bibitem[Le et~al.(2016)Le, Baydin, and Wood]{le2016nested}
T.~A. Le, A.~G. Baydin, and F.~Wood.
\newblock Nested compiled inference for hierarchical reinforcement learning.
\newblock In \emph{NIPS Workshop on Bayesian Deep Learning}, 2016.

\bibitem[Le et~al.(2018)Le, Igl, Rainforth, Jin, and Wood]{le2017auto}
T.~A. Le, M.~Igl, T.~Rainforth, T.~Jin, and F.~Wood.
\newblock Auto-encoding sequential {Monte Carlo}.
\newblock In \emph{ICLR}, 2018.

\bibitem[Lemaire et~al.(2017)Lemaire, Pag{\`e}s, et~al.]{lemaire2017multilevel}
V.~Lemaire, G.~Pag{\`e}s, et~al.
\newblock Multilevel {R}ichardson--{R}omberg extrapolation.
\newblock \emph{Bernoulli}, 23\penalty0 (4A):\penalty0 2643--2692, 2017.

\bibitem[Liang(2010)]{liang2010double}
F.~Liang.
\newblock A double {Metropolis--Hastings} sampler for spatial models with
  intractable normalizing constants.
\newblock \emph{Journal of Statistical Computation and Simulation}, 80\penalty0
  (9):\penalty0 1007--1022, 2010.

\bibitem[Longstaff and Schwartz(2001)]{longstaff2001valuing}
F.~A. Longstaff and E.~S. Schwartz.
\newblock Valuing {A}merican options by simulation: a simple least-squares
  approach.
\newblock \emph{Review of Financial studies}, 2001.

\bibitem[Lyne et~al.(2015)Lyne, Girolami, Atchade, Strathmann, Simpson,
  et~al.]{lyne2015russian}
A.-M. Lyne, M.~Girolami, Y.~Atchade, H.~Strathmann, D.~Simpson, et~al.
\newblock {On Russian roulette estimates for Bayesian inference with
  doubly-intractable likelihoods}.
\newblock \emph{Statistical science}, 30\penalty0 (4):\penalty0 443--467, 2015.

\bibitem[Maddison et~al.(2017)Maddison, Lawson, Tucker, Heess, Norouzi, Mnih,
  Doucet, and Teh]{maddison2017filtering}
C.~J. Maddison, D.~Lawson, G.~Tucker, N.~Heess, M.~Norouzi, A.~Mnih, A.~Doucet,
  and Y.~W. Teh.
\newblock Filtering variational objectives.
\newblock \emph{arXiv preprint arXiv:1705.09279}, 2017.

\bibitem[Mantadelis and Janssens(2011)]{mantadelis2011nesting}
T.~Mantadelis and G.~Janssens.
\newblock Nesting probabilistic inference.
\newblock \emph{arXiv preprint arXiv:1112.3785}, 2011.

\bibitem[Medina-Aguayo et~al.(2016)Medina-Aguayo, Lee, and
  Roberts]{medina2016stability}
F.~J. Medina-Aguayo, A.~Lee, and G.~O. Roberts.
\newblock Stability of noisy {M}etropolis--{H}astings.
\newblock \emph{Statistics and Computing}, 26\penalty0 (6):\penalty0
  1187--1211, 2016.

\bibitem[Murray et~al.(2006)Murray, Ghahramani, and MacKay]{murray2006mcmc}
I.~Murray, Z.~Ghahramani, and D.~J. MacKay.
\newblock {MCMC} for doubly-intractable distributions.
\newblock In \emph{Proceedings of the Twenty-Second Conference on Uncertainty
  in Artificial Intelligence}, pages 359--366. AUAI Press, 2006.

\bibitem[Myung et~al.(2013)Myung, Cavagnaro, and Pitt]{myung2013tutorial}
J.~I. Myung, D.~R. Cavagnaro, and M.~A. Pitt.
\newblock A tutorial on adaptive design optimization.
\newblock \emph{Journal of mathematical psychology}, 57\penalty0 (3):\penalty0
  53--67, 2013.

\bibitem[Naesseth et~al.(2015)Naesseth, Lindsten, and
  Sch{\"o}n]{andersson2015nested}
C.~A. Naesseth, F.~Lindsten, and T.~Sch{\"o}n.
\newblock Nested sequential {Monte Carlo} methods.
\newblock In \emph{ICML}, 2015.

\bibitem[Naesseth et~al.(2017)Naesseth, Linderman, Ranganath, and
  Blei]{naesseth2017variational}
C.~A. Naesseth, S.~W. Linderman, R.~Ranganath, and D.~M. Blei.
\newblock Variational sequential {Monte Carlo}.
\newblock \emph{arXiv preprint arXiv:1705.11140}, 2017.

\bibitem[O'Hagan(1991)]{o1991bayes}
A.~O'Hagan.
\newblock Bayes--{H}ermite quadrature.
\newblock \emph{Journal of statistical planning and inference}, 1991.

\bibitem[Ouyang et~al.(2016)Ouyang, Tessler, Ly, and
  Goodman]{ouyang2016practical}
L.~Ouyang, M.~H. Tessler, D.~Ly, and N.~Goodman.
\newblock Practical optimal experiment design with probabilistic programs.
\newblock \emph{arXiv preprint arXiv:1608.05046}, 2016.

\bibitem[Pages(2007)]{pages2007multi}
G.~Pages.
\newblock Multi-step {R}ichardson--{R}omberg extrapolation: remarks on variance
  control and complexity.
\newblock \emph{Monte Carlo Methods and Applications}, 13\penalty0
  (1):\penalty0 37, 2007.

\bibitem[Rainforth(2017)]{rainforth2017thesis}
T.~Rainforth.
\newblock \emph{Automating Inference, Learning, and Design using Probabilistic
  Programming}.
\newblock PhD thesis, 2017.

\bibitem[Rainforth(2018)]{rainforth2017nestpp}
T.~Rainforth.
\newblock Nesting probabilistic programs.
\newblock In \emph{UAI}, 2018.

\bibitem[Rainforth et~al.(2016)Rainforth, Le, van~de Meent, Osborne, and
  Wood]{rainforth2016nips}
T.~Rainforth, T.~A. Le, J.-W. van~de Meent, M.~A. Osborne, and F.~Wood.
\newblock {B}ayesian optimization for probabilistic programs.
\newblock In \emph{NIPS}, 2016.

\bibitem[Rainforth et~al.(2018)Rainforth, Kosiorek, Le, Maddison, Igl, Wood,
  and Teh]{rainforth2018tighter}
T.~Rainforth, A.~R. Kosiorek, T.~A. Le, C.~J. Maddison, M.~Igl, F.~Wood, and
  Y.~W. Teh.
\newblock Tighter variational bounds are not necessarily better.
\newblock In \emph{ICML}, 2018.

\bibitem[Rudolf and Schweizer(2015)]{rudolf2015perturbation}
D.~Rudolf and N.~Schweizer.
\newblock Perturbation theory for {M}arkov chains via {W}asserstein distance.
\newblock \emph{arXiv preprint arXiv:1503.04123}, 2015.

\bibitem[Sebastiani and Wynn(2000)]{sebastiani2000maximum}
P.~Sebastiani and H.~P. Wynn.
\newblock Maximum entropy sampling and optimal {Bayesian} experimental design.
\newblock \emph{Journal of the Royal Statistical Society: Series B (Statistical
  Methodology)}, 62\penalty0 (1):\penalty0 145--157, 2000.

\bibitem[Stuhlm{\"u}ller and Goodman(2012)]{stuhlmuller2012dynamic}
A.~Stuhlm{\"u}ller and N.~D. Goodman.
\newblock A dynamic programming algorithm for inference in recursive
  probabilistic programs.
\newblock In \emph{Second Statistical Relational AI workshop at UAI 2012
  (StaRAI-12)}, 2012.

\bibitem[Stuhlm{\"u}ller and Goodman(2014)]{stuhlmuller2014reasoning}
A.~Stuhlm{\"u}ller and N.~D. Goodman.
\newblock Reasoning about reasoning by nested conditioning: Modeling theory of
  mind with probabilistic programs.
\newblock \emph{Cognitive Systems Research}, 28:\penalty0 80--99, 2014.

\bibitem[Vincent(2016)]{vincent2016hierarchical}
B.~T. Vincent.
\newblock Hierarchical {Bayesian} estimation and hypothesis testing for delay
  discounting tasks.
\newblock \emph{Behavior research methods}, 48\penalty0 (4):\penalty0
  1608--1620, 2016.

\bibitem[Vincent and Rainforth(2017)]{vincent2017darc}
B.~T. Vincent and T.~Rainforth.
\newblock The {DARC} toolbox: automated, flexible, and efficient delayed and
  risky choice experiments using {B}ayesian adaptive design.
\newblock \emph{PsyArXiv}, 2017.

\bibitem[Wood et~al.(2014)Wood, van~de Meent, and Mansinghka]{wood2014new}
F.~Wood, J.~W. van~de Meent, and V.~Mansinghka.
\newblock A new approach to probabilistic programming inference.
\newblock In \emph{AISTATS}, pages 2--46, 2014.

\end{thebibliography}

\end{document}